\newcommand*{\Map}[1]{\mathfrak{#1}} 
\newcommand*{\tmap}{\Map{t}} 
\newcommand*{\emap}{\Map{e}} 
\newcommand*{\umap}{\Map{u}} 
\newcommand*{\bmap}{\Map{b}} 
\newcommand*{\bt}[1][]{(\tmap#1,\sigma#1)}
\newcommand*{\btsq}[1][]{[\tmap#1,\sigma#1]}
\newcommand*{\bts}{\mathcal{BT}}
\newcommand*{\law}{\mathcal{L}}
\renewcommand*{\cv}[2][d\1{loc}]{\xrightarrow[\raisebox{0.1em}{$\scriptstyle #2\to\infty$}]{#1}}
\newcommand*{\faces}{\mathcal{F}}
\newcommand*{\medges}{\mathcal{E}}
\newcommand*{\steps}{\mathcal{S}}
\newcommand*{\step}{\mathtt{s}}
\newcommand*{\Step}{\mathtt{S}}
\newcommand*{\CC}{\mathtt{C}}
\newcommand*{\LL}{\mathtt{L}}		\newcommand*{\RR}{\mathtt{R}}
\newcommand*{\cp}{\CC^\+}			\newcommand*{\cm}{\CC^\<}
\newcommand*{\lp}[1][k]{\LL_{#1}}	
\newcommand*{\rp}[1][k]{\RR_{#1}}	
\newcommand*{\iroot}{\mathcal I}
\newcommandtwoopt*{\weight}[4][][]{\nu^{\abs{\medges(#3#4)}#1}\, t^{\abs{\faces(#3)}#2}}
\newcommandtwoopt*{\weightc}[4][][]{\nu_c^{\abs{\medges(#3#4)}#1}\, t_c^{\abs{\faces(#3)}#2}}
\newcommand*{\zz}[2][p,q]{\,\frac{z_{#2}}{z_{#1}}}
\newcommand*{\zzz}[3][p,q]{\,\frac{z_{#2}\,z_{#3}}{z_{#1}}}
\newcommand*{\zzh}[2][p,q]{\,\frac{z_{#2}}{z_{#1}}}
\newcommand*{\zzzh}[3][p,q]{\,\frac{z_{#2}\,z_{#3}}{z_{#1}}}
\newcommand{\pqq}{{p,(q_1,q_2)}}
\newcommand{\py}[1][p]{_{#1}}
\newcommand{\yy}{_{\infty}}
\newcommand{\pqy}{_{p,q}}
\newcommand*{\+}{\ensuremath{\text{\rm\texttt{+}}}}
\newcommand*{\<}{\ensuremath{\text{\rm\texttt{-}}}}
\newcommand{\C}{\mathbb{C}}
\newcommand{\N}{\mathbb{N}}
\theoremstyle{plain}
\newcommand{\emapo}{\emap^\circ}
\newcommand{\frontier}{\partial\emap}
\newcommand*{\algo}{\mathcal{A}}
\newcommand*{\jj}{{\text{\normalsize\texttt{\textpm}}}}
\newcommand{\nseq}[2][0]{(#2_n)_{n\ge #1}}
\setlist[enumerate,1]{label=(\roman*),topsep=6pt,itemsep=0pt}
\setlist[itemize,1]{topsep=4pt,itemsep=0pt}
\newcommand{\refp}[2]{\hyperref[#2]{\ref*{#2}(#1)}}
\newcommand{\limsupp}{\limsup_{p\to\infty}}
\newcommand{\limsuppq}{\limsup_{p,q\to\infty}}
\title{Interfaces in the vertex-decorated Ising model on random triangulations of the disk}
\author{Joonas Turunen\footnote{University of Helsinki, Department of Mathematics and Statistics, joonas.am.turunen@helsinki.fi} \ \footnote{University of Iceland, Science Institute, Division of Mathematics, joonas@hi.is}}
\begin{document}

\maketitle

\begin{abstract}
We provide a framework to study the interfaces imposed by Dobrushin boundary conditions on the half-plane version of the Ising model on random triangulations with spins on vertices. Using the combinatorial solution by Albenque, Ménard and Schaeffer (\cite{AMS18}) and the generating function methods introduced by Chen and Turunen (\cite{CT20}, \cite{CT21}), we show the local weak convergence of such triangulations of the disk as the perimeter tends to infinity, and study the interface imposed by the Dobrushin boundary condition. As a consequence of this analysis, we verify the heuristics of physics literature that discrete interface of the model in the high-temperature regime resembles the critical site percolation interface, as well as provide an explicit scaling limit of the interface length at the critical temperature, which coincides with results on the continuum Liouville Quantum gravity surfaces. Overall, this model exhibits simpler structure than the model with spins on faces, as well as demonstrates the robustness of the methods developed in \cite{CT20}, \cite{CT21}. 
\end{abstract}


\newcommand*{\g}{\Map{g}}
\newcommand*{\B}{\Map{b}}


\maketitle

\section{Introduction}

Recent years have seen a great number of works where a peeling process is used to study the geometry of random planar lattices. The idea of peeling was first introduced by Watabiki in \cite{Wat95} and later made rigorous by Angel in \cite{Ang02} in the context of pure gravity. It has proven out to be a great tool to explore in particular random planar maps of the half-plane topology, to study the distances on maps, and to study the interfaces imposed by statistical physics models on them. For example, peeling techniques have been used in the context of percolation (\cite{Ang02}, \cite{Ang05}, \cite{ACpercopeel}, \cite{Rich15}), Eden model (\cite{CLGpeeling}, \cite{MS13}) and the $O(n)$ model (\cite{BudOn}).

More recently, the peeling process has also been applied to study the random triangulations of the disk coupled to the Ising model on faces by Chen and the author in the works \cite{CT20} and \cite{CT21}. There, the authors have developed a machinery based on analytic combinatorics and rational parametrizations to understand the asymptotic behavior of the partition functions, and constructed local limits using the infinite boundary limits of the perimeter processes associated with the peeling process. Meanwhile, Albenque, M\'enard and Schaeffer studied a similar model, with the exception that they considered triangulations with spins on the vertices and showed the local convergence for the full-plane topology (\cite{AMS18}). In the combinatorial part, they develop further the method of invariants introduced by Bernardi and Bousquet-Mélou in \cite{BBM11}, where one intermediate step is also the combinatorial decomposition of a triangulation with a Dobrushin boundary condition, which can also be viewed as the combinatorial definition of the one-step peeling operation. In this work, we use the combinatorial results of \cite{AMS18} to retrieve the results of \cite{CT20} and \cite{CT21} for the model with spins on the vertices. Various proofs are mutatis mutandis of the proofs found in our previous works, and thus many of the details are omitted. That said, to some extent this work can also be viewed as an expository work of the previous articles \cite{CT20}, \cite{CT21}, since it wraps up their results to consider an alternative Ising model on random triangulations. 

The advantages of the model with spins on vertices lie in the simplicity of the peeling process and the symmetry with respect to the spins, as well as in the fact that the interface is a well-defined simple curve, unlike in the case of spins on the faces. The self-duality of this model under the spin-flip also yields a critical percolation like behavior in the high-temperature regime, which differs drastically from the corresponding behaviour for the spins on faces. There, the behaviour is rather reminiscent of the subcritical face percolation. In the critical temperature, the simplicity of the interface allows us to deduce the explicit scaling limit of the interface length.

\subsection{Definition of the model}

\paragraph{Planar maps.}
A finite planar map is a proper embedding of a finite connected graph into  the sphere $\mathbb{S}^2$, viewed up to orientation-preserving homeomorphisms of $\mathbb{S}^2$.
Loops and multiple edges are allowed in the graph.
All planar maps in this work are rooted, i.e.~equipped with a distinguished oriented edge called the root edge.
In a rooted planar map $\g$, the face incident to the right of the root edge is called the \emph{external face}, and all other faces are \emph{internal faces}.
The \emph{boundary length} (or \emph{perimeter}) of $\g$ is the degree of its external face.
A (rooted) \emph{triangulation with boundary} is a rooted planar map whose internal faces are all triangles.
When the external face has no pinch-points (i.e.~the boundary is a simple path), we call it a \emph{triangulation of the $p$-gon}, where $p$ is its perimeter.
We denote by $E(\g)$ the set of edges and by $V(\g)$ the set of vertices of $\g$.
\paragraph{Ising model on the vertices of a map.} Following \cite{AMS18}, the partition function of the Ising model (without an external magnetic field) on the vertices of a map $\g$ is defined by
\begin{equation}	\label{eq:Ising}
	\mathcal{Z}(\g,\nu) = \sum_{\sigma:V(\g)\to \{\pm 1\}} \nu^{m(\g,\sigma)}
\end{equation}
where $\nu>0$ is the coupling constant, $\sigma$ represent the spin configuration and $m(\g,\sigma)$ is the number of monochromatic edges (edges which share the same spin in the endpoints).
Moreover, we consider the Dobrushin boundary conditions, that is, the spins on the boundary vertices are fixed by a sequence of the form $+^p-^q$ ($p$ $+$'s followed by $q$ $-$'s) in the counter-clockwise order from the root. Throughout this work, we call the root edge $\rho$ and the other bichromatic boundary edge opposite to the root $\rho^\dagger$.

\begin{figure}
\begin{center}
\includegraphics[scale=0.7]{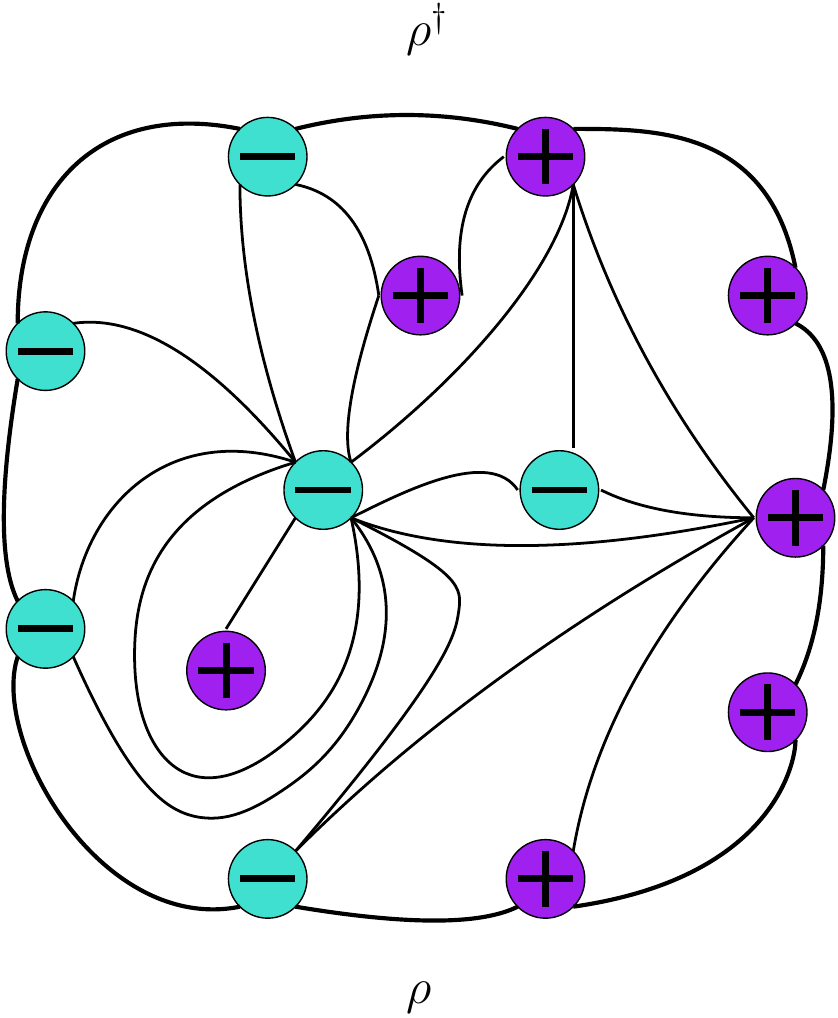}
\end{center}\vspace{-1em}
\caption{An example of a vertex-decorated Ising triangulation of the $p+q$-gon with Dobrushin boundary condition $(p,q)=(5,4)$, where $\rho$ denotes the root edge and $\rho^\dagger$ the other extremity. The triangulation has $27$ edges and $17$ monochromatic edges.}
\end{figure}

\paragraph{Ising triangulations.}

Let $\mathcal{G}_{p,q}$ be the set of rooted planar triangulations of the $(p\!+\!q)$-gon, endowed with the Dobrushin boundary condition of type $+^p-^q$, which we will call \emph{triangulations of the $(p,q)$-gon}.
From a combinatorial point of view, if $\sigma$ is a spin configuration on $V(\g)$, then the pair $(\g,\sigma)$ is just a vertex-bicolored map.
Let us denote by $\mathcal{G}^\sigma_{p,q}$ the set of vertex-bicolored triangulations of the $(p,q)$-gon.

For $\nu>0$, let
\begin{equation}\label{eq:partitionfunction}
	z_{p,q}(t,\nu)  = \sum_{\g\in \mathcal{G}_{p,q}} \mathcal{Z}(\g,\nu)\, t^{\abs{E(\g)}}
					= \sum_{(\g,\sigma)\in \mathcal{G}^\sigma_{p,q}} \nu^{m(\g,\sigma)} t^{\abs{E(\g)}}.
\end{equation}This is called the \emph{partition function} of the Ising-triangulation of the $(p,q)$-gon. In the end, we are interested in the asymptotics of this quantity as $p,q\to\infty$, giving information about large Ising-triangulations with a boundary.
In order to encode the partition functions, define first the generating series of Ising-triangulations with a monochromatic boundary (also known as \emph{disk amplitude}), by
\begin{equation}\label{eq: Z0}
Z_0(u;t,\nu):=\sum_{p\ge 1} z_{p,0}(t,\nu)\, u^p.
\end{equation}
Then, define the generating series of $(z_{p,q})_{p,q\ge 1}$:
\begin{equation}\label{eq: Z}
	Z(u,v;t,\nu) = \sum_{p,q\ge 1} z_{p,q}(t,\nu)\, u^p v^q.
\end{equation} Observe that $Z_0(u)$, unlike in the case of spins on the faces, cannot be recovered from $Z(u,v)$ via coefficient extraction. Since this might appear to look like a problem when applying the singularity analysis methods of \cite{CT20} and \cite{CT21}, we will also consider the generated function
\begin{equation}\label{eq:completedGF}
\mathring{Z}(u,v;t,\nu):=\sum_{p+q\ge 1} z_{p,q}(t,\nu)\, u^p v^q=Z(u,v;t,\nu)+Z_0(u;t,\nu)+Z_0(v;t,\nu).
\end{equation} Let also $Z_q(u;t,\nu) = [v^q]Z(u,v;t,\nu)$ for $q\geq 1$. 

For each $\nu>0$, let $t_c(\nu)$ be the radius of convergence of the series $t\mapsto z_{1,0}(t,\nu)$.
In \cite[Theorem 6]{AMS18}, it was shown that $t_c(\nu)$ is the unique dominant singularity of $t\mapsto z_{p,q}(t,\nu)$ for all $p\ge 0$, $q\ge 0$, such that $p+q\geq 1$. Moreover, it was shown that $z_{p,q}(t,\nu)<\infty$ if $|t|\le t_c(\nu)$.
In the sequel, we will only consider parameters $(t,\nu)$ on the \emph{critical line} $t=t_c(\nu)$. Doing so, we may omit $t$ from the list of variables, and write $z_{p,q}(\nu) \equiv z_{p,q}(t_c(\nu),\nu)$, $Z(u,v;\nu) \equiv Z(u,v;t_c(\nu),\nu)$, and so on. The finiteness of $z_{p,q}(\nu)$ allows us to consider the Boltzmann distribution, which is a special case of the one in \cite[Definition 22]{AMS18}, defined for more general boundary conditions.

\begin{defi}\label{def:BIT}
The Boltzmann Ising-triangulation of the $(p,q)$-gon is the law $\prob_{p,q}^\nu$ defined by
\begin{equation}
	\prob_{p,q}^\nu(\tmap,\sigma) = \frac{ \nu^{m(\tmap,\sigma)} t_c(\nu)^{\abs{E(\tmap)}} }{z_{p,q}(\nu)}
\end{equation}
for all $(\tmap,\sigma)\in \mathcal{G}^\sigma_{p,q}$. 
\end{defi} 

The local distance between Ising-decorated triangulations (or maps in general) is defined by
\begin{equation*}
d\1{loc}(\bt,\bt[']) = 2^{-R}\qtq{where}
	R = \sup\Set{r\geq 0}{ \btsq_r=\btsq[']_r }
\end{equation*}
and $\btsq_r$ denotes the ball of radius $r$ around the origin in $\bt$ which takes into account the spins of the vertices. The set $\bts$ of (finite) vertex-bicolored triangulations of polygon is a metric space under $d\1{loc}$. Let $\overline{\bts}$ be its Cauchy completion. Recall that an (infinite) graph is \emph{one-ended} if the complement of any finite subgraph has exactly one infinite connected component. It is well known that a one-ended map has either zero or one face of infinite degree \cite{CurPeccot}. We call an element of $\overline{\bts}\setminus \bts$ a \emph{vertex-bicolored triangulation of the half plane} if it is one-ended and its external face has infinite degree. Namely, such a triangulation has a proper embedding in the upper half plane without accumulation points and such that the boundary coincides with the real axis. We denote by $\bts_\infty$ the set of all vertex-bicolored triangulations of the half plane.

\subsection{Main results}

We obtain the local convergence, i.e. the convergence in distribution w.r.t. the local distance, of vertex-decorated Ising-triangulations as the perimeter of the disk tends to infinity, for high temperatures and at the critical point:

\begin{theorem}[Local limits of Boltzmann Ising-triangulations]~\label{thm:cv}\\
For every $1<\nu\leq\nu_c=1+1/\sqrt{7},$ there exists a probability distribution $\prob_\infty^\nu$, such that for all $0<\lambda_{\min}\leq 1\leq \lambda_{\max}<\infty$,
\begin{equation*}
\prob_{p,q}^\nu\ \cv{p,q}\ \prob_\infty^\nu\qquad\text{while}\quad\frac{q}{p}\in[\lambda_{\min},\lambda_{\max}]
\end{equation*}
locally in distribution. We also have 
\begin{equation*}
\prob_{p,q}^{\nu_c}\ \cv{q}\ \prob\py^{\nu_c}\ \cv{p}\ \prob_\infty^{\nu_c}.
\end{equation*}
 The laws $\prob_\infty^\nu$ and $\prob\py^{\nu_c}$ are supported on $\bts_\infty$.
\end{theorem}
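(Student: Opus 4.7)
The plan is to encode the law $\prob_{p,q}^\nu$ by a step-by-step peeling exploration of the bichromatic interface and to read off the local limit from the asymptotic behaviour of the one-step transition probabilities. The first step is to set up the peeling at the bichromatic vertex separating the $+$ and $-$ boundary arcs: the combinatorial decomposition of \cite{AMS18} shows that revealing the triangle adjacent to this vertex produces finitely many event types, namely an interface continuation on the $+$ or on the $-$ side, or a swallowing event that cuts off a monochromatic component of some perimeter $k$ together with a residual Dobrushin triangulation. The weight of each such event under $\prob_{p,q}^\nu$ factorizes as the Boltzmann weight of the revealed triangle times a ratio of partition functions of the residual pieces against $z_{p,q}(\nu)$; typically of the form $z_{p',q'}(\nu)/z_{p,q}(\nu)$, or $z_{k,0}(\nu)\,z_{p-k-?,q}(\nu)/z_{p,q}(\nu)$ in the swallowing case, and summing over all events returns $1$ by the very definition of $z_{p,q}$.

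The second step is the asymptotics of these ratios as $p,q\to\infty$. The rational parametrization of \cite{AMS18} gives closed-form expressions for $Z_0(u;\nu)$ and, as in \cite{CT20,CT21}, should yield an algebraic/rational expression for the completed series $\mathring Z(u,v;\nu)$ introduced in \eqref{eq:completedGF}; the reason for working with $\mathring Z$ rather than $Z$ is precisely the remark made after \eqref{eq:completedGF} that $Z_0(u)$ is not recoverable from $Z(u,v)$ by coefficient extraction. Two-variable singularity analysis of $\mathring Z$ along the cone $q/p\in[\lambda_{\min},\lambda_{\max}]$ produces asymptotics of the form
\begin{equation*}
z_{p,q}(\nu)\ \sim\ C(\nu,q/p)\ p^{-\alpha(\nu)}\,r_+(\nu)^{-p}\,r_-(\nu)^{-q},
\end{equation*}
where $\alpha(\nu)$ takes the generic square-root exponent in the high-temperature regime $1<\nu<\nu_c$ and the Liouville-quantum-gravity exponent at $\nu=\nu_c$, the latter coming from a more degenerate singularity in the parametrization. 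Plugging these asymptotics into the peeling-step ratios shows that each one-step transition probability admits an explicit limit, which no longer depends on the current perimeters $(p,q)$, and hence defines a Markov peeling kernel on the half-planar exploration.

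The third step constructs $\prob_\infty^\nu$ as the law of the infinite peeling exploration driven by this limiting kernel, and turns convergence of peeling into convergence in $d\1{loc}$. A Borel--Cantelli / random-walk style argument on the limiting perimeter process shows that the exploration almost surely discovers every vertex of the root ball in finitely many steps, so the built map is one-ended and has an external face of infinite degree, hence lies in $\bts_\infty$. For any fixed $r$, the ball $\btsq_r$ is measurable w.r.t.\ the first $\tau_r$ peeling steps for an almost surely finite stopping time $\tau_r$, and the convergence of finite-dimensional distributions of the peeling under $\prob_{p,q}^\nu$ to those under $\prob_\infty^\nu$ (immediate from the uniform convergence of the one-step kernel on the cone $q/p\in[\lambda_{\min},\lambda_{\max}]$) translates into $\btsq_r$-convergence in distribution, which is exactly the local convergence claimed. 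The iterated limit at $\nu=\nu_c$ is handled by the same mechanism: first let $q\to\infty$ with $p$ fixed to obtain the peeling kernel of a half-plane map with a finite $+$-boundary of length $p$, defining $\prob\py^{\nu_c}$, and then let $p\to\infty$ to collapse this boundary using the corresponding one-variable asymptotics of $Z_p(u;\nu_c)$.

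The main obstacle is the two-variable singularity analysis at step two, particularly the uniformity of the asymptotics of $z_{p,q}(\nu)$ over the whole cone $q/p\in[\lambda_{\min},\lambda_{\max}]$, and the delicate matching of the degenerate exponent at $\nu_c$ with the iterated-limit regime. The simpler combinatorial structure of the vertex-decorated peeling (no monochromatic-interface gluings as in the face-spin case) and the $\pm$-symmetry of the partition functions should, however, make the required estimates considerably more tractable than in \cite{CT20,CT21}, so that the bulk of the argument is indeed mutatis mutandis from those works once the singularity analysis of $\mathring Z$ is in hand.
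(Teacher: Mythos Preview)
Your outline is essentially correct for the high-temperature regime $1<\nu<\nu_c$ and for the first iterated limit $\prob_{p,q}^{\nu_c}\to\prob_p^{\nu_c}$, and it matches the paper's approach there: the limiting perimeter increments $(X_1,Y_1)$ have zero drift under $\Prob_\infty$ when $\nu<\nu_c$ (respectively, $T_0<\infty$ a.s.\ under $\Prob_p$ when $\nu=\nu_c$), so the interface peeling is recurrent along the boundary and the stopping time $\theta_r$ you call $\tau_r$ is indeed almost surely finite. The ball $\btsq_r$ is then a measurable function of finitely many peeling steps and local convergence follows from convergence of the peeling kernel.

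The gap is at $\nu=\nu_c$ for the convergences \emph{towards} $\prob_\infty^{\nu_c}$, both diagonal and $\prob_p^{\nu_c}\to\prob_\infty^{\nu_c}$. There the paper computes $\EE_\infty(X_1)=\EE_\infty(Y_1)=\mu>0$, so under the limiting kernel the interface peeling drifts to infinity and does \emph{not} cover the ball of radius $r$: edges of $\partial\emap_0$ sufficiently far to the left or right of the root are never swallowed, and your stopping time $\tau_r$ is infinite with positive probability. Consequently the step ``$\btsq_r$ is measurable w.r.t.\ the first $\tau_r$ peeling steps for an almost surely finite $\tau_r$'' fails, and the limiting peeling alone does not even define a law on $\bts_\infty$.

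The paper repairs this with additional machinery you do not mention. Under $\Prob_\infty$ the interface peeling only builds a one-ended \emph{ribbon} $\rmap[\infty]$; the law $\prob_\infty^{\nu_c}$ is then \emph{defined} by gluing to its two infinite frontier arcs two independent half-planes of law $\prob_0$ and $\overline{\prob}_0$. To prove convergence one must match this three-piece decomposition at finite $(p,q)$: the key probabilistic input is a ``single big jump'' lemma stating that $(P_n,Q_n)$ stays in a narrow tube around the drift line until a stopping time $T_m$ at which it jumps to a bounded neighbourhood of the axes in one step. Conditioning on this event, the finite triangulation splits as $(\uleft,\rmap,\uright)$ with $\rmap\to\rmap[\infty]$ in law and the two unexplored pieces having perimeters tending to $\infty$ with bounded minority-spin arcs, so that Lemma~\ref{lem:reroot}-type rerooting gives $\uleft\to\prob_0$, $\uright\to\overline{\prob}_0$ jointly. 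Local convergence then follows from a gluing lemma, together with tail bounds (Propositions~\ref{prop:scalingT}--\ref{prop:scalingT2}) controlling the complementary event. None of this is captured by ``the same mechanism'' as in the zero-drift case.
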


\begin{remark}
The above local convergence could be proven for $\nu>\nu_c$ too, with apparently more complicated rational parametrizations which require cumbersome technical details for being simplified. We do not do it here, since we expect a trivial interface structure in the form of a bottleneck. See the analogous case for the spins in the faces in \cite{CT21}. Moreover, the two-step local limit should also hold for $1<\nu<\nu_c$ (in fact for all $\nu>1$), but it is not proven here, since it requires slightly more technical lemmas and thus provides little value for this work. We also leave the treatment of the antiferromagnetic regime $0<\nu<1$ for future work.
\end{remark}

The proof of Theorem~\ref{thm:cv} spans throughout Sections~\ref{sec:combi}-\ref{sec:locallimit}, and roughly consists of the following three parts. First, Section~\ref{sec:combi} contains the asymptotic analysis of the generating functions \eqref{eq: Z0} and \eqref{eq: Z}, culminating in asymptotic formulas of the partition functions \eqref{eq:partitionfunction} as $p,q\to\infty$ similar to those of \cite[Theorem 2]{CT21}. There, the combinatorial methods of \cite{CT20} and \cite{CT21} are applied starting from the algebraic equations derived in \cite{AMS18}. Then, these formulas allow us to construct the limits of the \emph{peeling process} following the Ising interface $\iroot$. The peeling process reveals one triangle adjacent to $\iroot$ at each step, and swallows a finite number of other triangles if the revealed triangle separates the unexplored part in two pieces. Formally, it is defined as an increasing sequence of submaps $\nseq \emap$, whose precise definition is left to Section~\ref{sec:peeling}. Alternatively, it can also be defined via a sequence of \emph{peeling events} $\nseq[1] \Step$ taking values in a countable set of symbols, where $\Step_n$ indicates the position of the triangle revealed at time $n$ relative to the explored map $\emap_{n-1}$. See Section~\ref{sec:peeling} for a detailed account. The central feature is that the law of the sequence $\nseq[1] \Step$ can be written down easily and one can perform explicit computations with it. More notably, it possesses limits in distribution as $p,q\to\infty$, which in turn can be used in the construction of the peeling process in the limit. The limits of the peeling process are then used to construct the local limits and to show the local convergences in Section~\ref{sec:locallimit}.

Assume now $\nu=\nu_c$. Let $\eta_{p,q}$ be the length of the interface between the edges $\rho$ and $\rho^\dagger$ in an Ising-triangulation $\bt$ sampled from $\prob_{p,q}\equiv\prob_{p,q}^{\nu_c}$. Similarly, let $\eta_p$ be the length of the interface in $\bt$ sampled from $\prob_p\equiv\prob_p^{\nu_c}$. The main theorem of this work comprises the following scaling limits of the interface length:

\begin{theorem}\label{thm:scaling}
Let $\nu=\nu_c$, and $\mu:=\frac{11-5\sqrt{7}}{12\sqrt{7}-48}>0$. Then
\begin{equation*}
\forall t>0\,,\qquad	\lim_{p,q\to\infty} \prob_{p,q}\m({\mu \eta_{p,q}>tp} = \int_t^\infty(1+s)^{-7/3}(\lambda+s)^{-7/3}ds
\end{equation*}where the limit is taken such that $q/p\to\lambda\in (0,\infty)$. In particular, for $\lambda=1$, \begin{equation*}
\lim_{p,q\to\infty} \prob_{p,q}\m({\eta_{p,q}>tp} =(1+\mu t)^{-11/3}.
\end{equation*} Moreover, we have 
\begin{equation*}
\forall t>0\,,\qquad	\lim_{p\to\infty} \prob\py\m({\eta_p>tp} = (1+\mu t)^{-4/3}.
\end{equation*}
\end{theorem}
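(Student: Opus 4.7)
The plan is to identify $\eta_{p,q}$ with the lifetime of the peeling process along $\iroot$, then to extract its scaling limit at $\nu_c$ from the sharp asymptotic formulas for $z_{p,q}(\nu_c)$ derived in Section~\ref{sec:combi}. The strategy closely parallels that of the corresponding result for spins on faces in \cite[Theorem~3]{CT21}, with the combinatorial input now coming from \cite{AMS18} in place of \cite{BBM11}. First, each peeling step along the interface consumes exactly one bichromatic edge of $\iroot$ and reveals one triangle crossed by $\iroot$, hence $\eta_{p,q}$ coincides (up to an $O(1)$) with the number of peeling steps before $\iroot$ closes at $\rho^\dagger$. Denoting by $(P_n,Q_n)_{n\ge 0}$ the perimeter process of Section~\ref{sec:peeling} under $\prob_{p,q}$, with $(P_0,Q_0)=(p,q)$, this gives $\eta_{p,q}=\inf\bpar{n\colon P_n=0\text{ or }Q_n=0}$, the first time one of the two monochromatic arcs has been entirely swallowed. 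By Definition~\ref{def:BIT}, the one-step transition kernel of $(P_n,Q_n)$ is the ``bare'' peeling kernel multiplied by the Boltzmann ratio $z_{P_{n+1},Q_{n+1}}(\nu_c)/z_{P_n,Q_n}(\nu_c)$, i.e.\ $(P_n,Q_n)$ is a Doob $h$-transform of a peeling random walk on $\Z_{\ge 0}^2$ with $h=z_{\cdot,\cdot}(\nu_c)$.

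Second, I would plug into this $h$-transform description the asymptotics of $z_{p,q}(\nu_c)$ obtained in Section~\ref{sec:combi}, which (by analogy with \cite{CT20,CT21}) should take the form $z_{p,q}(\nu_c)\sim C\, p^{-7/3} q^{-7/3}\,\Phi(q/p)$ uniformly in $q/p\in[\lambda_{\min},\lambda_{\max}]$, with $\Phi$ an explicit function and the exponent $7/3$ arising from the singularity type of $\mathring Z(u,v;\nu_c)$ along the critical slice $t=t_c(\nu_c)$. A discrete Riemann-sum argument applied to
\begin{equation*}
\prob_{p,q}\bpar{\eta_{p,q}=n+1}=\sum_{P,Q\ge 1}\prob_{p,q}\bpar{(P_n,Q_n)=(P,Q)}\cdot w_{\text{term}}(P,Q),
\end{equation*}
where $w_{\text{term}}(P,Q)$ is the combined weight of the peeling events that terminate the interface at perimeters $(P,Q)$, together with the substitution $n=sp$, $q=\lambda p$, should yield that the rescaled density of $\mu\eta_{p,q}/p$ converges (up to absorption of the normalizing constant into $\mu>0$) to a constant multiple of $(1+s)^{-7/3}(\lambda+s)^{-7/3}\,ds$; integrating from $t$ to $\infty$ produces the general formula. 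At $\lambda=1$ the integrand reduces to $(1+s)^{-14/3}$ whose primitive is proportional to $(1+\mu t)^{-11/3}$, matching the second claim.

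Third, the two-step limit under $\prob\py^{\nu_c}$ is handled by sending $q\to\infty$ first: the $-$ arc becomes infinite, reducing the peeling perimeter to the single variable $P_n$, and the analogous Riemann-sum argument with the one-sided asymptotics $z_{p,\infty}(\nu_c)\sim C'\, p^{-7/3}$ gives a density proportional to $(1+s)^{-7/3}$, whose primitive integrates to $(1+\mu t)^{-4/3}$. The main technical hurdle throughout is twofold: establishing the uniformity of the asymptotics $z_{p,q}(\nu_c)\sim C\,p^{-7/3}q^{-7/3}\Phi(q/p)$ in the aspect ratio $q/p$, and obtaining sharp enough tail control on the one-step peeling increments to justify the passage from the discrete sums above to their Riemann-integral limits. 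These are exactly the points where the algebraic equations of \cite{AMS18} must be combined with the singularity-analysis toolkit developed in \cite{CT20,CT21}.
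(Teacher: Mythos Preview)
Your proposal has a genuine gap: it omits the central probabilistic mechanism of the paper, namely the \emph{single big jump} phenomenon (Lemma~\ref{lem:one jump diag}). The paper does \emph{not} obtain the limiting law by a Riemann-sum over all states $(P,Q)$ at time $n$ --- controlling $\prob_{p,q}((P_n,Q_n)=(P,Q))$ uniformly in $n,P,Q$ would amount to a two-dimensional local limit theorem for a heavy-tailed $h$-transformed walk, and no such statement is available. Instead, the paper proves that with high probability the perimeter process stays inside a narrow tube around the drift line $(p+\mu n,q+\mu n)$ until the first time $T_m$ that $P_n\wedge Q_n\le m$, and that this exit happens in a single large jump. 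This reduces the scaling limit of $T_m$ to a one-step ``hazard rate'' computation: $p\cdot\Prob_{p,q}(P_1\wedge Q_1\le m)\to c_\infty(\lambda)$, from which the survival function follows by the product/exponential formula in Proposition~\ref{prop:scalingT}. Without the big-jump input there is no reason your Riemann-sum heuristic should converge to the claimed integrand.

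Two further points. First, your asymptotic $z_{p,q}(\nu_c)\sim C\,p^{-7/3}q^{-7/3}\Phi(q/p)$ is wrong: by \eqref{eq:diagasymptc} the correct form is $z_{p,q}\sim C\,u_c^{-(p+q)}p^{-11/3}c(q/p)$, and in the one-sided regime $a_p\sim C'\,u_c^{-p}p^{-4/3}$ (not $p^{-7/3}$); the exponents $-11/3$ and $-4/3$ are precisely what produce the powers in the statement. Second, the identification $\eta_{p,q}=\inf\{n:P_n=0\text{ or }Q_n=0\}$ is not exact. The paper does not prove $\eta=T_0$; it shows instead that $\eta_{p,q}-T_m=o(p)$ in probability for every fixed $m$ (Section~\ref{sec:scalingl}), by arguing that after the big jump the residual interface lives in a bounded-perimeter piece and is therefore tight. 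Combining this with Proposition~\ref{prop:scalingT} finishes the proof. Your outline should therefore be reorganised around (i) the big-jump lemma, (ii) the hazard-rate computation for $T_m$, and (iii) the comparison $\eta_{p,q}-T_m=o(p)$.
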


\begin{remark}
The limit law $\prob(L>t):=\int_t^\infty(1+x)^{-7/3}(\lambda+x)^{-7/3}dx$ has an interpretation in the continuum Liouville Quantum Gravity as the length of the gluing interface of two independent quantum disks of the LQG of parameter $\gamma=\sqrt{3}$ along suitable scaled boundary segments (see \cite{matingoftrees}, \cite{DKRV16}). This is explained in detail in \cite{CT21} and the references therein. Similarly, the limit law $\prob(M>t):=(1+\mu t)^{-4/3}$ is the length of the gluing interface of a quantum disk together with a thick quantum wedge along a suitable boundary segment. This is demonstrated in \cite{CT20}. To summarize, in both cases the limit laws match the predictions from the mating of trees theory of LQG, suggesting that the scaling limit of the interface coupled to a suitable conformal structure should be an SLE$(3)$ curve on a LQG surface.
\end{remark}

The proof of Theorem~\ref{thm:scaling} is based on various asymptotic properties of the \emph{perimeter processes} associated with the peeling process. In particular, the interface length $\eta_{p,q}$ or $\eta_p$ has the same scaling limit as a hitting time of the perimeter processes in a neighborhood of the origin. Since the asymptotic properties of the perimeter processes are the same as in the setting where the spins are on the faces (\cite{CT20}, \cite{CT21}) up to constants, the techniques introduced in that context will also work here. In addition, explicit computations show that the scaling exponents in Theorem~\ref{thm:scaling} indeed coincide with those found in \cite{CT20}, \cite{CT21}. The combinatorial interpretations of this phenomenon remain by now unexplained. 

\paragraph{Outline.} Section~\ref{sec:combi} analyses and collects the combinatorial results which are used in order to find the asymptotics of the partition functions. More specifically, it explains briefly the derivation of the functional equations of \cite{AMS18} used in this work, presents the rational parametrizations of the associated generating functions, contains the singularity analysis of these generating functions and states the asymptotics which can be derived from parametrizations sharing a similar singularity structure as in \cite{CT21}.

The rest of the sections focus on the probabilistic analysis of the model. More precisely: Section~\ref{sec:peeling} contains the definition and the basic properties of the peeling process as well as defines the associated perimeter processes, whose asymptotic properties are gathered in Section~\ref{sec:asympt}. The local limits are constructed in Section~\ref{sec:locallimit}, which also contains the proof of Theorem~\ref{thm:cv}. Lastly, Section~\ref{sec:scalingl} is devoted to the proof of Theorem~\ref{thm:scaling}.

\paragraph{Acknowledgements.}

This work has been benefited from the author's joint research with L. Chen, whom the author acknowledges for great influence and communication especially for the combinatorics. The author also thanks M. Albenque, K. Izyurov, A. Kupiainen and L. Ménard for insightful discussions. The work has been supported by the ERC Advanced Grant 741487 (QFPROBA) as well as by the Icelandic Research Fund, Grant Number: 185233-051.

\section{Combinatorics of the model}\label{sec:combi}

\subsection{Recursive decomposition via peeling}

\begin{figure}
\begin{center}
\includegraphics[scale=0.8]{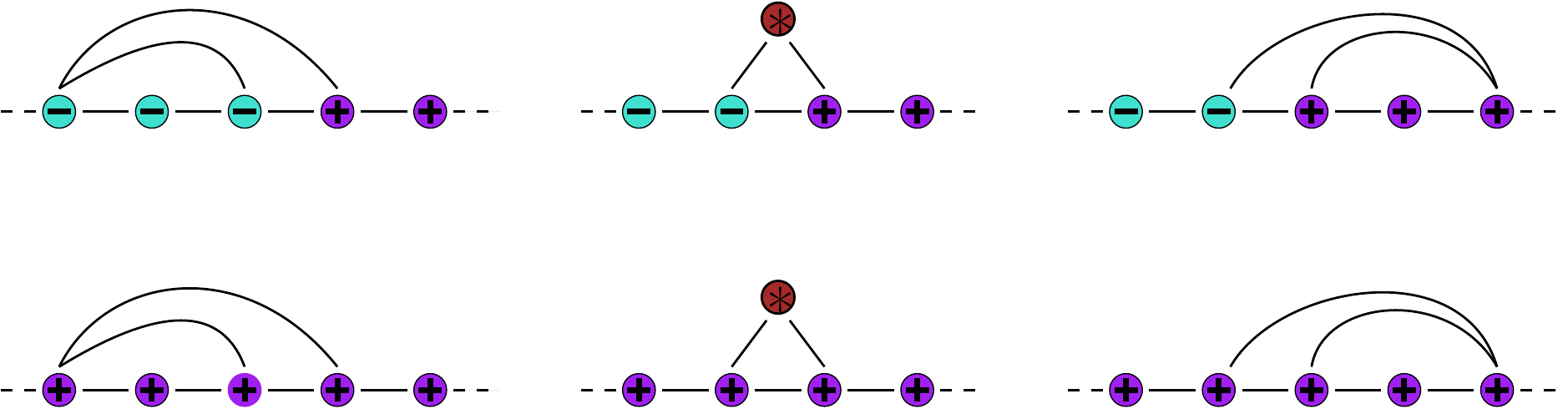}
\end{center}\vspace{-1em}
\caption{A schematic figure explaining the peeling decomposition. It lists the three different locations of the third vertex of the revealed triangle from the left to the right. In the center, the third vertex lies in the interior and has a spin $*\in\{\<,\+\}$. The first row depicts the peeling with a Dobrushin boundary and the second with a monochromatic boundary.}
\end{figure}

The peeling decomposition is considerably simpler than in the case of spins on the faces. We recall its construction from \cite{AMS18}. The decomposition is purely deterministic given an Ising-triangulation, and together with the Boltzmann distribution it also formally defines the first step in the peeling process itself. 

We consider a vertex-bicolored triangulation $\bt$ with a boundary condition $p,q\ge 1$, and delete the root edge separating two opposite boundary spins. If $\bt$ has an interior face, then the third vertex of the revealed triangle is either an interior vertex, in which chase both of the spins have equal probability, or a boundary vertex, whose spin is completely defined by its location. This gives us the recursive equation \begin{equation}\label{receq1}
z_{p,q}=t \left(z_{p+1,q}+z_{p,q+1}+\sum_{k=0}^{p-1}z_{p-k,q}z_{k+1,0}+\sum_{k=0}^{q-1}z_{p,q-k}z_{0,k+1}+\delta_{p,1}\delta_{q,1}\right)
\end{equation}where the coefficient $t$ results from deleting an edge and the last term corresponds to a triangulation consisting of a single edge with two vertices. Moreover, if $q=0$, we have the recursion \begin{equation}\label{receq2}
z_{p,0}=t\nu \left(z_{p+1,0}+z_{p,1}+\sum_{k=0}^{p-1}z_{p-k,0}z_{k+1,0}+\delta_{p,2}\right)
\end{equation}where this time we have the coefficient $\nu$ from the deletion of a monochromatic boundary edge. Now summing the recursion relations \ref{receq1} and \ref{receq2} over $p,q\ge 0$ yields a closed system of functional equations \begin{equation}\label{eq:masterFE}
\small\begin{cases}Z(u,v)=t\left(\frac{1}{u}\left(Z(u,v)-uZ_1(v)+Z(u,v)Z_0(u)\right)+\frac{1}{v}\left(Z(u,v)-vZ_1(u)+Z(u,v)Z_0(v)\right)+uv\right) \\
Z_0(u)=t\nu\left(\frac{1}{u}\left(Z_0(u)-uz_{1,0}\right)+\frac{1}{u}Z_0(u)^2+Z_1(u)+u^2\right)
\end{cases}.
\end{equation}This system is solved in \cite{AMS18} using a generalization of the kernel method. As an outcome \cite[Theorem 13]{AMS18}, an equation of one catalytic variable for the disk amplitude is obtained:
\begin{equation}\label{peq1cat}
2t^2\nu(1-\nu)\left(\frac{Z_0(u)}{u}-z_{1,0}\right)=u\cdot\text{Pol}\left(\nu,\frac{Z_0(u)}{u},z_{1,0},z_{2,0},t,u\right)
\end{equation}where $\text{Pol}(\nu,a,a_1,a_2,t,u)$ is an explicit polynomial. For us, the starting point is its equivalent form \cite[(24)]{AMS18},
\begin{align}\label{peq1cat2}
\small 0&=2\nu(\nu-1)t^2Z_0(u)^3+(\nu^2t^3u^2-(\nu(\nu+1)-2)tu+4t^2\nu(\nu-1))Z_0(u)^2+(-2t^2\nu(\nu-1)uz_{1,0}\notag\\
&+\nu(2\nu-3)t^2u^3+(2\nu^2t^3+\nu-1)u^2-(\nu(\nu+1)-2)tu+2t^2\nu(\nu-1))Z_0(u)\notag\\
&-2t^2\nu(\nu-1)u^2z_{1,0}^2+(-2u^3t^3\nu^2+(\nu(\nu+1)-2)tu^2-2\nu(\nu-1)t^2u)z_{1,0}\notag\\
&-2\nu(\nu-1)t^2u^2z_{2,0}+u^5t^3\nu^2-\nu(\nu-1)tu^4+\nu(\nu-1)t^2u^3.
\end{align}

\subsection{Solutions of the generating functions as rational parametrizations}

We make the changes of variables $t^3\to T$, $tZ_0(u)/u\to F$ and $t^2u \to U$ as well as $tz_{1,0}\to\mathcal{Z}_1$ and $t^2z_{2,0}\to\mathcal{Z}_2$, which give the following expression equivalent with \eqref{peq1cat2}: 
\begin{align}\label{peq1cat3}
0&=2\nu(\nu-1)U^2F^3+(\nu^2U^3-(\nu(\nu+1)-2)U^2+4\nu(\nu-1)TU)F^2+(-2\nu(\nu-1)\mathcal{Z}_1TU\notag\\ &+\nu(2\nu-3)U^3+2\nu^2TU^2+(\nu-1)U^2-(\nu(\nu+1)-2)TU+2\nu(\nu-1)T^2)F\notag\\
&-2\nu(\nu-1)\mathcal{Z}_1^2TU+(-2\nu^2TU^2+(\nu(\nu+1)-2)TU-2\nu(\nu-1)T^2)\mathcal{Z}_1\notag\\
&-2\nu(\nu-1)\mathcal{Z}_2TU+\nu^2U^4-\nu(\nu-1)U^3+\nu(\nu-1)TU^2.
\end{align}The works \cite{BBM11} and \cite{AMS18} provide rational parametrizations for $T$, $\mathcal{Z}_1$ and $\mathcal{Z}_2$. In particular, as in \cite[Theorem 23]{BBM11} and \cite[Theorem 7]{AMS18}, 
\begin{equation}\label{eq:paramT}
\small T=\hat{T}(\nu,S):=S\frac{\left((1+\nu)S-2\right)\left(8(\nu+1)^2S^3-(11\nu+13)(\nu+1)S^2+2(\nu+3)(2\nu+1)S-4\nu\right)}{32\nu^3(1-2S)^2}
\end{equation}where $S=S(\nu,t)$ is the unique power series in $t^3$ with constant term zero and satisfying the above equation. The lengthy parametrizations $\hat{\mathcal{Z}_1}(\nu,S)$ and $\hat{\mathcal{Z}_2}(\nu,S)$ of $\mathcal{Z}_1$ and $\mathcal{Z}_2$, respectively, are given in the Maple worksheet \cite{CAS3}.

Plugging the aforementioned parametrizations in equation \eqref{peq1cat3} yields an algebraic equation of genus zero in the variables $U$ and $F=F(U)$. This fact is easily verified with computer algebra, and we only need the direct consequence that there exists a rational parametrization $(U,F)=(\hat{U}(H;S,\nu), \hat{F}(H;S,\nu))$ in a complex variable $H$, as well as to find one which is simple enough for concrete algebraic manipulations. After this, we obtain a rational parametrization for $(u,v,Z(u,v))$, since the master functional equation system \eqref{eq:masterFE} has the solution \begin{equation}\label{eq:masterFEsol}
\begin{cases}Z(u,v)=\frac{U(u)^2U(v)^2-TU(u)U(v)(tZ_1(u)+tZ_1(v))}{TU(u)U(v)(1-F(u)-F(v))-T^2(U(u)+U(v))} \\
tZ_1(u)=\frac{1}{\nu T}\left(U(u)F(u)(1-\nu F(u))-\nu U(u)^2-\nu T(F(u)-\mathcal{Z}_1)\right)
\end{cases}.
\end{equation}Above, we denote $U(x):=t^2x$ and $F(x):=tZ_0(x)/x$, and omit the parameters $S$ and $\nu$ for simplicity. However, these parametrizations are in general complicated, and thus we want to eliminate either $S$ or $\nu$. A natural and physically justified way to do this is to restrict to the \emph{critical line} of the model (see \cite{CT21}).

\subsection{Critical line}

Recall that the critical temperature of the model is $\nu_c=1+\frac{\sqrt{7}}{7}$. Let $\nu\in (0,\nu_c]$ be fixed and $S$ be a free complex parameter. It is easy to see that the rational parametrization $(\hat{T},\hat{\mathcal{Z}}_1)$ is real and proper in $S$, thus amenable for the singularity analysis techniques of Appendix B in \cite{CT20}. 

We see that if $S=0$ then $(\hat{T},\hat{\mathcal{Z}}_1)=(0,0)$. Thus, $(\hat{T},\hat{\mathcal{Z}}_1,\nu)$ parametrizes $\mathcal{Z}_1$ locally at $T=0$. Next, we want to find the critical points of this parametrization. They are exactly the values of $S$ such that $|\hat{\mathcal{Z}}_1(\nu,S)|=\infty$ or $\frac{\partial}{\partial S}\hat{T}(\nu,S)=0$. The former cannot hold, since we notice that the condition $|\hat{\mathcal{Z}}_1(\nu,S)|=\infty$ implies $\hat{T}(\nu,S)=0$, which is not true since the radius of convergence of $\mathcal{Z}_1$ is positive (see \cite{AMS18}). The solutions of $\frac{\partial}{\partial S}\hat{T}(\nu,S)=0$ are precisely the solutions to the equation $(3S^2-3S+1)\nu+3S^2-3S=0$, giving the unique solution $\nu=\frac{3S(1-S)}{3S^2-3S+1}$. This identity defines a continuous bijection $S\mapsto\nu(S)$ from $(0,S_c]$ to $(0,\nu_c]$, where $S_c:=\inf\{S>0 | \frac{\partial}{\partial S}\hat{T}(\nu,S)=0\}=\frac{5-\sqrt{7}}{9}$ is the parameter of a dominant singularity of $\mathcal{Z}_1$. Indeed, since $\mathcal{Z}_1$ has positive coefficients as a combinatorial generating function, and furthermore $\hat{T}(\nu,\infty)=\infty$, this result follows from Proposition 21 in \cite{CT20}. 

Now for any $\nu>0$, the critical value of $T$ is $T_c(S):=\hat{T}(S, \nu(S))$, where $\nu(S)=\frac{3S(1-S)}{3S^2-3S+1}$ for $0<S\leq S_c$. We call the curve $(\nu, T)$ parametrized by the aforementioned parameter values $(S,T_c(S)$ the \emph{critical line}, and continue to work on it in the sequel. We stress that $S$ is \emph{not} a physical temperature parameter. In \cite{AMS18}, the critical line in the physical temperature parameter $\nu$ has been given the equation \begin{displaymath}27648t^6\nu^4+864\nu(\nu-1)(\nu^2-2\nu-1)t^3+(7\nu^2-14\nu-9)(\nu-2)^2=0,\end{displaymath} which has the first branch solution (see \cite{AMS18} for its derivation) \begin{displaymath}
t^3(\nu)=T(\nu)=\frac{1}{576}\frac{-9\nu^3+27\nu^2+\sqrt{3}\sqrt{-(\nu^2-2\nu-3)^3}-9\nu-9}{\nu^3}.
\end{displaymath}The corresponding expression of the critical line in $S$ is 
\begin{equation}\label{eq:criticalline}
T_c(S)=-\frac{1}{864}\frac{(6S^2-10S+3)(3S-2)^2}{S(S-1)^3}.
\end{equation}


\subsection{Singularity analysis at $\nu=\nu_c$}

\paragraph{The singular behavior of the generating functions.} At $\nu=\nu_c$, we have $S_c=\frac{5-\sqrt{7}}{9}$, giving $t_c:=(T_c(S_c))^{1/3}=\frac{25\sqrt{7}-55}{864}$. Applying the \emph{parametrization} method of the \emph{algcurves}-package of Maple to equation \eqref{peq1cat3}, we find some rational parametrization $(\hat{U}(R),\hat{F}(R))$, for which we can apply Möbius transformations in order to move singularities and simplify its expression (see \cite{CAS3}). More precisely, we find it convenient to make the following change of variable: we move the value $R_0$ parametrizing $F$ at the origin to $0$ and the parameter of the dominant singularity $R_c$ to $1$ using the transformation $R=R_0-(R_0-R_c)H$ (actually, we might just guess the values of $R_0$ and $R_c$, and check that the corresponding values in the new parametrization are indeed as required). Then, defining $\hat{u}(H):=t_c^{-2}\hat{U}(H)$ and $\hat{Z}_0(H):=t_c^{-1}\hat{u}(H)\hat{F}(H)$ gives the following parametrization:
\begin{align*}
&\hat{u}(H)=\frac{4(\sqrt{7}-4)}{(50\sqrt{7}-110)^{2/3}}H\frac{2H^2-6H+5}{H-2}
\end{align*}
\begin{align}
\hat{Z}_0(H)&=\frac{(\sqrt{7}-1)(\sqrt{7}-4)}{5(5\sqrt{7}-11)}\notag\\ &\cdot H\frac{8H^4+(4\sqrt{7}-44)H^3+(96-20\sqrt{7})H^2+(34\sqrt{7}-101)H+44-20\sqrt{7}}{(H-2)^2}\notag\end{align}
\begin{align}
\hat{Z}(H,K)&=\notag\\
&-\frac{8}{5}HK\Bigg(\frac{(K^3-5K^2+\frac{17}{2}K-5)H^3+(-5K^3+24K^2-\frac{313}{8}K+22)H^2}{(H-2)^2(K-2)^2(H+K-2)}\notag\\
&+\frac{(\frac{17}{2}K^3-\frac{313}{8}K^2+\frac{245}{4}K-33)H-5K^3+22K^2-33K+17}{(H-2)^2(K-2)^2(H+K-2)}\Bigg).
\end{align} We also define $\check{Z}(H,K):=\hat{Z}(H,K)+\hat{Z}_0(H)+\hat{Z}_0(K)$. We notice that $H=0$ parametrizes the generating functions $u\mapsto Z_0(u)$ and $u\mapsto Z(u,v)$ at the origin. Moreover, an explicit computation (\cite{CAS3}) of the derivative of $\hat{u}$ shows that the smallest positive critical point of $\hat{u}$ is $H=1$ (the other one being $5/2$), and this critical point is a double zero of $\hat{u}'(H)$. We define $u_c:=\hat{u}(1)=\frac{4(4-\sqrt{7})}{(50\sqrt{7}-110)^{2/3}}$. In the following, we drop the notion of $\nu=\nu_c$ from the arguments of the generating functions for simplicity.

\begin{lemma}\label{lem:absconvcrit}
The series $(u,v)\mapsto \mathring{Z}(u,v)=Z_0(u)+Z_0(v)+Z(u,v)$ is absolutely convergent if and only if $|u|\le u_c$ and $|v|\le u_c$.
\end{lemma}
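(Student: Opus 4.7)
The plan combines the positivity of the coefficients of $Z_0$ and $Z$ with a careful analysis of the apparent singularity of $\hat Z(H,K)$ along the curve $H+K=2$. First, since each $z_{p,q}$ is a positive combination of the positive weights $\nu_c^i t_c^j$, every Taylor coefficient of $Z_0$ and $Z$ is non-negative. Hence absolute convergence of $\mathring Z$ at $(u,v)$ is equivalent to convergence at $(|u|,|v|)$, and I may restrict attention to non-negative real $u,v$.

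For the necessity direction, it suffices to show that $u_c$ is the radius of convergence of $u\mapsto Z_0(u)$. The restriction of $\hat u$ to $[0,1]$ is strictly increasing (the smallest positive zero of $\hat u'$ being $H=1$, as already noted in the preceding paragraph), with $\hat u(0)=0$ and $\hat u(1)=u_c$. Since $(\hat u,\hat Z_0)$ is a proper real rational parametrization of $Z_0$ about the origin, the standard principle of \cite[Appendix B]{CT20} identifies $u_c$ as the dominant real singularity of $Z_0$. In particular $Z_0(u)=+\infty$ for every $u>u_c$, which forces $\mathring Z(u,v)=+\infty$ whenever $|u|>u_c$ or $|v|>u_c$.

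For the sufficiency direction, by monotonicity in non-negative arguments it is enough to prove $Z_0(u_c)<\infty$ and $Z(u_c,u_c)<\infty$. The first is immediate from direct evaluation $\hat Z_0(1)=(2\sqrt 7-3)/5$. The second is the real content: naively the denominator of $\hat Z(H,K)$ carries a factor $H+K-2$ vanishing at $(1,1)$. The crux is the polynomial identity
\begin{equation*}
P(H,\,2-H)=-(H-1)^6,
\end{equation*}
where $P(H,K)$ denotes the polynomial numerator appearing inside the large parentheses in the formula for $\hat Z$. Granting this identity, $P(H,K)+(H-1)^6$ is divisible by $H+K-2$ in $\Q[H,K]$, so $P(H,K)=-(H-1)^6+(H+K-2)\,Q(H,K)$ for some polynomial $Q$. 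On $[0,1]^2$ the elementary bound $2-H-K\ge 1-H$ yields $|(H-1)^6/(H+K-2)|\le (1-H)^5$, which vanishes as $(H,K)\to(1,1)$, so $\hat Z$ extends continuously to $(1,1)$ with a finite value. By Abel's theorem for power series with non-negative coefficients, $Z(u_c,u_c)=\lim_{u,v\uparrow u_c}Z(u,v)=\lim_{H,K\uparrow 1}\hat Z(H,K)<\infty$, completing the argument.

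The main obstacle is the verification of the polynomial identity $P(H,2-H)=-(H-1)^6$, which amounts to a routine but tedious expansion, naturally delegated to the companion worksheet \cite{CAS3}. Once this identity is in hand, all other ingredients (monotonicity, the bound $2-H-K\ge 1-H$, and Abel's theorem) are elementary.
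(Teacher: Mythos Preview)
Your proof is correct and follows essentially the same route as the paper's: identify $u_c$ as the radius of convergence of $Z_0$ via the rational parametrization (citing \cite{CT20}), check $\hat Z_0(1)<\infty$ directly, and show that the apparent singularity of the parametrization at $(H,K)=(1,1)$ is harmless so that $\mathring Z(u_c,u_c)<\infty$. The only cosmetic difference lies in this last step: the paper takes the shortcut of restricting to the diagonal $\check Z(H,H)$ and verifying (by explicit computation) that $H=1$ is a removable singularity of that univariate rational function, whereas you argue bivariately via the factorization $P(H,2-H)\propto(H-1)^6$---which is exactly the identity the paper invokes one lemma later in Lemma~\ref{lem:upolec}---combined with the elementary bound $2-H-K\ge 1-H$ on $[0,1]^2$.
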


\begin{proof}
First, we note that $\mathring{Z}(u,0)=Z_0(u)$, having the parametrization $(\hat{u}(H),\hat{Z}_0(H))$. Then, since $\hat{Z}_0(H)$ only has pole at $H=2>1$, it follows that $u_c$ is the radius of convergence of $Z_0(u)$; we use the argument of the proof of \cite[Lemma 5]{CT20}. Moreover, $\hat{Z}_0(H)$ is finite at $H=1$, which yields the convergence of $Z_0(u)$ at $u=u_c$. We also notice that $\check{Z}(H,H)$ does not have a pole for which $|H|\geq 1$, since an explicit computation shows that the only candidate $H=1$ appears to be a removable singularity. Thus, $\mathring{Z}(u_c,u_c)<\infty$, and the rest follows from the proof of \cite[Lemma 5]{CT20}.
\end{proof}
 
\begin{remark}
The reason why we consider the series $\mathring{Z}(u,v)$ instead of $Z(u,v)$ is the fact that $Z(u,v)$ also converges for $(u,0)$ with $|u|>u_c$, realizing the value zero. This is only a technicality, encapsulating all the essential generating series for the coefficients of the partition function in a single bivariate generating function. This trick also makes the singularity analysis to fall within the framework of \cite{CT20}, \cite{CT21}.
\end{remark}

For convenience and notational consistency with \cite{CT21}, we make the change of variables $(x,y)=\left(\frac{u}{u_c},\frac{v}{u_c}\right)$ and define $\tilde{Z}(x,y):=\mathring{Z}(u_c x,u_c y)$. The corresponding rational parametrization is $x=\hat{x}(H):=u_c^{-1}\hat{u}(H)$ and $\tilde{Z}(x,y)=\check{Z}(H,K)$. Now $\hat{x}$ induces a conformal bijection from a neighborhood of $H=0$ to the unit disk $\mathbb{D}$, which extends continuously to the boundary of $\mathbb{D}$. Let $\mathcal{H}_0$ be the component of the preimage of $\hat{x}^{-1}(\mathbb{D})$ containing the origin, and let $\overline{\mathcal{H}}_0$ be its closure. We readily obtain the following two lemmas highlighting the singularity structure of the rational parametrizations in $\overline{\mathcal{H}}_0$.

\begin{lemma}\label{lem:ucritc}
The value $H=1$ is the unique critical point of $\hat{x}$ in $\overline{\mathcal{H}}_0$, being of multiplicity $2$. 
\end{lemma}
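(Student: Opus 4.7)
The plan is to argue directly from the explicit rational form of $\hat{x}(H)=u_c^{-1}\hat{u}(H)$ and the definition $\overline{\mathcal{H}}_0 \subset \hat{x}^{-1}(\overline{\mathbb{D}})$. First I would differentiate: writing $\hat{u}(H)=\alpha\,H(2H^2-6H+5)/(H-2)$ with a nonzero constant $\alpha=4(\sqrt 7-4)/(50\sqrt 7-110)^{2/3}$, a short computation gives
\begin{equation*}
\hat{u}'(H)=\alpha\,\frac{2(H-1)^2(2H-5)}{(H-2)^2},
\end{equation*}
so the zeros of $\hat{x}'$ in $\mathbb{C}$ are exactly $H=1$ (double) and $H=5/2$ (simple); the value $H=2$ is a pole of $\hat{x}$, not a critical point, and is automatically excluded from $\overline{\mathcal{H}}_0$.

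Next I would rule out $H=5/2$. Plugging $H=5/2$ into $\hat{u}$ gives $\hat{u}(5/2)=\tfrac{25}{2}\alpha\cdot(-1)\cdot(-\text{?})$: more concretely, $2(5/2)^2-6(5/2)+5=5/2$ and $(5/2)-2=1/2$, so $\hat{u}(5/2)=\alpha\cdot(5/2)\cdot 5=\tfrac{25}{2}\alpha$, which yields $\hat{x}(5/2)=\tfrac{25}{2}\alpha/u_c=-25/2$. Since $|\hat{x}(5/2)|>1$, the point $5/2$ is not in $\hat{x}^{-1}(\overline{\mathbb{D}})$ and therefore not in $\overline{\mathcal{H}}_0$.

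Finally I would show that $H=1$ actually lies in $\overline{\mathcal{H}}_0$. Restricting to the real interval $[0,1]$, the factorisation of $\hat{u}'$ shows that $\hat{u}'$ keeps constant sign there (it does not vanish on $(0,1)$ and $\alpha(2H-5)/(H-2)^2$ is negative, while $\alpha<0$, hence $\hat{u}'>0$). Thus $\hat{u}$, and $\hat{x}$, is real and strictly increasing on $[0,1]$, sending $0\mapsto 0$ and $1\mapsto 1$. Consequently $[0,1)\subset \mathcal{H}_0$ and $\hat{x}(1)=1\in\partial\mathbb{D}$, so $H=1$ belongs to $\partial\mathcal{H}_0\subset\overline{\mathcal{H}}_0$; its multiplicity is $2$ because of the factor $(H-1)^2$ above.

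The only real obstacle is a bookkeeping one, namely confirming that no other critical point could sneak into $\overline{\mathcal{H}}_0$: since the only candidates in $\mathbb{C}$ are $\{1,5/2,2\}$ (together with the pole $H=2$), and the latter two have been shown to have image of modulus strictly greater than $1$, the set $\overline{\mathcal{H}}_0\subset\hat{x}^{-1}(\overline{\mathbb{D}})$ can contain no critical point other than $H=1$. This completes the argument.
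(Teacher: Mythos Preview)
Your argument is correct. The structure matches the paper's: both compute $\hat u'(H)=\alpha\cdot 2(H-1)^2(2H-5)/(H-2)^2$, read off the critical points $H=1$ (double) and $H=5/2$, and then exclude $5/2$ from $\overline{\mathcal H}_0$.

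The difference is in how $5/2$ is excluded. The paper invokes a topological fact (referring to \cite{CT21}): $\mathcal H_0$ is a topological disk symmetric about the real axis, so its intersection with the positive real line is an interval $[0,H_+]$; since $H=1$ is already the relevant boundary point, the larger real critical point $5/2$ cannot lie in $\overline{\mathcal H}_0$. You instead compute $\hat x(5/2)=-25/2$ directly and use $\overline{\mathcal H}_0\subset \hat x^{-1}(\overline{\mathbb D})$. Your route is more elementary and self-contained; the paper's route is shorter once the topological lemma is taken for granted and generalises better to the $\nu<\nu_c$ case where one does not want to evaluate $\hat x$ at each extra critical point. You also make explicit that $H=1\in\overline{\mathcal H}_0$ via the monotonicity of $\hat x$ on $[0,1]$, which the paper leaves implicit. (As a cosmetic point, you should clean up the placeholder ``$(-\text{?})$'' in the $\hat u(5/2)$ line before submitting.)
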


\begin{proof}
The zeros of $\hat{x}'(H)$ coincide with the zeros of $\hat{u}'(H)$, being exactly $H=1$ and $H=5/2$. In particular, they are both on the positive real line, and the former is a double zero. The latter cannot belong to $\overline{\mathcal{H}}_0$, since the domain $\mathcal{H}_0$ has the topology of the disk and is symmetric with respect to the $H$-axis. (See the proof of \cite[Lemma 13]{CT21} for a similar statement).
\end{proof}

\begin{lemma}\label{lem:upolec}
The value $(H,K)=(1,1)$ is the unique pole of $\check{Z}$ in $\overline{\mathcal{H}}_0^2$.
\end{lemma}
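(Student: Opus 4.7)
The strategy is to enumerate all candidate singular loci of $\check Z(H,K)=\hat Z(H,K)+\hat Z_0(H)+\hat Z_0(K)$ from the denominators in the rational parametrizations, and then to exclude every candidate except the point $(1,1)$ by geometric arguments on $\overline{\mathcal H}_0$. Inspection of the formulas shows that $\check Z$ is meromorphic on $\mathbb C^2$ with singular locus contained in $\{H=2\}\cup\{K=2\}\cup\{H+K=2\}$. Since $\hat u(H)$ has a simple pole at $H=2$ (the factor $H-2$ in the denominator against the nonvanishing numerator $2H^2-6H+5$), the value $\hat x(2)$ is infinite, so $2\notin\overline{\mathcal H}_0\subset\hat x^{-1}(\overline{\mathbb D})$; the argument for $K=2$ is identical. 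Neither locus meets $\overline{\mathcal H}_0^2$.

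The crux is to show $\{H+K=2\}\cap(\overline{\mathcal H}_0\times\overline{\mathcal H}_0)=\{(1,1)\}$. I would prove the stronger statement $\overline{\mathcal H}_0\subseteq\{\mathrm{Re}\,H\le 1\}$ with equality only at $H=1$; the desired conclusion follows at once, since $H+K=2$ combined with $\mathrm{Re}\,H,\mathrm{Re}\,K\le 1$ forces $\mathrm{Re}\,H=\mathrm{Re}\,K=1$, whence $H=K=1$. Locally, Lemma~\ref{lem:ucritc} gives $\hat x(H)-1=c(H-1)^3+O((H-1)^4)$ with $c>0$ real (the sign being determined by $\hat x([0,1])\subseteq[0,1]$ and $\hat x(1)=1$), so $\hat x^{-1}(\overline{\mathbb D})$ near $1$ splits into three closed sectors of angular opening $\pi/3$ meeting at $1$. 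The component lying in $\mathcal H_0$ is the sector containing the real segment $(1-\varepsilon,1)$, namely $\{H=1+re^{i\theta}:r\ge 0,\ \theta\in[5\pi/6,7\pi/6]\}$, which lies strictly inside $\{\mathrm{Re}\,H<1\}$ away from its apex. For the global statement I would parametrize the boundary $\partial\mathcal H_0=\hat x^{-1}(\partial\mathbb D)$ as $\theta\mapsto H(\theta)$ via the explicit rational formula for $\hat x$ (see \cite{CAS3}) and verify that $\mathrm{Re}\,H(\theta)<1$ for all $\theta\in(0,2\pi)$; since $\mathrm{Re}\,H$ is harmonic on $\mathcal H_0$ and continuous on $\overline{\mathcal H}_0$, the maximum principle then promotes this boundary bound to all of $\overline{\mathcal H}_0$.

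To close the argument I would confirm that $(1,1)$ is genuinely a pole of the bivariate rational function $\check Z$, rather than a removable singularity. In local coordinates $(h,k)=(H-1,K-1)$ the denominator carries the simple factor $H+K-2=h+k$, and I would check by a short Taylor expansion (consistently with \cite{CAS3}) that the numerator of $\check Z$ does not vanish identically along $h+k=0$, so that the quotient blows up as $(H,K)\to(1,1)$ along a generic approach. The principal obstacle of the argument is the global half-plane statement $\overline{\mathcal H}_0\subseteq\{\mathrm{Re}\,H\le 1\}$, which rests on the explicit form of $\hat x$; everything else reduces to a denominator inspection, the local sector analysis at the cubic critical point, and a short numerator expansion.
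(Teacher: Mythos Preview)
Your argument is correct in outline but takes a harder route than the paper on the decisive step. Both proofs list the candidate loci $\{H=2\}$, $\{K=2\}$, $\{H+K=2\}$ and dispose of the first two (your observation $\hat x(2)=\infty$ is in fact cleaner than the paper's topological remark). For $\{H+K=2\}$ you aim to prove the geometric containment $\overline{\mathcal H}_0\subseteq\{\mathrm{Re}\,H\le 1\}$ via a local sector analysis at the cubic critical point plus a global boundary verification. The local part is fine, but the global part---parametrizing the relevant component of the degree-$6$ real-algebraic curve $\hat x^{-1}(\partial\mathbb D)$ and checking the inequality along it---is precisely the computation you flag as the ``principal obstacle'', and you do not carry it out. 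The paper bypasses this geometry entirely: invoking the absolute convergence of $\tilde Z$ on $\overline{\mathbb D}{}^2$ (Lemma~\ref{lem:absconvcrit}), any point of $\{H+K=2\}\cap\overline{\mathcal H}_0^2$ with $H\ne 1$ would force the numerator of $\check Z$ to vanish there; substituting $K=2-H$ shows that numerator is proportional to $(H-1)^6$, hence vanishes only at $H=1$. A single polynomial factorization thus replaces your entire global argument, at the cost of relying on the prior convergence lemma.

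One small correction to your final paragraph: the numerator of $\check Z$ \emph{does} vanish at $(1,1)$ along $H+K=2$ (to order $6$, as the paper's computation shows), so your proposed check ``the numerator does not vanish identically along $h+k=0$'' does not by itself exhibit blow-up at $(1,1)$. The point $(1,1)$ is really a point of indeterminacy of the reduced rational function; what makes it singular is that numerator and denominator are coprime with $D(1,1)=0$. The paper does not belabour this either, since the subsequent local expansions (via \cite[Proposition~23]{CT21}) handle the actual singular behaviour directly.
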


\begin{proof}
From the expressions of $\hat{Z}(H,K)$ and $\hat{Z}_0(H)$, we see that the possible poles are located at $H=2$, $K=2$ or $H+K=2$. Now $H=2$ cannot belong to $\overline{\mathcal{H}}_0$, since the domain $\mathcal{H}_0$ has the topology of the disk and is symmetric with respect to the $H$-axis. By symmetry, the same holds for $K=2$. Finally, let us assume that $(H,K)\in\overline{\mathcal{H}}_0^2$ such that $H+K=2$. Since $\tilde{Z}(x,y)$ is absolutely convergent in $\overline{\mathbb{D}}$, then necessarily the numerator of $\check{Z}$ vanishes for such $(H,K)$. Substituting $K=2-H$ gives the numerator an expression proportional to $(H-1)^6$, which only vanishes at $H=1$. Thus, $(H,K)=(1,1)$ is the unique pole.
\end{proof}

\begin{figure}%
    \centering
    \subfloat[The $\Delta$-domain $\Delta_{\epsilon,\theta}$.]{{\includegraphics[scale=0.8]{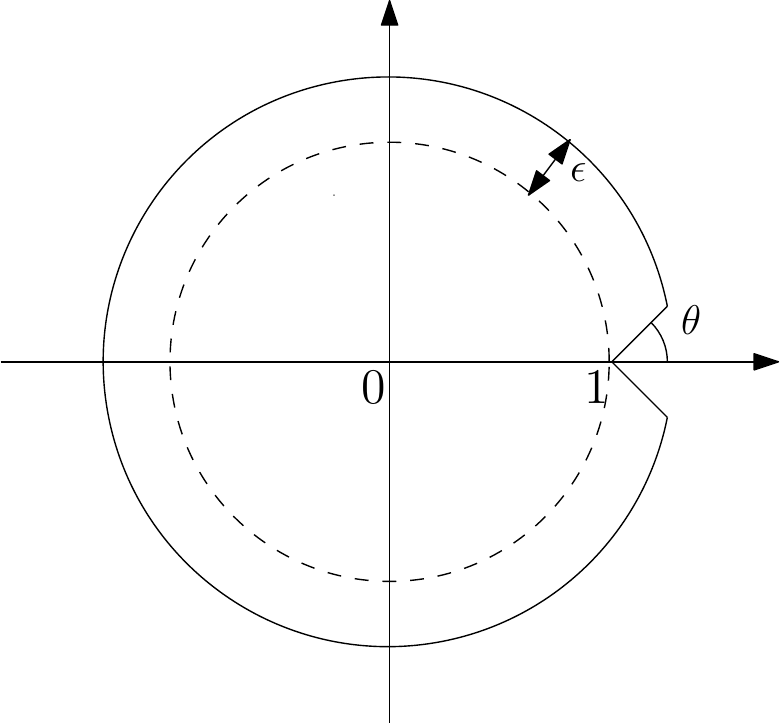} }}%
    \qquad
    \subfloat[The slit disk $\Delta_\epsilon$.]{{\includegraphics[scale=0.8]{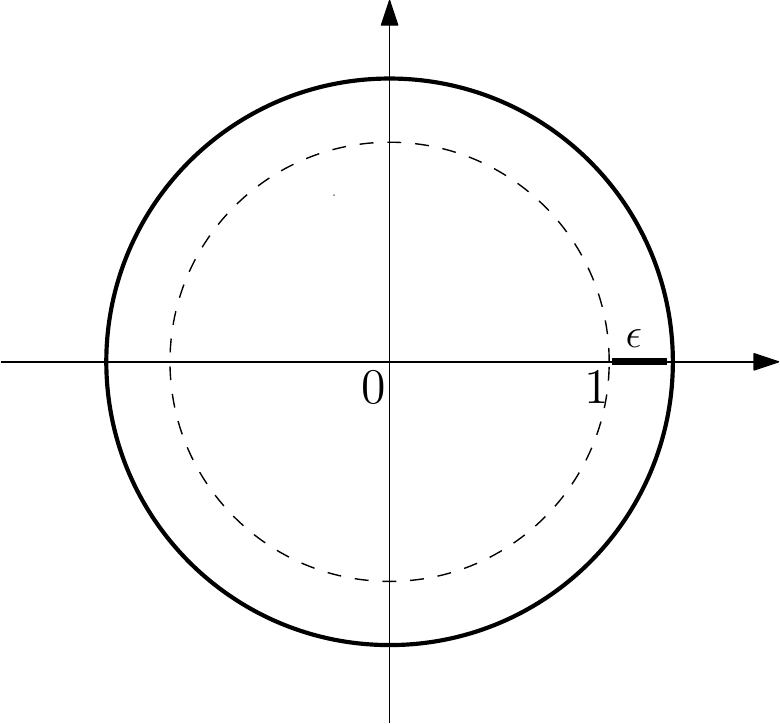}}}%
   \caption{The construction of the $\Delta$-domain and the slit disk.}%
    \label{fig:SLE}%
\end{figure}

\begin{definition}
Let $\epsilon>0$ and $\theta\in\left(0,\frac{\pi}{2}\right)$. The domain \begin{displaymath}\Delta_{\epsilon,\theta}:=\{z\in (1+\epsilon)\mathbb{D}\ |\ z\ne 1\text{ and }|\arg(z-1)|>\theta\}\end{displaymath} is called the \emph{$\Delta$-domain} of opening angle $\theta$ and margin $\epsilon$ at $1$. 
For $\theta=0$, the $\Delta$-domain reduces to \begin{displaymath}\Delta_\epsilon:=(1+\epsilon)\mathbb{D}\setminus [1,\epsilon).\end{displaymath} It is called the \emph{slit disk} of margin $\epsilon$ at $1$.
\end{definition}

Holomorphicity in a $\Delta$-domain is a basic condition for the transfer theorems for coefficient asymptotics in analytic combinatorics; see \cite{FS09} for theory and examples. In this special case $\nu=\nu_c$, we can actually show holomorphicity in a product of slit disks, which ultimately follows from the fact that the parameter of the dominant singularity is a second order critical point of $\hat{x}$. See \cite[Section 3]{CT21} for the complete explanation.

We make the following topological convention regarding the closure of the $\Delta$-domains: if $\theta>0$,  the closure $\overline{\Delta}_{\epsilon,\theta}$ is defined with respect to the usual Euclidean topology of $\C$, whereas if $\theta=0$, the closure of the slit disk $\overline{\Delta}_{\epsilon}$ is defined with respect to the topology of the universal covering space of $\C\setminus\{1\}$. The latter ensures that the boundary of the slit disk, denoted by $\partial \Delta_{\epsilon}$, is a closed curve which is homotopic to the boundary of the approximating $\Delta$-domains.

\begin{corollary}
There exists an $\epsilon>0$ such that the generating function $(x,y)\mapsto\tilde{Z}(x,y)$ is holomorphic in $\Delta_\epsilon^2$ and continuous in its closure. 
\end{corollary}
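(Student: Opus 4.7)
The plan is to lift the slit-disk structure from the $x$-plane back to the parameter space via a single-valued branch of $\hat{x}^{-1}$, and to realise $\tilde{Z}$ as the composition $\check{Z}\bigl(\hat{x}^{-1}(x),\hat{x}^{-1}(y)\bigr)$; this mirrors the analogous construction in \cite[Section~3]{CT21}. The first step is to build the holomorphic inverse. By Lemma \ref{lem:ucritc}, $\hat{x}$ has a unique critical point on $\overline{\mathcal{H}}_0$ at $H=1$, of multiplicity two, so $\hat{x}(H)-1=c(H-1)^3(1+o(1))$ near $H=1$ while $\hat{x}$ is a local biholomorphism everywhere else on $\overline{\mathcal{H}}_0$. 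The cube-root branch defines $\hat{x}^{-1}$ on a simply connected neighborhood of $x=1$ that avoids the slit $[1,1+\epsilon)$, and the inverse function theorem extends it across the rest of $\partial\mathbb{D}$. Gluing yields a holomorphic $\hat{x}^{-1}\colon \Delta_\epsilon\to\C$ for some small $\epsilon>0$, continuous on $\overline{\Delta_\epsilon}$ in the universal-cover topology.

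Next, set $\tilde{Z}(x,y):=\check{Z}\bigl(\hat{x}^{-1}(x),\hat{x}^{-1}(y)\bigr)$ and check that no pole of $\check{Z}$ is encountered. The pole divisors of the rational function $\check{Z}$ are $\{H=2\}$, $\{K=2\}$ and $\{H+K=2\}$, and by Lemma \ref{lem:upolec} only $(1,1)$ lies in $\overline{\mathcal{H}}_0^{\,2}$. The first two divisors are separated from the compact set $\overline{\mathcal{H}}_0^{\,2}$ by a positive Euclidean distance, so for $\epsilon$ small enough they remain outside the slightly enlarged set $\hat{x}^{-1}(\overline{\Delta_\epsilon})^2$. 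The pole $(1,1)$ itself is avoided because the slit excises $x=1$ from $\Delta_\epsilon$, forcing $\hat{x}^{-1}(x)\neq 1$ everywhere on the domain.

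The remaining obstacle is the divisor $\{H+K=2\}$, which does meet $\overline{\mathcal{H}}_0^{\,2}$ precisely at $(1,1)$, and one must rule out spurious intersections once the domain is enlarged across the slit. Near $H=1$ the boundary of $\mathcal{H}_0$ is cuspidal: the preimage of $\mathbb{D}$ at a multiplicity-two critical point consists locally of three sectors of opening angle $\pi/3$, only one of which is $\mathcal{H}_0$. In the cube-root coordinate $H-1\propto (x-1)^{1/3}$ induced by this expansion, a single $\pi/3$-sector based at $0$ contains both $\alpha$ and $-\alpha$ only when $\alpha=0$; this property is stable under the $O(\epsilon^{1/3})$ enlargement corresponding to $\Delta_\epsilon$, so $\hat{x}^{-1}(\Delta_\epsilon)^2\cap\{H+K=2\}=\emptyset$ for $\epsilon$ small. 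Combined with the continuity of $\hat{x}^{-1}$ on $\overline{\Delta_\epsilon}$, this shows that $\tilde{Z}$ is holomorphic on $\Delta_\epsilon^2$ and continuous on its closure.
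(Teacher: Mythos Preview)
Your approach is the same as the paper's (which simply defers to \cite[Proposition~11]{CT21}): lift to the parameter plane via a single-valued branch of $\hat x^{-1}$ and check that the polar divisor of $\check Z$ is avoided. Two points deserve correction.

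First, the sector geometry at $H=1$ is not quite what you describe. The cusp of $\mathcal H_0$ at $H=1$ has opening $\pi/3$, but passing from $\mathbb D$ to the slit disk $\Delta_\epsilon$ is \emph{not} an ``$O(\epsilon^{1/3})$ enlargement'' of that cusp: since $x-1$ now sweeps all arguments in $(0,2\pi)$, the image $\hat x^{-1}(\Delta_\epsilon)$ near $H=1$ fills a sector of opening $2\pi/3$, independently of how small $\epsilon$ is. Your conclusion survives because $2\pi/3<\pi$ still forbids $\alpha$ and $-\alpha$ from lying in a common sector, but the ``stability under small enlargement'' justification is incorrect as written.

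Second, continuity on the closure does not follow from ``continuity of $\hat x^{-1}$ on $\overline{\Delta_\epsilon}$'' alone. The closure contains $(H,K)=(1,1)$, which lies on $\{H+K=2\}$ and is a point of \emph{indeterminacy} of $\check Z$ (both numerator and denominator vanish there, by the proof of Lemma~\ref{lem:upolec}). Composing a continuous map into a rational function at a point of indeterminacy is not automatically continuous; the limit may depend on the direction of approach. One must argue separately that $\check Z$ has a limit as $(H,K)\to(1,1)$ from within the sector-squared region. This does hold: writing $h=H-1$, $k=K-1$, the singular part of $\check Z$ behaves like $c\,h^6/(h+k)$, and for $h,k$ in a common sector of opening $<\pi$ one has $|h+k|\gtrsim\max(|h|,|k|)$, forcing this term to zero. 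But you have not supplied this step. The paper's proof, by contrast, records the verification $\partial_H\check Z(1,1)\ne 0$, which is a hypothesis of \cite[Proposition~11]{CT21}; your argument omits this check entirely.
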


\begin{proof}
The proof is a mutatis mutandis of the proof of \cite[Proposition 11]{CT21}, which uses the above lemmas as building blocks. The only remaining thing to check is that $\partial_H\check{Z}(1,1)\ne 0$, which is done by an explicit computation \cite{CAS3}.
\end{proof}

\paragraph{Step-by-step asymptotics.} Now the parametrization of the generating function $Z$ has the local expansion
\begin{equation*}
\hat Z(H,K)\ =\ \hat Z(H,1)	-\frac{\partial_K^3\hat Z(H,1)}6    (1-K)^3
							+\frac{\partial_K^4\hat Z(H,1)}{24} (1-K)^4 +O((1-K)^5)\,.
\end{equation*} Moreover, 
\begin{displaymath}
\hat{u}(K)=u_c-u_3(1-K)^3-u_4(1-K)^4+O\big((1-K)^5\big)
\end{displaymath}where $u_3$ and $u_4$ are some positive coefficients, giving
 $(1-K)^3 = \frac{u_c}{u_3} \mb({1-\frac{v}{u_c}} - \frac{u_4}{u_3}(1-K)^4+O\big((1-K)^5\big)$. In particular, we have $1-K \sim (\frac{u_c}{u_3})^{1/3} (1-\frac v{u_c})^{1/3}$ as $K\to 1$. Hence, we obtain the following expansion of $v\mapsto Z(u,v)$ at $v=u_c$:
\begin{equation}\label{eq:localZuv}
Z(u,v) = Z(u,u_c) - \partial_v Z(u,u_c) (u_c-v) + (A(u)-a_0)\mB({ 1-\frac v{u_c} }^{4/3}
											+ O\mb({ \mB({ 1-\frac v{u_c} }^{5/3} }\,,
\end{equation}
where $A(u)=\sum_{p\ge 0}a_p u^p$ is given by the rational parametrization $u=\hat u(H)$ and
\begin{equation*}
A(u)-a_0=\hat A(H)\ :=\	\mB({\frac{u_c}{u_3}}^{4/3} \m({ \frac{\partial_K^4\hat Z(H,1)}{24}
									  + \frac{\partial_K^3\hat Z(H,1)}{6}\cdot\frac{u_4}{u_3} }
	=  \frac{2^{2/3}H}{5H-10}\,.
\end{equation*}The function $\hat A$ has the expansion\begin{displaymath}
\hat{A}(H)=A_0+A_1(1-H)+O\big((1-H)^2\big)
\end{displaymath}which yields \begin{equation*}
A(u) = A(u_c) + b\mB({ 1-\frac{u}{u_c} }^{1/3}
		+ O\mb({ \mB({ 1-\frac{u}{u_c} }^{2/3} }
\end{equation*}
where $b=-(\frac{u_c}{u_3})^{1/3} A_1=-\frac{2}{5} 2^{1/3}$. 

It is easy to see by an explicit computation that similar expansions also hold for $Z_0$, with the coefficient of the dominant singularity term being $a_0=\frac{2^{2/3}(\sqrt{7}-1)(4-\sqrt{7})}{25\sqrt{7}-55}$. Thus like shown in \cite{CT20}, the transfer theorems of \cite{FS09} yield the asymptotics 
\begin{align}\label{eq:onestepasympt}
Z_q(u) & \eqv{q} \frac{A(u)-a_0}{\Gamma(-4/3)} u_c^{-q} q^{-7/3} \\
z_{p,q}	& \eqv{q} \frac{a_p}{\Gamma(-4/3)} u_c^{-q} q^{-7/3}	\notag\\
a_p		& \underset{p\to\infty}= \frac{b}{\Gamma(-1/3)} u_c^{-p} p^{-4/3} + O(p^{-5/3})\notag
\end{align}where the constants $a_p$ are given by $A(u)=\sum_{p\ge 0}a_p u^p$.

\begin{remark}
Based on Lemmas \ref{lem:ucritc} and \ref{lem:upolec} and vanishing of the first two partial derivatives of $\check{Z}$, \cite[Proposition 23]{CT21} provides a complete recipe for writing down the local asymptotic expansions of the generating function $Z$ around the dominant singularity. We will use this result directly in the next paragraph.
\end{remark}

\paragraph{Diagonal asymptotics.} We have already seen that $\partial_K\check Z(H,1)=\partial_K^2\check Z(H,1)=0$. Following the algorithm given in \cite[Section 4]{CT21}, this fact together with Lemmas \ref{lem:ucritc} and \ref{lem:upolec} yield the following asymptotic expansion for $\mathring{Z}(u,v)$:\begin{equation}\label{eq:diaglocalexpc}
\mathring{Z}(u_c x,u_c y)=Z_{\text{reg}}(x,y)+b\cdot Z_{\text{hom}}(1-x,1-y)+O\left(\max\{|1-x|,|1-y|\})^2\right)
\end{equation}where $(x,y)\to(1,1)$ for $(x,y)\in\overline{\Delta}_\epsilon^2$, the function $Z_{\text{reg}}$ is a regular part which does not contribute to the asymptotics, and $Z_{\text{hom}}$ is a homogeneous function of order $5/3$, meaning that $Z_{\text{hom}}(\lambda s,\lambda t)=\lambda^{5/3}Z_{\text{hom}}(s,t)$ for every $\lambda>0$ (see \cite[Remark 24]{CT21}). This follows from \cite[Lemma 20,  Proposition 23]{CT21} and their proofs. 

Now the proof of the diagonal asymptotics part of \cite[Theorem 2]{CT21} applies to the local expansion \eqref{eq:diaglocalexpc}, and we deduce \begin{equation}\label{eq:diagasymptc}
z_{p,q} \underset{p,q\to\infty}= \frac{b\cdot c\left(\frac{q}{p}\right)}{\Gamma(-4/3)\Gamma(-1/3)} u_c^{-(p+q)} p^{-11/3}+O\left(p^{-4}\right)\qquad\text{while}\quad \frac{q}{p}\in [\lambda_{\min},\lambda_{\max}]\subset(0,\infty)
\end{equation} where \begin{equation} 
c(\lambda) =\frac43 \int_0^\infty (1+r)^{-7/3}(\lambda+r)^{-7/3} \dd r.
\end{equation}During the course of the procedure, we check that the value of the constant $b$ indeed coincides with the one in the one-step asymptotics; this is done in the Maple worksheet \cite{CAS3}.

\subsection{Singularity analysis at $\nu<\nu_c$}

We conduct now the singularity analysis of the previous subsection for an arbitrary $\nu\in (1,\nu_c)$. Recall that for every $\nu\in (0,\nu_c)$, there is a unique $S\in (0,S_c)$ such that $\nu=\hat{\nu}(S)$. In addition to the physical high temperature regime, this range comprises the percolation $\nu=1$, corresponding to $S_{perc}:=\frac{1}{2}-\frac{\sqrt{3}}{6}$, as well as the antiferromagnetic regime $\nu\in (0,1)$. Applying again the \emph{parametrization} method of the \emph{algcurves}-package of Maple to equation \eqref{peq1cat3} with $\nu=\hat{\nu}(S)$ and $T=\hat{T}(S)$, we find a rational parametrization $(\hat{U}(R,S),\hat{F}(R,S))$ for $(U,F)$, where the coefficients are rather complicated. However, there is a unique $R_0(S)$ such that $\hat{U}(R_0(S),S)=\hat{F}(R_0(S),S)=0$, and the derivative of  $\hat{U}(R,S)$ has again three roots (see \cite{CAS3}). By guess and check, we may choose a suitable one for the parameter of the dominant singularity. Calling this parameter $R_c(S)$, we make the change of variable $R=R_0(S)-\frac{R_0(S)-R_c(S)}{S}\cdot H$. This yields a remarkably simpler parametrization, which will parametrize the generating functions at the origin by $H=0$ and has a critical point $H_c(S):=S$. Then, in the variable $H$, we obtain the parametrizations $\hat{u}(H,S):=T_c(S)^{-2/3}\hat{U}(H,S)$ and $\hat{Z}_0(H,S):=T_c(S)^{-1}\hat{U}(H,S)\hat{F}(H,S)$ for $(u,Z_0(u))$ as follows: 
\begin{align*}
&\hat{u}(H;S)=T_c(S)^{-2/3}\frac{2-3S}{36S^2(S-1)^2}H\frac{(3S-2)H^2+(2-4S)H-6S^3+10S^2-3S}{H-2S}
\end{align*}
\begin{align}
\hat{Z}_0(H;S)&=\frac{36 H}{S^2(3S-2)(6S^2-10S+3)}\notag\\ &\cdot \Big(\frac{\frac{5}{2}HS^5+\left(-H^2-\frac{15}{2}H+\frac{2}{3}\right)S^4+\left(-2H^3+\frac{20}{3}H^2+\frac{19}{4}H-\frac{2}{9}\right)S^3}{(H-2S)^2}\notag\\&+\frac{H\left(H^3+H^2-\frac{53}{9}H-\frac{5}{6}\right)S^2-\frac{4(H^2-\frac{2}{3}H-1)H^2S}{3}+\frac{4H^3(H-1)}{9}}{(H-2S)^2}\Big).\end{align}
Equation \eqref{eq:masterFEsol} gives then a parametrization of $Z(u,v)$ of the form
\begin{equation}
\hat{Z}(H,K;S)=\frac{N(H,K;S)}{S^2(3S-2)(6S^2-10S+3)(H-2S)^2(K-2S)^2(H+K-2S)}
\end{equation}where $N(H,K;S)$ is a polynomial in its three variables, too long to be written here (see the Maple worksheet \cite{CAS3} instead). The essential thing to note at this point is the fact that $\hat{Z}$ has singularities at $H=2S$, $K=2S$ and $H+K=2S$, except for some specific values of $S$ which cancel the denominator. Moreover, we find out that
\begin{displaymath}
\hat{Z}(H,H;S)=\frac{\tilde{N}(H;S)}{(H-2S)^2S^2(3S-2)(6S^2-10S+3)},
\end{displaymath}having only singularity at $H=2S$ (see an explicit expression in \cite{CAS3}). In addition, an explicit computation shows that this singularity is removable. We also define $\check{Z}(H,K;S)=\hat{Z}(H,K;S)+\hat{Z}_0(H;S)+\hat{Z}_0(K;S)$ which parametrizes $\mathring{Z}(u,v;\nu)$ together with $\hat{u}(H;S)$ and $\hat{u}(K;S)$.

We still need to show that $H_c(S)=S$ is indeed the parameter of the dominant singularity. It is known that the parameter is necessarily a critical point of smallest modulus. Since $S>0$, this already rules out the unique pole of $\hat{Z}_0(H)$, namely $H=2S$. At this point, we restrict ourself to the physical range $S\in (S_{perc},S_c)$. Then, the equation $\partial_H\hat{u}(H,S)=0$ has precisely three solutions, namely \begin{displaymath}
H=\frac{6S^2-2S-1\pm\sqrt{108S^4-192S^3+108S^2-20S+1}}{6S-4}
\end{displaymath}in addition to $H=S$. Now it is easy to verify that on the physical interval of $S$, all the aforementioned parameters are positive and greater than $S$. Moreover, the multiplicative terms which only depend on $S$ in the denominator of $\check{Z}(H,K;S)$ do not vanish if $S\in (S_{perc},S_c)$. Thus, $H=S$ is the simple zero which can be chosen for the parameter for the dominant singularity.  We denote \begin{displaymath}
u_c(S):=\hat{u}(H_c(S);S)=\hat{u}(S;S)=\frac{(18S^2-18S+4)2^{1/3}}{\left(\frac{(3S-2)^2(6S^2-10S+3)}{S(1-S)^3}\right)^{2/3}S(1-S)},
\end{displaymath}which is positive for $S\in (S_{perc},S_c)$. Using the parametrization $\nu=\hat{\nu}(S)$, we also denote $u_c(\nu)=u_c(S(\nu))$.

\begin{lemma}
The series $(u,v)\mapsto \mathring{Z}(u,v;\nu)=Z_0(u;\nu)+Z_0(v;\nu)+Z(u,v;\nu)$ is absolutely convergent if and only if $|u|\le u_c(\nu)$ and $|v|\le u_c(\nu)$.
\end{lemma}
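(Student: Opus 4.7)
The plan is to mirror the two-step structure of the proof of Lemma~\ref{lem:absconvcrit}: first obtain absolute convergence of the monochromatic slice $Z_0(u;\nu)$ on the closed disk $|u|\le u_c(\nu)$, then upgrade this to absolute convergence of $\mathring{Z}$ at the corner $(u_c(\nu),u_c(\nu))$, and finally invoke nonnegativity of the coefficients $z_{p,q}(\nu)$ to spread the convergence to the whole closed bidisk. The setup paragraph immediately preceding the statement has already performed most of the analytic bookkeeping; what remains is to assemble these ingredients.

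For the monochromatic part, $(\hat{u}(H;S),\hat{Z}_0(H;S))$ is a rational parametrization of $u\mapsto Z_0(u;\nu)$ at the origin in which the unique pole of the second coordinate in $H$ lies at $H=2S$, strictly beyond the critical parameter $H_c(S)=S$ identified above. Since $Z_0$ has nonnegative coefficients, the argument of the proof of Lemma~5 in \cite{CT20} (which relies on Proposition~21 of \cite{CT20}) applies and yields $u_c(\nu)=\hat{u}(S;S)$ as the radius of convergence of $Z_0(u;\nu)$. Because $\hat{Z}_0(S;S)$ is finite (as $S\ne 2S$), $Z_0$ also converges absolutely at $u=u_c(\nu)$, exactly as in Lemma~\ref{lem:absconvcrit}.

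For the corner value, I would exploit the observation recorded above that $\hat{Z}(H,H;S)$ has only the singularity $H=2S$, and that this singularity is in fact removable. Consequently $\check{Z}(H,H;S)=\hat{Z}(H,H;S)+2\hat{Z}_0(H;S)$ is holomorphic in a neighborhood of $H=S$, so
\begin{displaymath}
\mathring{Z}(u_c(\nu),u_c(\nu);\nu)=\check{Z}(S,S;S)<\infty.
\end{displaymath}
Positivity of the coefficients $z_{p,q}(\nu)$ then gives the Cauchy majorization
\begin{displaymath}
\sum_{p+q\ge 1}z_{p,q}(\nu)\,|u|^p|v|^q\ \le\ \mathring{Z}(u_c(\nu),u_c(\nu);\nu)\ <\ \infty
\end{displaymath}
whenever $|u|,|v|\le u_c(\nu)$, which is the ``if'' direction. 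The ``only if'' direction is immediate: if $|u|>u_c(\nu)$ then the monochromatic slice $\sum_{p\ge 1}z_{p,0}(\nu)|u|^p=Z_0(|u|;\nu)$ already diverges, and the role of $u$ and $v$ is symmetric.

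The main obstacle is entirely bundled into the preceding paragraph of the paper: one must certify that the parametrization $(\hat{u}(H;S),\hat{Z}_0(H;S))$ is proper and that $H=S$ is genuinely the critical point of smallest modulus for $\hat{u}(\,\cdot\,;S)$ throughout the physical range $S\in(S_{perc},S_c)$, so that the cited result of \cite{CT20} is legitimately applicable. The explicit list of zeros of $\partial_H\hat{u}$ together with the inequalities checked above take care of this, and also ensure that no additional singularities appear between $H=S$ and $H=2S$ on the relevant branch; once these facts are in place, the proof proceeds mechanically as described.
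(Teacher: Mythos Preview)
Your proposal is correct and follows exactly the approach the paper intends: its own proof reads in full ``A mutatis mutandis of the proof of Lemma~\ref{lem:absconvcrit}'', and what you have written is precisely that mutatis mutandis, with the pole at $H=2S$ replacing $H=2$, the critical point $H=S$ replacing $H=1$, and the same appeal to \cite[Lemma~5]{CT20} for both the radius of convergence of $Z_0$ and the passage from finiteness at the corner to absolute convergence on the closed bidisk.
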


\begin{proof}
A mutatis mutandis of the proof of Lemma \ref{lem:absconvcrit}.
 \end{proof}
 
We again make the change of variables $(x,y)=\left(\frac{u}{u_c(\nu)},\frac{v}{u_c(\nu)}\right)$ and define $\tilde{Z}(x,y;\nu):=\mathring{Z}(u_c x,u_c y;\nu)$. The corresponding RP is $x=\hat{x}(H;S):=u_c(S)^{-1}\hat{u}(H;S)$ and $\tilde{Z}(x,y;\nu)=\check{Z}(H,K;S)$. As in the critical case, $\hat{x}$ induces a conformal bijection from a neighborhood of $H=0$ to the unit disk $\mathbb{D}$, which extends continuously to the boundary of $\mathbb{D}$. Let $\mathcal{H}_0(S)$ be the component of the preimage of $\hat{x}^{-1}(\mathbb{D})$ containing the origin, and let $\overline{\mathcal{H}}_0(S)$ be its closure. Then the singularity structure of the generating functions is determined by the following results:

\begin{lemma}\label{lem:ucrith}
For $S\in (S_{perc},S_c)$, the value $H=S$ is the unique critical point of $\hat{x}$ in $\overline{\mathcal{H}}_0(S)$, being of multiplicity $1$. 
\end{lemma}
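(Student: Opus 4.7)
The proof plan follows closely the argument used for Lemma~\ref{lem:ucritc} in the critical case, which is in turn a direct analogue of \cite[Lemma 13]{CT21}. Since $\hat{x}(H;S)=u_c(S)^{-1}\hat{u}(H;S)$ differs from $\hat{u}(H;S)$ only by a multiplicative constant depending on $S$ alone, the critical points of $\hat{x}$ in $H$ coincide with those of $\hat{u}$. These have already been enumerated earlier in the excerpt: in addition to $H=S$, they are the two real numbers
\begin{equation*}
H^{\pm}(S)=\frac{6S^2-2S-1\pm\sqrt{108S^4-192S^3+108S^2-20S+1}}{6S-4},
\end{equation*}
and the author has already observed that on the physical interval $S\in(S_{perc},S_c)$ these are both positive and strictly greater than $S$. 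The plan is to use this ordering together with the geometry of $\mathcal{H}_0(S)$ to exclude $H^{\pm}(S)$ from the closure of the domain, and then to read off multiplicity one from the fact that the three critical points are distinct.

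First I would argue the multiplicity statement. Since the polynomial equation $\partial_H\hat{u}(H,S)=0$ has exactly three simple roots (namely $H=S$ and $H=H^{\pm}(S)$, which are pairwise distinct on the physical interval by the strict inequality $H^{\pm}(S)>S$), the value $H=S$ is a simple zero of $\partial_H\hat{u}(\cdot,S)$ and hence of $\partial_H\hat{x}(\cdot,S)$. This is precisely the statement that $S$ is a critical point of multiplicity one of $\hat{x}(\cdot;S)$. Note this is in sharp contrast with the critical case $\nu=\nu_c$ treated in Lemma~\ref{lem:ucritc}, where the degeneracy of two of the three critical points forces the multiplicity to jump to $2$; this will be the basic reason why the singularity analysis in the high-temperature regime produces different exponents.

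Next I would handle the topological step. Since $\hat{x}(\cdot;S)$ is a proper rational parametrization of a real algebraic curve and induces a conformal bijection from a neighborhood of $0$ to a neighborhood of $0$ in $\mathbb{D}$, the component $\mathcal{H}_0(S)$ of $\hat{x}^{-1}(\mathbb{D})$ containing the origin is a simply connected domain whose boundary is mapped onto the unit circle. Moreover, because the coefficients of $\hat{u}(\cdot;S)$ are real for $S$ real, $\mathcal{H}_0(S)$ is invariant under complex conjugation. The intersection $\mathcal{H}_0(S)\cap\mathbb{R}_{+}$ is therefore an interval starting at $0$; its right endpoint must be the smallest positive critical point of $\hat{x}(\cdot;S)$, which by the ordering above is $H=S$. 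Consequently $H^{\pm}(S)$, being real and strictly larger than $S$, cannot belong to $\overline{\mathcal{H}}_0(S)$.

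The only thing that is not literally a copy of the earlier argument is the verification of the strict inequality $S<H^{\pm}(S)$ for $S\in(S_{perc},S_c)$; this I anticipate as the one spot requiring a small computation, since the sign of $6S-4$ is negative on this interval and the discriminant under the square root must be shown to be nonnegative. I would simply check these two facts by evaluating the discriminant and $6S-4$ at the endpoints and using monotonicity (or directly by computer algebra, in line with the style of~\cite{CAS3}); no genuine obstacle is expected here. Once these elementary verifications are in place, the two ingredients above complete the proof.
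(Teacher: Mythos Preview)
Your proposal is correct and follows essentially the same approach as the paper: identify the critical points of $\hat{x}$ with those of $\hat{u}$, use the previously established fact that all three are real and positive with $H=S$ strictly the smallest (and simple), and invoke the disk topology and real-axis symmetry of $\mathcal{H}_0(S)$ to exclude the two larger ones. The paper's own proof is even terser, simply citing these earlier observations without restating the topological step.
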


\begin{proof}
The zeros of $\hat{x}'(H)$ coincide with the zeros of $\hat{u}'(H)$, and the value $H=S$ was shown to be a simple zero, having the smallest absolute value; all the zeros are positive if $S\in (S_{perc},S_c)$.
\end{proof}

\begin{lemma}\label{lem:upoleh}
The value $(H,K)=(S,S)$ is the unique pole of $\check{Z}$ in $\overline{\mathcal{H}}_0(S)^2$.
\end{lemma}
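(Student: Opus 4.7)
My plan is to follow the argument of Lemma \ref{lem:upolec} for the critical case, now adapted to the parametric setting with free $S\in(S_{perc},S_c)$. From the given expression of $\check{Z}(H,K;S)=\hat{Z}(H,K;S)+\hat{Z}_0(H;S)+\hat{Z}_0(K;S)$, the only factors that can vanish in the denominators are $(H-2S)^2$, $(K-2S)^2$, and $(H+K-2S)$; the $S$-dependent prefactor $S^2(3S-2)(6S^2-10S+3)$ is nonzero on the physical range (both factors $3S-2$ and $6S^2-10S+3$ being sign-definite there), so it contributes no further candidate loci. Thus the candidate poles lie on the three lines $H=2S$, $K=2S$, and $H+K=2S$.

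I would first rule out the first two loci. Since $\mathcal{H}_0(S)$ has the topology of a disk and is symmetric with respect to the real axis, its intersection with the positive real axis is an interval whose right endpoint is the smallest positive critical value of $\hat{x}$, which by Lemma \ref{lem:ucrith} is $H=S$. Consequently $H=2S>S$ cannot belong to $\overline{\mathcal{H}}_0(S)$, and the same conclusion for $K=2S$ follows by the $(H,K)$-symmetry of $\check{Z}$ and of $\mathcal{H}_0(S)$.

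It remains to treat the line $H+K=2S$. The preceding absolute-convergence lemma tells us that $\tilde{Z}(x,y;\nu)$ converges on $\overline{\mathbb{D}}^2$, so $\check{Z}$ is finite at every point of $\overline{\mathcal{H}}_0(S)^2$ except possibly at the preimage of $(x,y)=(1,1)$, namely $(H,K)=(S,S)$. Therefore on the set $\{H+K=2S\}\cap\overline{\mathcal{H}}_0(S)^2\setminus\{(S,S)\}$ the numerator $N(H,K;S)$ must vanish and cancel the factor $(H+K-2S)$. Concretely, I would substitute $K=2S-H$ into $N(H,K;S)$ and expect, by direct analogy with the factor $(H-1)^6$ appearing in the critical case, to find a factorization of the form $(H-S)^m\,g(H;S)$ where $g$ is a polynomial not vanishing for $(H,S)\in\overline{\mathcal{H}}_0(S)\times(S_{perc},S_c)$. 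This pinpoints $(S,S)$ as the unique surviving candidate pole.

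The main obstacle is purely computational: extracting $N(H,2S-H;S)$ from the lengthy expression of $N$ and verifying the factorization over the entire physical range of $S$, together with the non-vanishing of $g$ there. As in the critical case, this step would be deferred to the Maple worksheet \cite{CAS3}. Apart from this explicit algebraic verification, no additional input beyond Lemma \ref{lem:ucrith} and the preceding absolute-convergence lemma is required.
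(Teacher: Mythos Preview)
Your treatment of the prefactor $S^2(3S-2)(6S^2-10S+3)$ and of the loci $H=2S$, $K=2S$ matches the paper's argument exactly.

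The divergence is on the line $H+K=2S$. You propose to mimic the critical case: compute $N(H,2S-H;S)$ and hope for a factorization $(H-S)^m g(H;S)$ with $g$ nowhere zero on $\overline{\mathcal H}_0(S)$. That specific expectation fails. Since $Z(u,0)\equiv 0$, the numerator of $\hat Z$ carries a factor $HK$; and on $H+K=2S$ the $\hat Z_0$ contributions to the common numerator come multiplied by $(H+K-2S)$ and drop out. Hence $N(0,2S;S)=0$, so $H=0\in\overline{\mathcal H}_0(S)$ is already a root of your $g$. One can of course weaken the claim to ``no root $H$ with both $H$ and $2S-H$ in $\overline{\mathcal H}_0(S)$'', but then you must locate all roots of a polynomial in $H$ with coefficients depending on $S$ and, for each, decide membership in the implicitly defined domain $\overline{\mathcal H}_0(S)$ uniformly over $S\in(S_{perc},S_c)$. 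That is no longer a routine symbolic check.

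The paper avoids this by invoking \cite[Lemma 17]{CT21}: at any point of $\{H+K=2S\}\cap\overline{\mathcal H}_0(S)^2$ where $\check Z$ is finite, not only does $N$ vanish but also the Jacobian-type quantity $\partial_H N\,\partial_K D-\partial_K N\,\partial_H D$. One then solves the \emph{pair}
\[
N(H,2S-H;S)=0,\qquad (\partial_H N\,\partial_K D-\partial_K N\,\partial_H D)(H,2S-H;S)=0,
\]
which collapses (via Maple) to the three real values $H\in\{0,S,2S\}$. Since $2S\notin\overline{\mathcal H}_0(S)$ by the same disk-topology argument you used, the pairs $(0,2S)$ and $(2S,0)$ are excluded, leaving only $(S,S)$. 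The payoff of the extra equation is precisely that the domain $\overline{\mathcal H}_0(S)$ never has to be analyzed beyond the three explicit real candidates.
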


\begin{proof}
From the expressions of $\hat{Z}(H,K;S)$ and $\hat{Z}_0(H;S)$, we see that the possible poles are located at $H=2S$, $K=2S$ or $H+K=2S$. Now $H=2S$ cannot belong to $\overline{\mathcal{H}}_0(S)$, since the domain $\overline{\mathcal{H}}_0(S)$ has the topology of the disk and is symmetric with respect to the $H$-axis. By symmetry, the same holds for $K=2S$. Finally, let us assume that $(H,K)\in\overline{\mathcal{H}}_0(S)^2$ such that $H+K=2S$. Since $\tilde{Z}(x,y;\nu)$ is absolutely convergent in $\overline{\mathbb{D}}$, then necessarily the numerator of $\check{Z}$ vanishes for such $(H,K)$. In addition, if $N(H,K;S)$ and $D(H,K;S)$ are the numerator and the denominator of $\check{Z}(H,K;S)$ such that they are coprime with respect to each other, \cite[Lemma 17]{CT21} tells that $\partial_H N\partial_K D-\partial_K N\partial_H D=0$. Now the pair of equations \begin{displaymath}
\begin{cases}
N(H,2S-H;S)=0\\
(\partial_H N\partial_K D-\partial_K N\partial_H D)(H,2S-H;S)=0
\end{cases}
\end{displaymath}has only solutions $H\in\{0,S,2S\}$, which is shown explicitly with a Maple computation \cite{CAS3}. Thus, $(H,K)=(S,S)$ is the unique pole.
\end{proof}

\begin{corollary}
For any $\nu\in(1,\nu_c)$ and $\theta\in\left(0,\frac{\pi}{2}\right)$, there exist an $\epsilon>0$ such that the generating function $(x,y)\mapsto\tilde{Z}(x,y;\nu)$ is holomorphic in $\Delta_\epsilon\times\Delta_{\epsilon,\theta}$ and continuous in its closure. 
\end{corollary}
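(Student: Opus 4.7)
The proof follows the template of Proposition~11 in \cite{CT21} and the preceding critical-case corollary, with one substantive modification: at $\nu<\nu_c$ the dominant critical point of $\hat{x}$ is simple (Lemma \ref{lem:ucrith}) rather than of multiplicity two. The first step is to unpack this local behaviour: from the expansion $\hat{x}(H;S)-1=c(S)(H-S)^2+O((H-S)^3)$ with $c(S)\neq 0$, one deduces that $\hat{x}^{-1}$ has a square-root branch point at $x=1$, and the branch parametrizing $\overline{\mathcal{H}}_0(S)$ extends analytically to a slit disk $\Delta_{\epsilon'}$ for some $\epsilon'>0$, with continuous extension to its closure.

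Next, I would identify the singular locus of the composition $\tilde{Z}(x,y;\nu)=\check{Z}(\hat{x}^{-1}(x),\hat{x}^{-1}(y);S)$ inside $\Delta_{\epsilon'}\times\Delta_{\epsilon'}$. By Lemma \ref{lem:upoleh} the only relevant pole of $\check{Z}$ near $(S,S)$ lies along $H+K=2S$; plugging in the square-root expansion $\hat{x}^{-1}(x)-S\sim\alpha(1-x)^{1/2}$ with $\alpha^2=-1/c(S)$, this locus corresponds near $(1,1)$ to the equation $(1-x)^{1/2}+(1-y)^{1/2}=0$. Restricting $y$ to $\Delta_{\epsilon,\theta}$ confines $(1-y)^{1/2}$ to a sector bounded away from the negatives of the values taken by $(1-x)^{1/2}$ for $x\in\Delta_\epsilon$; for $\epsilon$ small enough in terms of $\theta$, the singular locus is therefore missed throughout $\Delta_\epsilon\times\Delta_{\epsilon,\theta}$. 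The other potential singularities, coming from the factors $H-2S$, $K-2S$ or the $S$-dependent denominator $S^2(3S-2)(6S^2-10S+3)$, are ruled out either because $2S\notin\overline{\mathcal{H}}_0(S)$ or because the denominator does not vanish on $S\in(S_{perc},S_c)$, as noted just before Lemma \ref{lem:ucrith}. Holomorphicity on $\Delta_\epsilon\times\Delta_{\epsilon,\theta}$ and continuity on its closure then follow by composition, using the continuous extension of $\hat{x}^{-1}$ to the closed slit disk.

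The main obstacle will be the branch-sensitive analysis of the previous paragraph: since the slit disk lives in the universal cover of $\mathbb{C}\setminus\{1\}$, the square root $(1-x)^{1/2}$ can a priori visit either half-plane as $x$ winds around $1$, and one must verify that the branch selected by analytic continuation of $\hat{x}^{-1}$ from $x=0$ has image in the half-plane determined by $\alpha$. Once this is checked, the sector constraint imposed on $(1-y)^{1/2}$ by $\Delta_{\epsilon,\theta}$ is exactly what prevents the cancellation $(1-x)^{1/2}+(1-y)^{1/2}=0$, and the asymmetric roles of $x$ and $y$ reflect the weaker extension available at a simple critical point, in contrast to the $\Delta_\epsilon^2$ domain obtained at $\nu=\nu_c$ from the cube-root structure.
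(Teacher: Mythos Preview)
Your approach is essentially the same as the paper's: both reduce to \cite[Proposition~11]{CT21}, and your sketch unpacks the mechanism of that proposition (square-root inverse at a simple critical point, pole locus $H+K=2S$ translating to $(1-x)^{1/2}+(1-y)^{1/2}=0$, angular restriction in $y$ avoiding it) rather than merely citing it.

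There is one omission worth flagging. The paper's proof consists of a single extra verification beyond the two lemmas you invoke: the explicit check that $\partial_H\check{Z}(S,S;S)\ne 0$. This is a hypothesis of \cite[Proposition~11]{CT21} and is not covered by your argument. Concretely, your analysis handles holomorphicity in the \emph{open} product domain by keeping $(1-x)^{1/2}+(1-y)^{1/2}$ away from zero, but continuity on the \emph{closure} requires control at $(x,y)=(1,1)$, where this sum vanishes. There one needs to know that the numerator of $\check{Z}$ along $H+K=2S$ vanishes at $(S,S)$ to the right order so that the composition stays bounded; the non-degeneracy condition $\partial_H\check{Z}(S,S;S)\ne 0$ is what pins this down in the framework of \cite{CT21}. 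You should add this check (done by explicit computation in \cite{CAS3}) to complete the argument.
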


\begin{proof}
Again, we check that $\partial_H\check{Z}(S,S;S)\ne 0$ by an explicit computation. The rest goes as in the proof of \cite[Proposition 11]{CT21}.
\end{proof}

\paragraph{Local expansions of $Z(u,v;\nu)$ and the diagonal asymptotics.} First, we check in \cite{CAS3} that $\partial_K\check Z(H,S;S)=0$ and $\partial_K^2\check Z(H,S;S)\ne 0$. Following again the algorithm given in \cite[Section 4]{CT21}, this together with Lemmas \ref{lem:ucrith} and \ref{lem:upoleh} yield the following asymptotic expansion for $\mathring{Z}(u,v;\nu)$:\begin{equation}\label{eq:diaglocalexpch}
\mathring{Z}(u_c(\nu) x,u_c(\nu) y)=Z_{\text{reg}}(x,y;\nu)+b(\nu)\cdot Z_{\text{hom}}(1-x,1-y;\nu)+O\left(\max\{|1-x|,|1-y|\}\right)
\end{equation}where $(x,y)\to(1,1)$ for $(x,y)\in\overline{\Delta_\epsilon\times\Delta_{\epsilon,\theta}}$, the function $Z_{\text{reg}}$ is a regular part which does not contribute to the asymptotics, and $Z_{\text{hom}}$ is a homogeneous function of order $1/2$. 

The proof of the diagonal asymptotics part of \cite[Theorem 2]{CT21} again applies to \eqref{eq:diaglocalexpch}, and we deduce \begin{equation}\label{eq:diagasympth}
z_{p,q} \underset{p,q\to\infty}= \frac{b(\nu)\cdot c\left(\frac{q}{p}\right)}{\Gamma(-3/2)} u_c(\nu)^{-(p+q)} p^{-5/2}+O\left(p^{-3}\right)\qquad\text{while}\quad \frac{q}{p}\in [\lambda_{\min},\lambda_{\max}]\subset(0,\infty)
\end{equation} where $c(\lambda) =(1+\lambda)^{-5/2}.$ During the course of the procedure (see the Maple worksheet \cite{CAS3}), we find the parametrization for $b(\nu)$ in the variable $S$ expressed as \begin{equation}
\hat{b}(S)=\frac{4(9S^2-10S+2)^2}{\sqrt{\frac{9S^2-10S+2}{3S^4-4S^3+S^2}}S^2(-18S^3+42S^2-29S+6)}.
\end{equation}We also check that $\hat{b}(S)$ is strictly positive when $S$ is in the physical interval.

\section{Peeling along the interface}\label{sec:peeling}

The basic idea for exploring the interface is to choose a \emph{local} and a \emph{Dobrushin-stable} peeling algorithm, i.e., an algorithm which chooses a boundary edge closest to the root whose deletion preserves the Dobrushin boundary and which, in the case of monochromatic boundary, chooses an edge whose endpoints have minimal graph distance to the origin. Such an exploration follows the interface in a natural way. In the previous works \cite{CT20} and \cite{CT21}, there was some freedom to choose such an algorithm, since the edge chosen by the algorithm could have two alternative boundary spins assigned. This freedom was also concretely exploited in \cite{CT21}, where the different behaviors of the associated perimeter processes led us to choose the starting point of the peeling at an edge with a prescribed spin for each temperature regime. In particular, the boundary edge there was always monochromatic. In this work, however, there is an obvious choice of the peeling algorithm: namely, the one which chooses the root edge itself. This is so since by definition, the root edge separates two opposite spins, i.e. is bichromatic. Moreover, such a peeling exploration always reveals a piece of unit length of the interface in one step. See Figure \ref{fig:interface} for graphical intuition.

\begin{figure}
\begin{center}
\includegraphics[scale=0.8]{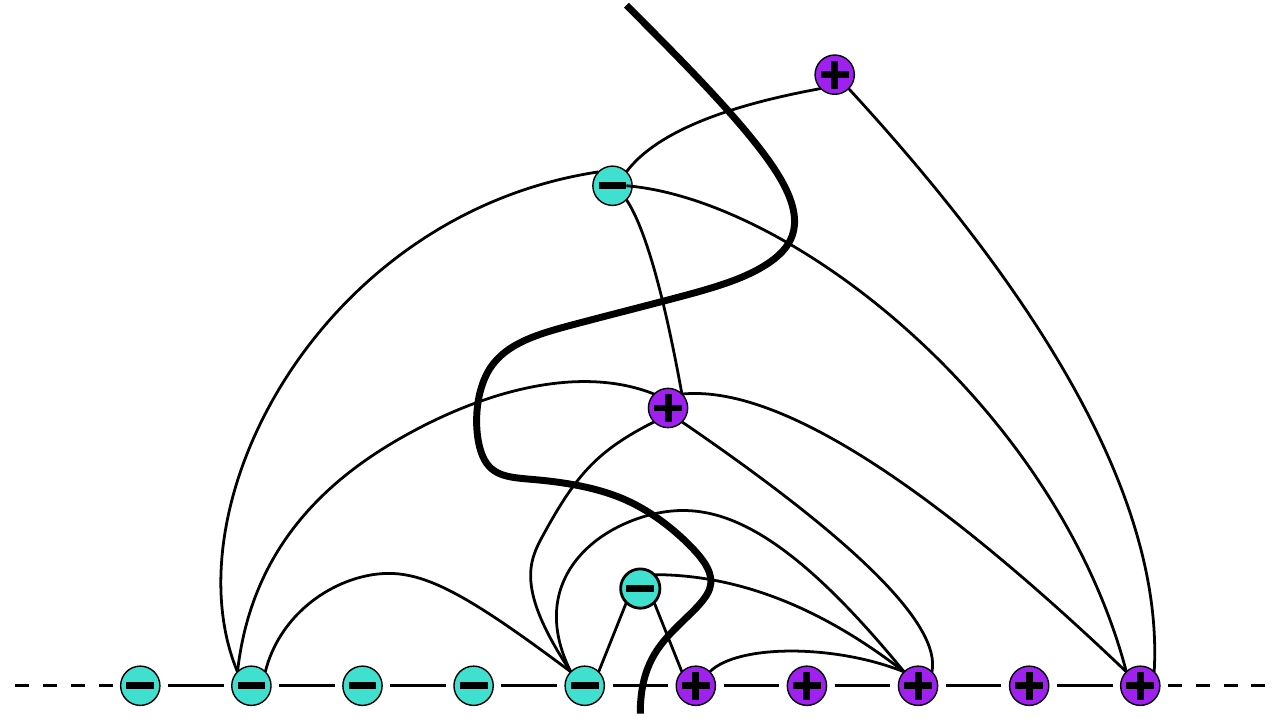}
\end{center}\vspace{-1em}
\caption{A portion of the interface explored by the peeling process. In this figure, the faces which do not contain a piece of the interface represent the swallowed region, and they have the law of a finite Boltzmann Ising-triangulation. This figure already demonstrates that the length of the interface in $n$ peeling steps is exactly $n$.}\label{fig:interface}
\end{figure}

\newcommand*{\hl}{\\\cline{1-6}}
\tabulinesep=1.2mm
\newcolumntype{L}{>{\!$\displaystyle}l<{$\!}}
\newcolumntype{S}{>{\!$\displaystyle}l<{$\!\!}}
\newcolumntype{R}{>{\!$\displaystyle}r<{$\!}}
\newcolumntype{C}{>{\!$\displaystyle}c<{$\!}}

\newcommand*{\zp}[1]{t\zz{#1}}
\newcommand*{\zzp}[2]{t\zzz{#1}{#2}}

\newcommand*{\zph}[1]{t\zzh{#1}}
\newcommand*{\zzph}[2]{t\zzzh{#1}{#2}}

\newcommand*{\zq}[1]{t\zz[p,q]{#1}}
\newcommand*{\zqh}[1]{t\zz[p,q]{#1}}
\newcommand*{\zqo}[1]{t\zz[0,q]{#1}}

\newcommand*{\zzq}[2]{t\zzz[p,q]{#1}{#2}}
\newcommand*{\zzqh}[2]{t\zzz[p+1,q]{#1}{#2}}
\newcommand*{\zzqo}[2]{t\zzz[0,q]{#1}{#2}}

\newcommand*{\ap}[1]{\,\frac{a_{#1}}{a_{p}}\,}
\newcommand*{\aph}[1]{\,\frac{a_{#1}}{a_{p+1}}\,}
\newcommand*{\apo}[1]{\,\frac{a_{#1}}{a_0}}

\newcommand{\pq}[1][P_n,Q_n]{\raisebox{-2pt}{$\!_{#1}\!$}}

\begin{table}[b!]
\begin{center}
\begin{tabu}{|L|S|S|  |L|S|S| L}
\cline{1-6}
\step &\Prob_{p,q}(\Step_1=\step) &(X_1,Y_1) &
\step &\Prob_{p,q}(\Step_1=\step) &(X_1,Y_1)
\hl \cp	&  \zph{p+1,q}				& (1,0)
&	\cm	&		\zph{p,q+1} 				& (0,1)
\hl \lp	&  \zzph{p,q-k}{0,k+1}		& (0,-k)
&	\rp[p+k]	&		\zzph{p,k+1}{0,q-k}		& (0,-q+k+1) &(0\le k\le q-1)
\hl \rp	&  	\zzph{p-k,q}{0,k+1}		&  (-k,0)
&	\lp[q+k]	&	\zzph{k+1,q}{p-k,0}	& (-p+k+1,0) &(0\le k\le p-1)
\hl
\end{tabu}
\vspace{3ex}

\begin{tabu}{|L|L|L| |L|L|L| L}
\cline{1-6}
 \step	& \Prob_{0,q}(\Step_1=\step)& (X_1,Y_1) & \step  & \Prob_{0,q}(\Step_1=\step) & (X_1,Y_1)
\hl \cp	& \nu\zqo{1,q}	& (1,0) &\cm		& \nu \zqo{0,q+1} & (0,1)
\hl \lp	& \nu\zzqo{0,q-k}{0,k+1} & (0,-k)	&\rp		& \nu\zzqo{0,q-k}{0,k+1} & (0,-k) &(0\le k\le q-1)
\hl
\end{tabu}
\caption{The probabilities of the first peeling event $\Step_1$ under $\Prob_{p,q}$ and $\Prob_{0,q}$ ($p,q\ge 1$). Here, we denote $t:=t_c(\nu)$ for simplicity. Observe that some symbols represent the same event. The associated perimeter variations $(X_1,Y_1)$ for the peeling with target $\rho^\dagger$ are also presented.}\label{tab:Ppq}
\end{center}\vspace{-0.7em}
\end{table}

\subsection{One-step peeling operation} Dividing Equation \eqref{receq1} by $z_{p,q}$ reveals a probability distribution. This distribution can be seen as the distribution of the first step in the peeling process as the root edge is deleted and its adjacent triangle is revealed. Formally, let $\steps:=\{\cp,\cm\}\cup\{\lp,\rp,: k\ge 0\}$ be a set of symbols. Assume that the vertex-bicolored triangulation $\bt$ has at least one boundary edge. Let $\algo$ be a peeling algorithm which chooses an edge $e$ from the boundary. We remove $e$ and reveal the internal face $f$ adjacent to it, together with the vertex $v$ at the corner of $f$ not adjacent to $e$. Observe that the vertex $v$ may still coincide with a vertex adjacent to $e$. If $f$ does not exist, then $\tmap$ is the edge map and $\bt$ has a weight 1 or $\nu$. Assume that $f$ exists. Then the peeling events are:
\begin{description}[noitemsep]
\item[Event $\CC^\+$:] $v$ is not on the boundary of $\tmap$ and has spin \+;
\item[Event $\CC^\<$:] $v$ is not on the boundary of $\tmap$ and has spin \<;
\item[Event $\RR_k$:] $v$ is at a distance $k$ to the right of $e$ on the boundary of $\tmap$ ($k\ge 0$);
\item[Event $\LL_k$:] $v$ is at a distance $k$ to the left of $e$ on the boundary of $\tmap$ ($k\ge 0$).
\end{description}

Let $\bt\in\mathcal{G}^\sigma_{p,q}$, and assume $p,q\ge 1$. In this case, the edge $e$ is chosen at the junction of the $\<$ and $\+$ edges, where the order of the edges is counterclockwise. At the very first step, $e=\rho$, the root edge. If $p=0$ and $q\ge 1$, the peeling algorithm chooses a monochromatic edge, and deletion of this edge gives a different law for the peeling due to the coefficient $\nu$. Then the Tutte's equations \eqref{receq1} and \eqref{receq2} define probability distributions, respectively, determined by the probabilities in Table \ref{tab:Ppq}; we denote these distributions by $\Prob_{p,q}$ and the random variable on $\steps$ by $\Step_1$ which takes a peeling step $\step\in\steps$ as a value. 

Now the diagonal asymptotics of equations \eqref{eq:diagasymptc} and \eqref{eq:diagasympth} give rise to a limit distribution as $p,q\to\infty$, denoted by $\Prob\yy$ and given in Table \ref{tab:Pinfty}. Moreover, in order to construct the local limit at $\nu=\nu_c$, we also need the limit of $\Prob_{p,q}$ as $p\ge 0$ is fixed and $q\to\infty$. This is denoted by $\Prob\py$, and its existence at $\nu=\nu_c$ follows from the one-step asymptotics \eqref{eq:onestepasympt}. This limit also exists for $1<\nu<\nu_c$, but we do not consider it here since it is not needed in the construction of the local limit in that case. That said, there is no problem to define the distribution $\Prob_p$ for arbitrary $\nu$. Table \ref{tab:Ppinfty} summarizes the corresponding transition probabilities for $p\ge 1$. If $p=0$, we need to take into account the monochromatic boundary, which gives the distribution as in Table \ref{tab:P0}.

\begin{table}[t]
\centering
\begin{tabular}{c r}
\begin{tabu}[t]{|L|L|L| |L|L|L| L}
\cline{1-6}
\step	& \Prob_\infty(\Step_1=\step) & (X_1,Y_1) & 
\step  & \Prob_\infty(\Step_1=\step) & (X_1,Y_1) &
\hl \cp	& \frac{t}{u} & (1,0)	&
	\cm		& \frac{t}{u} & (0,1) &
\hl \lp	&  \frac{t}{u} z_{k+1,0}u^{k+1}	& (0,-k) & 
	\rp	&  \frac{t}{u} z_{k+1,0}u^{k+1}  & ( -k,0) & (k\ge 0)
\hl
\end{tabu}
\end{tabular}
\caption{Law of $\Step_1$ under $\Prob_\infty$, obtained by taking the limit $p,q\to\infty$ in Table~\ref{tab:Ppq}, and the corresponding $(X_1,Y_1)$. We denote $t:=t_c(\nu)$ and $u:=u_c(\nu)$ for simplicity. 
}\label{tab:Pinfty}
\end{table}

\begin{table}[t]
\centering
\begin{tabular}{c r}
\begin{tabu}[t]{|L|L|L| |L|L|L| L}
\cline{1-6}
\step	& \Prob\py(\Step_1=\step)				& (X_1,Y_1)	&
\step	& \Prob\py(\Step_1=\step)				& (X_1,Y_1)	&
\hl	\cp & 		t \ap{p+1} 					& (1,0)	&
	\cm & \frac{t}{u}					& (0,1)	&
\hl	\lp & 		\frac{t}{u} z_{k+1,0} u^{k+1}		& (0,-k)	&
	\rp[p+k] & 		\frac{t}{u}\ap{0} z_{p,k+1} u^{k+1}	& (-p,-k)	&
(k\ge 0)
\hl	\rp & 		t \ap{p-k} z_{k+1,0} 	& (-k,0)	&
	 & 			& 	& (0\le k<p)
\hl
\end{tabu}
\end{tabular}
\caption{Law of $\Step_1$ under $\Prob\py$, $p\ge 1$, obtained by taking the limit $q\to\infty$ in Table~\ref{tab:Ppq}, and the corresponding $(X_1,Y_1)$ under the peeling without target.}\label{tab:Ppinfty}
\end{table}

\begin{table}[t]
\centering
\begin{tabular}{c r}
\begin{tabu}[t]{|L|L|L| |L|L|L| L}
\cline{1-6}
\step	& \Prob_0(\Step_1=\step) & (X_1,Y_1) & 
\step  & \Prob_0(\Step_1=\step) & (X_1,Y_1) &
\hl \cp	& t\nu\frac{a_1}{a_0} & (1,0)	&
	\cm		& \frac{t\nu}{u} & (0,1) &
\hl \lp	&  \frac{t\nu}{u} z_{k+1,0}u^{k+1}	& (0,-k) & 
	\rp	&  \frac{t\nu}{u} z_{k+1,0}u^{k+1}  & (0,-k) & (k\ge 0)
\hl
\end{tabu}
\end{tabular}
\caption{Law of $\Step_1$ under $\Prob_0$, obtained by taking the limit $q\to\infty$ with $p=0$ in Table~\ref{tab:Ppq}, and the corresponding $(X_1,Y_1)$. 
}\label{tab:P0}
\end{table}

\begin{lemma}
Let $p,q\ge 0$ be such that $p+q\ge 1$. Then for all $\nu\in (1,\nu_c]$,  $\Prob_{p,q}$,  $\Prob\py$ and $\Prob\yy$ are probability distributions on $\steps$.
\end{lemma}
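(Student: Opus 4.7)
The plan is to verify two properties in each case: non-negativity and summation to $1$. Non-negativity is immediate because every probability listed in Tables~\ref{tab:Ppq}--\ref{tab:P0} is a product of positive quantities (the parameters $t_c(\nu)$, $\nu$, $u_c(\nu)$, the partition functions $z_{p,q}(\nu)$, and the coefficients $a_p$), so all the real work is to show that $\sum_{\step\in\steps}\Prob(\Step_1=\step)=1$.

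For the finite-volume laws $\Prob_{p,q}$ (and $\Prob_{0,q}$), I would match the entries of the relevant table term by term with the Tutte recursion~\eqref{receq1}, respectively~\eqref{receq2}, divided by $z_{p,q}$. The $\cp,\cm$ rows correspond to the single terms $tz_{p+1,q}$ and $tz_{p,q+1}$, while the $\rp$ and $\lp$ families enumerate the two decomposition sums, indexed by the position of the third vertex of the revealed triangle on the $\+$ and $\<$ portions of the boundary. Since~\eqref{receq1} was derived in Section~\ref{sec:combi} from a combinatorial decomposition that exhaustively partitions the Boltzmann mass of $\mathcal{G}^\sigma_{p,q}$ (with the $\delta_{p,1}\delta_{q,1}$ contribution accounting for the degenerate edge-map), summation over $\steps$---under the canonical labeling alluded to in the caption of Table~\ref{tab:Ppq}, so that each combinatorial outcome is counted exactly once---returns the right-hand side of the recursion, hence $1$.

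For the limit distributions $\Prob\yy$ and $\Prob\py$, the plan is to pass to the limit in the finite-volume identity $\sum_{\step}\Prob_{p,q}(\Step_1=\step)=1$. Pointwise convergence of each summand to the entry of Tables~\ref{tab:Pinfty} and~\ref{tab:Ppinfty} is immediate from the ratio asymptotics obtained in Section~\ref{sec:combi}: the one-step asymptotics~\eqref{eq:onestepasympt} yield $z_{p,q+1}/z_{p,q}\to u_c^{-1}$ and $z_{p,q-k}/z_{p,q}\to u_c^{k}$ for the law $\Prob\py$, while the diagonal asymptotics~\eqref{eq:diagasymptc} at $\nu=\nu_c$ and~\eqref{eq:diagasympth} at $\nu<\nu_c$ give the analogous statements for $\Prob\yy$. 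The main obstacle, which I would handle by dominated convergence, is the interchange of limit and infinite sum; a dominating function is provided by a uniform bound of the form $t\,z_{p-k,q}z_{k+1,0}/z_{p,q}\le C\,u_c^{k+1}z_{k+1,0}$ (for $0\le k\le p-1$, and the symmetric bound for the $\<$ sum), combined with the summability $\sum_{k\ge 0}u_c^{k+1}z_{k+1,0}=Z_0(u_c)<\infty$ ensured by Lemma~\ref{lem:absconvcrit} (and its analogue at $\nu<\nu_c$). Once the exchange is justified, the finite-volume normalization passes to the limit and yields the required identity.
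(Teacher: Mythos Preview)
Your treatment of the finite-volume laws $\Prob_{p,q}$ and $\Prob_{0,q}$ is correct and matches the paper: dividing the Tutte recursions \eqref{receq1}--\eqref{receq2} by $z_{p,q}$ exhibits the entries of Table~\ref{tab:Ppq} as a probability.

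For $\Prob\py$ and $\Prob\yy$, however, your dominated-convergence argument has a genuine gap. The uniform bound you propose, $t\,z_{p-k,q}z_{k+1,0}/z_{p,q}\le C\,u_c^{k+1}z_{k+1,0}$, amounts to $z_{p-k,q}/z_{p,q}\le C\,u_c^{k}$ uniformly in $0\le k\le p-1$, and this fails when $k$ is close to $p$. For instance, at $\nu=\nu_c$, with $k=p-1$ and $q/p\to\lambda$, the one-step and diagonal asymptotics \eqref{eq:onestepasympt}, \eqref{eq:diagasymptc} give
\[
\frac{z_{1,q}}{z_{p,q}}\ \sim\ \frac{a_1}{b\,c(\lambda)\lambda^{7/3}}\,u_c^{p}\,p^{4/3},
\]
so $z_{1,q}/(z_{p,q}u_c^{p-1})$ grows like $p^{4/3}$ rather than being bounded. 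Thus no constant $C$ can dominate all terms simultaneously, and the interchange of limit and sum is not justified. One could try to replace domination by a tightness argument (uniform smallness of the tail $\sum_{k\ge K}\Prob_{p,q}(\Step_1\in\{\rp[k],\lp[k]\})$), but that estimate is not available from what you have and is itself nontrivial, precisely because of the terms with $k$ comparable to $p$ or $q$.

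The paper sidesteps this entirely by \emph{not} passing to the limit. For $\Prob\yy$ it checks directly the closed-form identity
\[
\sum_{\step\in\steps}\Prob\yy(\Step_1=\step)=\frac{2t_c(\nu)}{u_c(\nu)}\bigl(1+Z_0(u_c(\nu))\bigr)=1
\]
via the explicit parametrizations. For $\Prob\py$ (and separately for $\Prob_0$, which your plan does not address) it turns the family of identities $\sum_\step\Prob\py(\Step_1=\step)=1$ into a single generating-function identity in $u$, and then verifies that identity by expanding the master functional equation \eqref{eq:masterFE} at $v=u_c$ and reading off the coefficient of $(1-v/u_c)^{4/3}$, which is exactly $A(u)-a_0$. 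This analytic route uses the singular expansion \eqref{eq:localZuv} rather than any uniform ratio bound, and that is what makes it go through.
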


\begin{proof}
The fact that $\Prob\yy$ is a probability distribution is straightforward to check: we have \begin{displaymath}
\sum_{\step\in\steps}\Prob\yy(\Step_1=\step)=2\frac{t_c(\nu)}{u_c(\nu)}\left(1+Z_0(u_c(\nu))\right)=1
\end{displaymath}by an explicit computation via Maple \cite{CAS3}, using the data of Table \ref{tab:Pinfty}. 

The proof that $\Prob\py$ defines a probability distribution is a bit more cumbersome, but has similar idea as in \cite{CT20}. First, we notice that $\sum_{\step\in\steps}\Prob_p(\Step_1=\step)=1$ for all $p\ge 1$ is equivalent to $\sum_{p\ge 1}a_pu^p=\sum_{p\ge 1}\sum_{\step\in\steps}a_p\Prob_p(\Step_1=\step)u^p$ as an equation of formal power series. The right hand side of the latter equation simplifies as \begin{displaymath}
t\left((1+Z_0(u_c))\frac{A(u)-a_0}{u_c}+(1+Z_0(u))\Delta_u A(u)+\frac{a_0}{u_c}(Z(u,u_c)-Z_0(u_c))-a_1\right)
\end{displaymath}where $\Delta_u A(u):=\frac{A(u)-a_0}{u}$. The left hand side is equal to $A(u)-a_0$, which is the coefficient of $\left(1-\frac{v}{u_v}\right)^{4/3}$ of the expansion of $Z(u,v)$ at $v=u_c$ as shown in Equation \eqref{eq:localZuv}. Expanding the first equation of \eqref{eq:masterFE} around $v=u_c$ finally justifies the above equations.

We still need to show that $\Prob_0$ defines a probability. For that purpose, we note that \begin{displaymath}
\sum_{\step\in\steps}\Prob_0(\Step_1=\step)=t_c(\nu)\nu\left(\frac{a_1(\nu)}{a_0(\nu)}+\frac{1}{u_c(\nu)}(1+2Z_0(u_c(\nu)))\right).
\end{displaymath} The expression on the right hand side is nothing but the coefficient of $\left(1-\frac{u}{u_c}\right)^{4/3}$ in the expansion of the right hand side of the second equation of \eqref{eq:masterFE} expanded around $u=u_c$, just divided by $a_0$. Hence, we are done.

\end{proof}

\subsection{Perimeter fluctuations} Applying the one-step peeling operation changes the length of the finite boundary. More precisely, after revealing $\Step_1=\step$ and erasing the edge $e$, and filling in a hole if the map is divided in two parts, the resulting map has both a different boundary condition and length. Let $(P_1,Q_1)$ be the boundary condition of the triangulation after this one-step peeling operation. If $p,q<\infty$, we define $(X_1,Y_1)=(P_1-p,Q_1-q)$. Observe that $(X_1,Y_1)$ actually only depends on the peeling step $\Step_1$ together with the rule that chooses the hole for the new unexplored part. Thus, $X_1$ and $Y_1$ can be seen as the relative changes of the positive and the negative boundary length, respectively, and hence can be defined also for $p=\infty$ or $q=\infty$. Tables \ref{tab:Pinfty}-\ref{tab:P0} summarize their values in the various cases of the peeling, where the rule for choosing the unexplored part depends on whether the peeling has a target or not. For more precise definitions, see Section~\ref{sec:sspeeling}.

It turns out that the expectation of $X_1$ (or equally $Y_1$) under $\Prob_\infty$ behaves like an order parameter. This is verified in the following lemma:

\begin{lemma}\label{lem:orderparam}
Under $\Prob_\infty$, the expectation of the perimeter variation is \begin{displaymath}
\EE_\infty(X_1)=\EE_\infty(Y_1)=\begin{cases}0\qquad\text{if}\quad 1<\nu<\nu_c\\
\frac{11-5\sqrt{7}}{12\sqrt{7}-48}=:\mu>0\qquad\text{if}\quad \nu=\nu_c
\end{cases}.
\end{displaymath}
\end{lemma}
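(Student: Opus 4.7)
The plan is to reduce the claim to an explicit evaluation of $Z_0'(u_c(\nu))$ using the probability table, and then to compute this derivative from the rational parametrizations of Section~\ref{sec:combi}. By the symmetry of the generating series under the color swap $\+\leftrightarrow\<$, the laws of $X_1$ and $Y_1$ coincide under $\Prob_\infty$, so it suffices to compute $\EE_\infty(X_1)$. Summing the contributions in Table~\ref{tab:Pinfty} and using the identity $\sum_{k\ge 0}k\,z_{k+1,0}u^{k+1}=uZ_0'(u)-Z_0(u)$, one obtains
\begin{equation*}
\EE_\infty(X_1) \;=\; \frac{t_c(\nu)}{u_c(\nu)}\Bpar{1+Z_0(u_c(\nu))-u_c(\nu)\,Z_0'(u_c(\nu))} \;=\; \frac{1}{2}-t_c(\nu)\,Z_0'(u_c(\nu)),
\end{equation*}
where the second equality invokes $\frac{t_c(\nu)}{u_c(\nu)}(1+Z_0(u_c(\nu)))=\frac{1}{2}$, which is exactly the statement (established in the previous lemma) that the probabilities of Table~\ref{tab:Pinfty} sum to one.

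For $\nu=\nu_c$, the critical point $H=1$ of $\hat{u}$ has multiplicity $2$ (Lemma~\ref{lem:ucritc}), so $\hat{u}(H)-u_c$ has a triple zero. A direct differentiation of the explicit formula for $\hat{Z}_0(H)$ reveals $\hat{Z}_0'(1)=\hat{Z}_0''(1)=0$, which guarantees that $Z_0'(u_c)$ is finite and, by a Taylor expansion, equals $\hat{Z}_0'''(1)/\hat{u}'''(1)$. Substituting the explicit values yields $t_c\,Z_0'(u_c)=(7-\sqrt 7)/12$, and a short rationalization confirms $\frac{1}{2}-(7-\sqrt 7)/12 = (\sqrt 7-1)/12 = (11-5\sqrt 7)/(12\sqrt 7-48) = \mu$, as claimed.

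For $1<\nu<\nu_c$, the critical point $H_c=S$ is a simple zero of $\partial_H\hat{u}(\,\cdot\,;S)$ (Lemma~\ref{lem:ucrith}), and the singular exponent of $Z_0(u;\nu)$ at $u_c(\nu)$ is $3/2$, consistent with $z_{p,0}\sim p^{-5/2}$ and forcing $\partial_H\hat{Z}_0(H;S)|_{H=S}=0$ (a fact I would verify algebraically from the rational expression). L'H\^opital then gives $Z_0'(u_c(\nu))=\partial_H^2\hat{Z}_0(S;S)/\partial_H^2\hat{u}(S;S)$, and a symbolic computation shows that $2\,t_c(\nu)\,Z_0'(u_c(\nu))\equiv 1$ as a rational identity in $S\in(S_{\text{perc}},S_c)$, giving $\EE_\infty(X_1)=0$.

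The main technical obstacle will be the algebraic identity $\partial_H\hat{Z}_0(H;S)|_{H=S}\equiv 0$ on the physical interval of $S$, together with the subsequent identity $2\,t_c(\nu)\,Z_0'(u_c(\nu))\equiv 1$. Since the rational parametrizations of $\hat{Z}_0(H;S)$ and $\hat{u}(H;S)$ are both lengthy, this verification is best handled by computer algebra in the spirit of the Maple worksheet~\cite{CAS3}.
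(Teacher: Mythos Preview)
Your proposal is correct and takes essentially the same approach as the paper: both derive the formula $\EE_\infty(X_1)=\frac{t_c(\nu)}{u_c(\nu)}(1+Z_0(u_c(\nu)))-t_c(\nu)Z_0'(u_c(\nu))$ from Table~\ref{tab:Pinfty} and evaluate it via the rational parametrizations, with the final check delegated to computer algebra. Your additional simplification $\frac{t_c}{u_c}(1+Z_0(u_c))=\tfrac12$ from the normalization of $\Prob_\infty$, and your explanation of how the vanishing of $\partial_H\hat Z_0$ (and $\partial_H^2\hat Z_0$ at $\nu_c$) at the critical point makes $Z_0'(u_c)$ computable by L'H\^opital, are helpful expository details that the paper simply subsumes into the Maple reference~\cite{CAS3}.
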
 

\begin{proof}
From Table \ref{tab:Pinfty}, we simply compute \begin{displaymath}
\EE\yy(X_1)=\frac{t_c(\nu)}{u_c(\nu)}-\sum_{k=1}^\infty k\cdot\frac{t_c(\nu)}{u_c(\nu)}z_{k+1}(\nu)u_c(\nu)^{k+1}=t_c(\nu)\left(\frac{1}{u_c(\nu)}(1+Z_0(u_c(\nu)))-Z_0'(u_c(\nu))\right).
\end{displaymath}This is done explicitly for $\nu\in(1,\nu_c)$ and $\nu=\nu_c$, respectively, in the Maple worksheet \cite{CAS3}. The claim for $Y_1$ follows by symmetry.
\end{proof}

\begin{remark}[Pure gravity]
Taking the limit $\nu\searrow 1$ yields $\Prob\yy(\Step_1=\cp)=\Prob\yy(\Step_1=\cm)=\frac{1}{2\sqrt{3}}$, which coincides with the corresponding probabilities for a peeling process on the type I UIHPT following a critical site percolation interface. Moreover, if $\mathcal{E}$ is the number of boundary edges swallowed by one peeling step, its expectation is \begin{displaymath}\EE_\infty(\mathcal{E})=2\sum_{k=1}^\infty k\cdot\frac{t_c(\nu)}{u_c(\nu)}z_{k+1}(\nu)u_c(\nu)^{k+1}=2\left(\frac{t_c(\nu)}{u_c(\nu)}-\EE_\infty(X_1)\right),
\end{displaymath}which tends to $2\frac{t_c(1)}{u_c(1)}=\frac{1}{\sqrt{3}}$ as $\nu\searrow 1$. This value corresponds to the expectation of the number of edges swallowed by an exploration of the critical site percolation on the UIHPT (see \cite{ACpercopeel}). Lemma \ref{lem:orderparam} essentially tells that the behavior of the peeling process in the high-temperature regime $1<\nu<\nu_c$ is similar to the behavior of the peeling process of the UIHPT.
\end{remark}

\subsection{The peeling process}\label{sec:sspeeling} For a given Ising-triangulation $\tmap$, the peeling is a deterministic exploration of a fixed map, driven by a \emph{peeling algorithm} $\mathcal{A}$. We assume the following: if the Ising-triangulation $\tmap$ has a bicolored boundary, the algorithm $\mathcal{A}$ chooses the edge at the junction of the $\<$ and $\+$ boundary segments on the boundary of the explored map, such that the starting point of the exploration is the root edge $\rho$. Since the deletion of that edge and the exposure of the adjacent face preserve the Dobrushin boundary condition, we say that $\mathcal{A}$ is \emph{Dobrushin-stable}. Otherwise, if the boundary of $\tmap$ is monochromatic, the algorithm $\mathcal{A}$ chooses the leftmost edge from the boundary with endpoints at minimal distance from the root, in the map explored thus far.

The peeling process along the interface $\iroot$ is constructed by iterating this face-revealing operation, yielding an increasing sequence $\nseq \emap$ of \emph{explored maps}. For each $\emap_n$, there is a unique unexplored map $\umap_n$, and a simple path of edges separating them from each other, which is called the \emph{frontier} and denoted by $\partial\emap_n$. Together, they compose the Ising-triangulation $\tmap$. More precisely, we set $\emap_0$ to be the boundary of $\tmap$, and $\emap_n$ is composed of $\emap_{n-1}$ and the revealed triangle $\Step_n$ together with a swallowed region. If there is no triangle to reveal, we set $\emap_n=\tmap$. The swallowed region is empty if $\Step_n\in\{\cp,\cm\}$. Otherwise, $\Step_n$ divides the unexplored part $\umap_{n-1}$ into two holes, and we choose $\umap_n$ according to the following rules: if the edge $\rho^\dagger$ is taken into account as a \emph{target}, then $\umap_n$ is the part containing $\rho^\dagger$, and the other region is filled. Otherwise, if the peeling is untargeted, we choose $\umap_n$ to be the region containing more $\<$ edges. In case of a tie, we choose the leftmost unexplored region from the root $\rho$. The root edge for each pair $(\emap_n,\umap_n)$ is the edge chosen by $\algo$, denoted by $\rho_n$. Since the setting is analogous to the one of spins on the faces, we refer to the works \cite{CT20} and \cite{CT21} for further details.

For practical purposes which will be clear later, we make the following choice for the peeling process: When considering the convergences $\prob_{p,q}^\nu\cv{q}\prob_p^\nu\cv{p}\prob\yy^\nu$, the peeling is chosen without target, which is well compatible with the infinite $\<$ boundary in particular when $\nu=\nu_c$. When $\prob_{p,q}^\nu\cv{p,q}\prob\yy^\nu$, we choose the peeling with the target $\rho^\dagger$, which ensures complete symmetry between the $\+$ and $\<$ spins.

At this point, we generalize the convergence of the one-step peeling of the previous subsection to the convergence of the whole peeling process as follows: First, we extend $\Prob_{p,q}$ as the law of $\nseq[1]\Step$ when $\bt$ is a Boltzmann-distributed Ising-triangulation of the $(p,q)$-gon. Then, it is enough to show that $\Prob_{p,q}(\Step_1=\step_1,\cdots,\Step_n=\step_n)$ converges for every $n\ge 1$ in any given regime of $(p,q)$, since the peeling process lives in a discrete space. This is done via the spatial Markov property of $\Prob_{p,q}$. We keep it informal, since it is already treated well in the works \cite{CT20}, \cite{CT21} and \cite{AMS18}, and just state the following:
\begin{proposition}\label{prob:peelingconv}
For all $\nu\in(1,\nu_c)$, the one-step peeling laws $\Prob\py$ and $\Prob\yy$ can be extended to laws of peeling processes on $\bts_\infty$ such that the following convergences holds weakly:\begin{align*}
& \Prob_{p,q}\cv[]q\Prob_p\cv[]p\Prob_\infty \\
& \Prob_{p,q}\cv[]{p,q}\Prob\yy\qquad\text{while}\quad\frac{q}{p}\in [\lambda_{\min},\lambda_{\max}].
\end{align*}
\end{proposition}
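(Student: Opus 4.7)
The plan is to reduce Proposition \ref{prob:peelingconv} to a Markov chain argument driven by the spatial Markov property, in the same spirit as \cite[Section 5]{CT20} and \cite[Section 5]{CT21}. Under $\Prob_{p,q}$, conditionally on the first $n$ peeling events $\Step_1,\dots,\Step_n$, the explored map $\emap_n$, the perimeter $(P_n,Q_n)$ of $\umap_n$, and the number of monochromatic edges inside $\emap_n$ are all determined. Since the Boltzmann weight factorizes over $\emap_n$ and $\umap_n$, conditionally on $\emap_n$ the unexplored part $(\umap_n,\sigma|_{\umap_n})$ is itself a Boltzmann Ising-triangulation of the $(P_n,Q_n)$-gon (or a monochromatic one if $P_n$ or $Q_n$ equals $0$). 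Because $\algo$ selects its edge deterministically from $\emap_n$, this yields the telescoping identity
\[
\Prob_{p,q}(\Step_1=\step_1,\dots,\Step_n=\step_n)=\prod_{i=1}^{n}\Prob_{P_{i-1},Q_{i-1}}(\Step_1=\step_i),
\]
where $(P_0,Q_0)=(p,q)$ and $(P_i,Q_i)$ is read off recursively from $\step_1,\dots,\step_i$ via the variations $(X_i,Y_i)$ listed in Tables \ref{tab:Ppq}-\ref{tab:P0}.

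The convergence statements then reduce to taking limits in each factor. For $\Prob_{p,q}\to\Prob_p$ as $q\to\infty$, the one-step convergence $\Prob_{P_{i-1},Q_{i-1}}(\Step_1=\step_i)\to\Prob_{P_{i-1}}(\Step_1=\step_i)$ is exactly the content of Table~\ref{tab:Ppinfty}, obtained from the one-step asymptotics \eqref{eq:onestepasympt}. An induction on $n$ — only finitely many factors differ from $1$ since $\steps$ is countable — gives finite-dimensional convergence. The analogous argument, based on the asymptotics $a_p\sim b\,u_c^{-p}p^{-4/3}/\Gamma(-1/3)$, yields $\Prob_p\to\Prob_\infty$ as $p\to\infty$. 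For the diagonal regime $\Prob_{p,q}\to\Prob_\infty$ with $q/p\in[\lambda_{\min},\lambda_{\max}]$, the diagonal asymptotics \eqref{eq:diagasymptc}, \eqref{eq:diagasympth} ensure convergence of every factor $\Prob_{P_{i-1},Q_{i-1}}(\Step_1=\step_i)$ along the explored trajectory, since after $n$ steps the perimeters still satisfy $Q_{i-1}/P_{i-1}\to\lambda$ in the limit.

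Since $\steps^{\mathbb{N}}$ equipped with the product topology is Polish and the laws are determined by their cylinder probabilities, the termwise convergences above characterize $\Prob_p$ and $\Prob_\infty$ as probability measures on sequences of peeling events. To upgrade to convergence in $\overline{\bts}$, one reconstructs the limiting Ising-triangulation by gluing: given $(\Step_n)_{n\ge 1}$, build the nested sequence $(\emap_n)$ by iterating the one-step decomposition and filling each swallowed region with an independent finite Boltzmann Ising-triangulation of the appropriate perimeter. The union $\emap_\infty=\bigcup_n\emap_n$ is a one-ended vertex-bicolored planar map whose external face has infinite degree, hence belongs to $\bts_\infty$, provided the peeling does not terminate in finite time. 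Almost-sure non-termination under $\Prob_\infty$ follows from Lemma~\ref{lem:orderparam}: at $\nu=\nu_c$ the positive drift $\mu>0$ of $(X_n)$ makes the perimeter diverge, while for $1<\nu<\nu_c$ centered heavy-tailed increments prevent simultaneous exhaustion of both sides by a standard recurrence-to-infinity argument. The analogue for $\Prob_p$ uses that the $\<$ side has infinite boundary from the start.

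The main technical obstacle is the bookkeeping of regime changes along the peeling trajectory. Whenever $Q_n$ (say) hits $0$, the governing law switches from $\Prob_{P_n,Q_n}$ to $\Prob_{P_n}$ (with an extra factor $\nu$ from the monochromatic edge weight), and one must verify that the convergences of Tables \ref{tab:Ppq}-\ref{tab:P0} are compatible at these boundaries: namely, that $\Prob_{p,1}(\Step_1=\step)\to\Prob_p(\Step_1=\step)$ via the same one-step asymptotics \eqref{eq:onestepasympt}, and that the corresponding product identities translate into the required joint convergences. This compatibility is a formal verification, carried out \emph{mutatis mutandis} as in \cite[Section 5]{CT20} and \cite[Section 5]{CT21}, to which we defer; the simpler one-step laws of Tables \ref{tab:Pinfty}-\ref{tab:P0} make the present case in fact lighter than its face-spin counterparts.
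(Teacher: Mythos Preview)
Your core argument---the telescoping product via the spatial Markov property together with one-step convergence of each factor---is exactly the approach the paper sketches just before the proposition (``it is enough to show that $\Prob_{p,q}(\Step_1=\step_1,\dots,\Step_n=\step_n)$ converges for every $n\ge 1$ \dots\ This is done via the spatial Markov property''), and the paper itself defers all details to \cite{CT20}, \cite{CT21}, \cite{AMS18}. So the strategy is correct and matches the paper.

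That said, several peripheral statements in your write-up are confused and should be cleaned up. First, the sentence ``only finitely many factors differ from $1$ since $\steps$ is countable'' is a non sequitur: the product has exactly $n$ factors, none of which is generically $1$; convergence of a finite product simply follows from convergence of each factor. Second, your regime-change paragraph conflates two distinct objects: when the unexplored boundary becomes monochromatic the governing law is $\Prob_{P_n,0}$ (the finite monochromatic law from Table~\ref{tab:Ppq}), not $\Prob_{P_n}$, which denotes the $q\to\infty$ limit; and the ``compatibility'' you describe as ``$\Prob_{p,1}(\Step_1=\step)\to\Prob_p(\Step_1=\step)$'' is neither well-posed (no limit is being taken) nor what is needed---the relevant input is $\Prob_{0,q'}(\Step_1=\step)\to\Prob_0(\Step_1=\step)$ as $q'\to\infty$, which is precisely Table~\ref{tab:P0}. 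Third, the paragraph beginning ``To upgrade to convergence in $\overline{\bts}$'' oversteps the proposition: the weak convergence asserted here is of the laws of $(\Step_n)_{n\ge 1}$ on $\steps^{\mathbb N}$, and the construction of the limiting triangulation in $\bts_\infty$ (including whether $\bigcup_n\emap_n$ covers the half plane) is the content of Section~\ref{sec:locallimit}, not of this proposition. In particular, your ``non-termination'' argument via drift is misplaced: under $\Prob_\infty$ the peeling is an i.i.d.\ sequence on an infinite boundary and trivially never terminates; the drift of Lemma~\ref{lem:orderparam} governs whether the exploration \emph{covers} the half plane, which is a separate issue treated later.
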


For $p,q<\infty$, the peeling process $\nseq{\Step}$ defines the associated perimeter processes $\nseq{P_n,Q}$ as the perimeters of the unexplored maps $\umap_n$, and we set $(X_n,Y_n):=(P_n-p,Q_n-q)$. Since $X_n$ and $Y_n$ actually only depend on the peeling steps $\Step_k$ up to time $n$, they can be extended to $q=\infty$ and $p=q=\infty$, respectively, describing the relative perimeter fluctuations in $n$ peeling steps. This is completely similar as in the works \cite{CT20}, \cite{CT21}. Defined this way, the peeling process satisfies the \emph{spatial Markov property}, and the processes $\nseq{P_n,Q}$ and $\nseq{X_n,Y}$ are Markovian, too:

\begin{corollary}
Under $\Prob_{p,q}$ and conditional on $(\Step_k)_{1\le k\le n}$, the sequence $(\Step_{n+k})_{k\ge 0}$ has the law $\Prob_{P_n,Q_n}$. In particular, $\nseq{P_n,Q}$ is a two-dimensional Markov chain.\\
Under $\Prob\py$ and conditional on $(\Step_k)_{1\le k\le n}$, the sequence $(\Step_{n+k})_{k\ge 0}$ has the law $\Prob\py[P_n]$. In particular, $\nseq P$ is a Markov chain.\\
Under $\Prob\yy$, the sequence $\nseq \Step$ is i.i.d. In particular, $\nseq{X_n,Y}$ is a two-dimensional random walk.
\end{corollary}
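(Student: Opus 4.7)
The plan is to establish the spatial Markov property first by induction on $n$, and then deduce the Markov chain properties of the perimeter processes as a direct corollary.

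The backbone of the argument is the following decomposition: given an Ising-triangulation $(\tmap,\sigma)\in\mathcal G^\sigma_{p,q}$ together with the data of the first $n$ peeling events $(\Step_1,\ldots,\Step_n)=(\step_1,\ldots,\step_n)$, the triangulation $\tmap$ splits uniquely into an explored part $\emap_n$, a (possibly empty) collection of swallowed finite Ising-triangulations with Dobrushin or monochromatic boundaries, and the unexplored map $\umap_n$, which is an Ising-triangulation of the $(P_n,Q_n)$-gon. Conversely, once $(\step_1,\ldots,\step_n)$ is fixed, any such tuple of pieces with matching boundary conditions reconstructs a unique $(\tmap,\sigma)$. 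Crucially, the Boltzmann weight $\nu^{m(\tmap,\sigma)}t_c(\nu)^{|E(\tmap)|}$ factorizes as a product of the same weights over the explored portion, each swallowed region, and the unexplored map, because edges and monochromatic edges are partitioned disjointly among these pieces (modulo the boundary edges on the frontier, which carry no Ising weight since their endpoint spins are determined by the boundary condition).

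Summing the product of Boltzmann weights over all valid fillings of the swallowed regions produces a deterministic function $W(\step_1,\ldots,\step_n)$ depending only on the peeling sequence, and summing over the unexplored map gives $z_{P_n,Q_n}(\nu)$. Therefore
\begin{equation*}
\Prob_{p,q}\bigl(\Step_1=\step_1,\ldots,\Step_n=\step_n,\,\umap_n=\map\bigr)
\ =\ \frac{W(\step_1,\ldots,\step_n)\,\nu^{m(\map)}t_c^{|E(\map)|}}{z_{p,q}(\nu)}
\end{equation*}
for every Ising-triangulation $\map\in\mathcal G^\sigma_{P_n,Q_n}$. Marginalizing over $\map$ and taking ratios yields $\Prob_{p,q}(\umap_n=\map\mid\Step_1,\ldots,\Step_n)=\nu^{m(\map)}t_c^{|E(\map)|}/z_{P_n,Q_n}(\nu)$, i.e.\ $\umap_n$ is Boltzmann-distributed on $\mathcal G^\sigma_{P_n,Q_n}$ conditional on the past. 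Applying the one-step analysis (Table~\ref{tab:Ppq}) to $\umap_n$ then shows that $\Step_{n+1}$ has law $\Prob_{P_n,Q_n}$ under this conditioning. Iterating gives the full spatial Markov statement for any future increment $(\Step_{n+k})_{k\ge 0}$.

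The argument for $\Prob\py$ and $\Prob\yy$ is identical \emph{mutatis mutandis}, provided the laws of the peeling processes on $\bts_\infty$ are the ones constructed in Proposition~\ref{prob:peelingconv}. Indeed, by that proposition the laws arise as weak limits of $\Prob_{p,q}$, and since the identity above is a finite-dimensional marginal statement about the discrete chain $\nseq{\Step}$, passage to the limit preserves it; alternatively, one can construct $\Prob\py$ and $\Prob\yy$ directly from the factorization, the infinite unexplored map being a half-plane triangulation on which the same one-step laws apply. The Markov chain property of $\nseq{P_n,Q}$ under $\Prob_{p,q}$, of $\nseq P$ under $\Prob\py$, and the fact that $\nseq{X_n,Y}$ is an i.i.d.\ random walk under $\Prob\yy$ are then immediate: $(P_{n+1},Q_{n+1})$ is a deterministic function of $(P_n,Q_n)$ and $\Step_{n+1}$, and the conditional law of $\Step_{n+1}$ depends only on $(P_n,Q_n)$ (respectively only on $P_n$, respectively not at all).

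The main technical point to watch is bookkeeping the swallowed regions: one must verify that the same Boltzmann weights $t_c$ per edge and $\nu$ per monochromatic edge assemble correctly across the decomposition, including the special case where the peeled triangle has its apex on the boundary and thereby detaches a monochromatic sub-disk. This is exactly the content of the derivation of Tables~\ref{tab:Ppq}--\ref{tab:P0} at $n=1$, and the induction step is a formal consequence once the bijective decomposition is set up. No new combinatorics beyond that of Section~\ref{sec:combi} is required.
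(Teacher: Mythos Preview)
Your argument is correct and is precisely the standard spatial Markov property argument via factorization of the Boltzmann weight over the decomposition $(\emap_n,\text{swallowed regions},\umap_n)$. The paper itself does not give a proof of this Corollary: it simply remarks that the peeling process satisfies the spatial Markov property, refers to \cite{CT20}, \cite{CT21} and \cite{AMS18} for the details, and states the Corollary as an immediate consequence. Your write-up therefore supplies exactly the justification the paper omits, and matches the approach of those references.
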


\section{Asymptotic properties of the perimeter processes}\label{sec:asympt}

By the asymptotics of $z_{p,q}$, it is easy to see that the distribution of $X_1$ and $Y_1$ under $\Prob\yy$ is heavy-tailed. More precisely, by equation \eqref{eq:onestepasympt} and its high-temperature counterpart, \begin{displaymath}
\Prob\yy(X_1=-k)=\Prob\yy(Y_1=-k)=\Prob\yy(\Step_1=\rp[k])=\frac{t_c(\nu)}{u_c(\nu)}z_{k+1,0}(\nu)u_c(\nu)^{k+1}\end{displaymath}\begin{displaymath}\underset{k\to\infty}\sim\begin{cases}\frac{t_c}{u_c}\frac{a_0}{\Gamma(-4/3)}k^{-7/3}\qquad (\nu=\nu_c)\\
\frac{t_c(\nu)}{u_c(\nu)}\frac{a_0(\nu)}{\Gamma(-3/2)}k^{-5/2}\qquad (1<\nu<\nu_c).\end{cases}
\end{displaymath}Above, the constant $a_0(\nu)$ can be computed from the asymptotic expansion of $Z_0(u;\nu)$; see the Maple worksheet \cite{CAS3} for the latter. It follows that $X_1$ and $Y_1$ belong to the domain of attraction of a totally asymmetric, spectrally negative stable distribution. Thus, the random walks $\nseq{X}$ and $\nseq{Y}$ have a scaling limit which is a stable Lévy process of index $4/3$ (when $\nu=\nu_c$) or $3/2$ (when $1<\nu<\nu_c$) with negative jumps. Moreover, although the random walks $\nseq{X}$ and $\nseq{Y}$ are not independent, they still have a joint scaling limit. This is proven in \cite{CurKPZ} and \cite{CT20} in similar settings, and the proofs extend to this case without an effort.

\begin{proposition}\label{prop:stable}
(1) For $\nu=\nu_c$, we have \begin{equation*}
\frac1{n^{3/4}}\mb({	 X_{\floor{nt}} -\mu nt,
					 Y_{\floor{nt}} -\mu nt }_{t\ge 0}
\cv[]n \mb({\mathcal X_t,\mathcal Y_t}_{t\ge 0} \,,
\end{equation*}
where $\mathcal X$ and $\mathcal Y$ are two independent and identically distributed spectrally negative $\frac43$-stable L\'evy processes of L\'evy measure $\frac{c_{\nu_c}}{\abs x^{7/3}}\idd{x<0}\dd x$, where $c_{\nu_c}$ is an explicit constant.
\\
(2) For $1<\nu<\nu_c$, we have \begin{equation*}
\frac1{n^{2/3}}\mb({	 X_{\floor{nt}} ,
					 Y_{\floor{nt}} }_{t\ge 0}
\cv[]n \mb({\tilde{\mathcal X}_t,\tilde{\mathcal Y}_t}_{t\ge 0} \,,
\end{equation*}
where $\tilde{\mathcal X}$ and $\tilde{\mathcal Y}$ are two i.i.d. spectrally negative $\frac32$-stable L\'evy processes of L\'evy measure $\frac{c_{\nu}}{\abs x^{5/2}}\idd{x<0}\dd x$, where $c_{\nu}$ is an explicit constant depending on $\nu$.
\\
Both of the above convergences take place in distribution w.r.t. the Skorokhod topology in the space of càdlàg functions.
\end{proposition}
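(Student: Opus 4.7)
The plan is to derive Proposition~\ref{prop:stable} as a functional invariance principle for the i.i.d.\ two-dimensional random walk $(X_n, Y_n)$ under $\Prob\yy$. By the last corollary of Section~\ref{sec:sspeeling}, the increments $(X_n - X_{n-1}, Y_n - Y_{n-1})$ are i.i.d.\ with joint law given by Table~\ref{tab:Pinfty}. From that table, $\Prob\yy(X_1 = -k) = \Prob\yy(\Step_1 = \rp) = \frac{t_c(\nu)}{u_c(\nu)} z_{k+1,0}(\nu)\, u_c(\nu)^{k+1}$, and the one-step asymptotics \eqref{eq:onestepasympt} (together with their high-temperature counterpart) yield $\Prob\yy(X_1 = -k) \sim c \, k^{-7/3}$ at $\nu = \nu_c$ and $\sim c_\nu \, k^{-5/2}$ for $1 < \nu < \nu_c$; the analogous asymptotics hold for $Y_1$ by the $\+/\<$ symmetry of Table~\ref{tab:Pinfty}. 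Hence $X_1 - \EE\yy X_1$ and $Y_1 - \EE\yy Y_1$ lie in the domain of normal attraction of a spectrally negative stable law of index $\alpha$, with $\alpha = 4/3$ at criticality and $\alpha = 3/2$ in the high-temperature regime; the stated normalizations $n^{3/4}$ and $n^{2/3}$ are precisely $n^{1/\alpha}$.

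The marginal invariance principle then follows from the univariate Skorokhod--Donsker theorem for sums of heavy-tailed i.i.d.\ random variables (as in Jacod--Shiryaev, Thm.~VII.3.4): after centering by $\mu n t$ at $\nu = \nu_c$ (where $\EE\yy X_1 = \mu$ by Lemma~\ref{lem:orderparam}) and by $0$ in the subcritical case, $n^{-1/\alpha}(X_{\lfloor nt \rfloor} - nt\, \EE\yy X_1)$ converges in the Skorokhod topology to the asserted spectrally negative $\alpha$-stable L\'evy process, whose L\'evy measure is proportional to $|x|^{-\alpha - 1}\, dx$ on $(-\infty,0)$. The same statement holds for $Y$ by symmetry, and the explicit constants $c_{\nu_c}, c_\nu$ are read off from the tail prefactor in \eqref{eq:onestepasympt} via the standard correspondence between tail mass and L\'evy measure.

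The key structural observation for the joint convergence with an \emph{independent} limit is that each peeling event in Table~\ref{tab:Pinfty} produces a nonzero jump in at most one of the two coordinates: the events $\rp$ move only $X$, the events $\lp$ move only $Y$, and $\cp, \cm$ contribute the bounded unit shifts $(1,0)$ and $(0,1)$ respectively. In particular, no single step produces a jump of size $\ge 2$ in both coordinates simultaneously. Consequently, the rescaled bivariate jump measure $n \cdot \Prob\yy\!\left( n^{-1/\alpha}(X_1 - \EE\yy X_1,\, Y_1 - \EE\yy Y_1) \in \cdot\, \right)$ converges vaguely on $\R^2 \setminus \{0\}$ to a L\'evy measure carried entirely by the two negative coordinate half-axes; the limiting bivariate L\'evy process therefore has no common jumps and is the product of two independent stable marginals. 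The $\cp, \cm$ steps contribute only a centered bounded walk of fluctuation order $n^{1/2}$, negligible at scale $n^{1/\alpha}$. The main obstacle is thus exactly the passage from marginal to joint stable convergence, but the disjoint-axis support of the jump measure places us in the framework developed in \cite{CurKPZ} and \cite{CT20}, whose arguments transfer verbatim to the present table; the remaining work is the explicit identification of the L\'evy constants from \eqref{eq:onestepasympt}.
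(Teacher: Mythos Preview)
Your proposal is correct and follows essentially the same route as the paper: the paper does not give a self-contained proof but observes that the tail asymptotics place $X_1,Y_1$ in the domain of attraction of a spectrally negative stable law, and then defers the joint convergence with independent limiting components to \cite{CurKPZ} and \cite{CT20}, noting that the argument extends without effort. Your write-up makes the key structural point (that under $\Prob\yy$ each increment is supported on one coordinate axis, so the limiting bivariate L\'evy measure has no common jumps) explicit, which is exactly the mechanism behind the cited arguments.
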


\begin{remark}
In \cite{CT20}, when the spins are on the faces, the corresponding processes $\mathcal X$ and $\mathcal Y$ are not identically distributed, due to the fact the peeling process there is not symmetric with respect to the spins. 
\end{remark}
\newcommand{\tauxy}{\tau^\epsilon_x}
\newcommand{\barrier}[1][x]{#1 f_\epsilon}
Next, we move on to gather the asymptotic properties of the perimeter processes $P_n$ and $Q_n$, both under $\Prob\py$ and $\Prob_{p,q}$. What we are really interested in is the behavior of the hitting times of them in a neighborhood of the origin. Thus, for $m\ge 0$, define 
\begin{equation}
 T_m=\ \inf\Set{n\ge 0}{P_n\wedge Q_n\le m}.
\end{equation}Note that under $\Prob\py$, we have $Q_n=\infty$ almost surely. This definition makes sense if the peeling is with the target $\rho^\dagger$ when considering $\Prob_{p,q}$, and otherwise untargeted. We will see later that this hitting time corresponds approximately to the length of the main Ising interface imposed by the Dobrushin boundary conditions and followed by the peeling exploration. Since the interface behavior in the case of critical site percolation is already well-understood (\cite{ACpercopeel}), and our peeling process has a similar behavior at $\nu\in(1,\nu_c)$, we are mostly interested in the critical temperature $\nu=\nu_c$. There, the heavy-tailed distribution of $(X_1,Y_1)$ at the limit imposes a \emph{large jump} phenomenon, which was first discovered in \cite{CT20}, and extended to the diagonal limit $p,q\to\infty$ in \cite{CT21}. For that, fix $\epsilon>0$ and let \begin{equation*}
\barrier[](n) = \mb({ (n+2)(\log(n+2))^{1+\epsilon} }^{3/4}.
\end{equation*}Define the stopping time 

\begin{equation*}
\tauxy = \inf\Set{n\ge 0}{\abs{X_n-\mu n} \vee \abs{Y_n-\mu n} > \barrier(n) }\,.
\end{equation*}
where $x>0$.

\begin{lemma}[One jump to zero]\label{lem:one jump diag}
Assume $\nu=\nu_c$. Then for all $\epsilon>0$,
\begin{equation*}
\lim_{x,m \to\infty} \limsupp \Prob\py (\tauxy< T_m) = 0.
\end{equation*}
Moreover, for $0<\lambda_{\min}\leq 1\leq\lambda_{\max}<\infty$,
\begin{equation*}
\lim_{x,m \to\infty} \limsup_{p,q\to\infty} \Prob_{p,q} (\tauxy<T_m) = 0\quad\text{while}\quad\frac{q}{p}\in[\lambda_{\min},\lambda_{\max}].
\end{equation*}
\end{lemma}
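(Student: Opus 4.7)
The plan is to apply the one-big-jump strategy, adapted to two dimensions as in \cite{CT21}. The barrier $f_\epsilon(n)=((n+2)(\log(n+2))^{1+\epsilon})^{3/4}$ sits slightly above the typical fluctuation scale $n^{3/4}$ predicted by Proposition~\ref{prop:stable}, with the crucial property $f_\epsilon(n)^{-4/3}\sim c(n(\log n)^{1+\epsilon})^{-1}$, making the per-step heavy-tail contribution summable in $n$. Thus a crossing of the barrier before $T_m$ must, heuristically, be produced by a single anomalously large jump rather than by accumulated small fluctuations.

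First I would bound the probability under $\Prob\yy$. From \eqref{eq:onestepasympt} we have $\Prob\yy(-X_1\ge k)\sim c_1 k^{-4/3}$ and, by the $\+/\<$ spin-flip symmetry visible in Table~\ref{tab:Pinfty}, the same tail for $-Y_1$. A union bound over steps gives
\begin{equation*}
\Prob\yy\Bpar{\exists n\ge 1:\abs{X_n-X_{n-1}-\mu}\vee\abs{Y_n-Y_{n-1}-\mu}>\tfrac{x}{2}f_\epsilon(n)}\le c_2\,x^{-4/3}\sum_{n\ge1}\frac{1}{n(\log n)^{1+\epsilon}},
\end{equation*}
which tends to $0$ as $x\to\infty$. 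On the complementary event, the centred walk $(X_n-\mu n,Y_n-\mu n)$ has increments truncated at $\tfrac{x}{2}f_\epsilon(n)$, and its supremum on dyadic blocks $[2^j,2^{j+1}]$ is of order $2^{3j/4}$ by the stable scaling of Proposition~\ref{prop:stable}, which is smaller than $f_\epsilon(2^j)$ by a factor $j^{-(3/4)(1+\epsilon)}$. A union bound over dyadic scales then yields $\Prob\yy(\tauxy<\infty)\to 0$ as $x\to\infty$.

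The second step transfers this to $\Prob\py$ and $\Prob_{p,q}$. From Tables~\ref{tab:Ppq} and~\ref{tab:Ppinfty}, the one-step laws differ from $\Prob\yy$ only through ratios $z_{p-k,q}/z_{p,q}$ or $a_{p-k}/a_p$ of partition function coefficients, and through the swallowing events $\rp[p+k]$ and $\lp[q+k]$ that have no counterpart under $\Prob\yy$. By the one-step asymptotics \eqref{eq:onestepasympt} and the diagonal asymptotics \eqref{eq:diagasymptc}, these ratios are uniformly bounded on $\{P_n\wedge Q_n>m\}$ by constants depending only on $m$, while the swallowing events carry total probability of order $p^{-4/3}+q^{-4/3}=o(1)$. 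The spatial Markov property from Section~\ref{sec:sspeeling} then lets us bootstrap the bound of Step~1 to the stopped process $(X_{n\wedge T_m},Y_{n\wedge T_m})$, uniformly in $p$ (respectively in $(p,q)$ with $q/p\in[\lambda_{\min},\lambda_{\max}]$). Taking limits in the order $p\to\infty$ (or $p,q\to\infty$), then $m\to\infty$, then $x\to\infty$ gives the claim.

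The main obstacle will be the uniform comparison of transition probabilities in Step~2: one must control both the central and the heavy-tailed parts of the jumps of $(X_n,Y_n)$ under $\Prob_{p,q}$ or $\Prob\py$ simultaneously by those under $\Prob\yy$ while $n<T_m$, and simultaneously ensure that the rare swallowing events contribute only negligibly. This is precisely why the cutoff at $T_m$ appears, and it is handled by the same partition-function asymptotics and spatial Markov arguments developed for the analogous statements in \cite{CT20,CT21}, to which the present setting reduces essentially verbatim once the generating function estimates of Section~\ref{sec:combi} are in place.
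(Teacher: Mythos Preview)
Your proposal is correct and follows the same one-big-jump strategy that the paper invokes by deferring to \cite{CT20,CT21} mutatis mutandis. A minor correction: the swallowing events $\rp[p+k]$ (resp.\ $\lp[q+k]$) immediately force $P_1=0$ (resp.\ $Q_1=0$) and hence realise $T_m$ rather than $\tau_x^\epsilon$, so they never contribute to $\{\tau_x^\epsilon<T_m\}$ and need no separate accounting---and their total mass under $\Prob\py$ is of order $p^{-1}$ (from $Z_p(u_c)/a_p$), not $p^{-4/3}$.
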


The lemma says that the perimeter processes jump to a neighborhood of zero in a single big jump with high probability if $p$ and $q$ are large. This is a manifestation of the principle of a single big jump of heavy-tailed random walks, which is applied here to Markov chains with asymptotically heavy tails. Since the qualitative behavior of the perimeter processes here is similar to the behavior in \cite{CT20} and \cite{CT21}, the proof is a mutatis mutandis, and thus omitted. In fact, the proof is a bit simpler in this case, since the perimeter variation processes $\nseq{X}$ and $\nseq{Y}$ are identically distributed in the limit $p,q\to\infty$.

We give the following easy tail estimate for the distribution of $T_0$ at $\nu=\nu_c$ under $\Prob\py$. It is central in the proof of the local limit $\prob_{p,q}^{\nu_c}\cv{q}\prob_p^{\nu_c}$. The proof is similar as in \cite{CT20}.

\begin{lemma}[Tail of the law of $T_0$ under $\Prob\py$ at $\nu=\nu_c$]\label{lem:hit 0}
There exists $\gamma_0>0$ such that $\Prob\py(T_0> \Lambda p)\le \Lambda^{-\gamma_0}$ for all $p\ge 1$ and $\Lambda>0$. In particular, $T_0$ is finite $\Prob\py$-almost surely.
\end{lemma}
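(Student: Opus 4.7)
The plan is to turn the tail bound into a pointwise product inequality, using two inputs: the one-step probability of terminating the chain $\nseq P$ at any state $p$ is of order $1/p$, and the chain satisfies a deterministic linear upper bound $P_n\le p+n$.

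For the one-step killing probability, the events $\{\Step_1=\RR_{p+k}\}_{k\ge 0}$ in Table~\ref{tab:Ppinfty} are precisely the ways to reach $\{P_1=0\}$, so summing their probabilities yields
\begin{equation*}
\kappa(p)\;:=\;\Prob\py\!\bpar{P_1=0\mid P_0=p}\;=\;\frac{t_c\,a_0}{u_c\,a_p}\,Z_p(u_c).
\end{equation*}
Inserting the one-step asymptotics \eqref{eq:onestepasympt} --- which, by the symmetry $z_{p,q}=z_{q,p}$, give $Z_p(u_c)\sim\frac{A(u_c)-a_0}{\Gamma(-4/3)}u_c^{-p}p^{-7/3}$ together with $a_p\sim\frac{b}{\Gamma(-1/3)}u_c^{-p}p^{-4/3}$ --- I obtain $\kappa(p)\sim \bar c/p$ for an explicit constant $\bar c>0$. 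Since $\kappa(p)>0$ for every $p\ge 1$ and this asymptotic is uniform on the discrete set $\{p\ge 1\}$, the quantity $p\,\kappa(p)$ has a positive lower bound $c_1>0$ on the whole range, hence $\kappa(p)\ge c_1/p$ for all $p\ge 1$.

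For the deterministic upper bound on $P_n$, inspection of Table~\ref{tab:Ppinfty} shows that $X_1\in\{1,0,-1,\ldots,-p\}$, so $P_{n+1}\le P_n+1$, whence $P_n\le p+n$ $\Prob\py$-almost surely. Consequently, on $\{T_0>n\}$ one has $\kappa(P_n)\ge c_1/(p+n)$ deterministically. Iterating the identity $\EE\py[\mathbbm 1_{T_0>n+1}]=\EE\py[\mathbbm 1_{T_0>n}(1-\kappa(P_n))]\le \EE\py[\mathbbm 1_{T_0>n}]\,(1-c_1/(p+n))$ gives
\begin{equation*}
\Prob\py(T_0>N)\;\le\;\prod_{n=0}^{N-1}\!\Bpar{1-\frac{c_1}{p+n}}\;\le\;\exp\!\Bpar{-c_1\!\sum_{n=0}^{N-1}\frac{1}{p+n}}\;\le\;C_0\Bpar{1+\frac Np}^{-c_1},
\end{equation*}
where in the last step one uses the harmonic-sum estimate $\sum_{n=0}^{N-1}1/(p+n)\ge\log(1+N/p)-O(1)$ and lets $C_0$ absorb this $O(1)$. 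Setting $N=\lfloor\Lambda p\rfloor$ yields $\Prob\py(T_0>\Lambda p)\le C_0(1+\Lambda)^{-c_1}$ uniformly in $p\ge 1$. Choosing $\gamma_0\in(0,c_1)$ absorbs the prefactor for large $\Lambda$, while for small $\Lambda$ the claim is vacuous since $\Lambda^{-\gamma_0}\ge 1$; almost-sure finiteness of $T_0$ follows by letting $\Lambda\to\infty$.

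The only nontrivial step is the one-step asymptotic of $\kappa(p)$, which is a direct application of the transfer theorems already invoked in Section~\ref{sec:combi}. The simplification compared with \cite{CT20} comes from the deterministic bound $X_1\le 1$: there the positive perimeter could grow by arbitrary amounts per step, forcing a block iteration with a drift/concentration argument, whereas here the single pointwise inequality $\kappa(P_n)\ge c_1/(p+n)$ is enough to close the estimate.
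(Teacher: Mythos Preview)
Your proof is correct and more direct than what the paper points to. The paper gives no argument and defers to \cite{CT20}; your route instead exploits a feature specific to the vertex-Ising peeling in Table~\ref{tab:Ppinfty}, namely $X_1\le 1$, which yields the deterministic envelope $P_n\le p+n$. Combined with the one-step killing estimate $\kappa(p)\ge c_1/p$ from \eqref{eq:onestepasympt}, this closes the tail bound by a bare product inequality, with no need to control the trajectory of $P_n$ stochastically.

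Two small remarks. First, the harmonic-sum step is cleaner than you write: since $1/k\ge\int_k^{k+1}\frac{dx}{x}$, one has $\sum_{n=0}^{N-1}\frac1{p+n}\ge\log(1+N/p)$ with no $O(1)$ loss, so $C_0=1$ and $\gamma_0=c_1$ work directly; this also removes the small gap your endgame leaves for $\Lambda$ just above $1$. Second, your description of \cite{CT20} as allowing the $\+$-perimeter to ``grow by arbitrary amounts per step'' is not accurate --- there too the upward increment is bounded per peeling step --- so the same product argument would in principle apply in that setting as well. Your simplification is therefore less about a structural difference between the models and more about taking the elementary shortcut that any bound $X_1\le C$ makes available.
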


The above tail estimate can be generalized to the following scaling limit result, which actually comprises the main argument in the proof of Theorem \ref{thm:scaling}. 

\begin{prop}\label{prop:scalingT}
Let $\nu=\nu_c$. For all $m\in\natural$, the jump time $T_m$ has the following scaling limit:
\begin{equation*}
\forall t>0\,,\qquad	\lim_{p,q\to\infty} \Prob_{p,q}\m({\mu T_m>tp} = \int_t^\infty(1+s)^{-7/3}(\lambda+s)^{-7/3}ds
\end{equation*}where the limit is taken such that $q/p\to\lambda\in (0,\infty)$. In particular, for $\lambda=1$, \begin{equation*}
\lim_{p,q\to\infty} \Prob_{p,q}\m({T_m>tp} =(1+\mu t)^{-11/3}.
\end{equation*}
Moreover, 
\begin{equation*}
\lim_{p\to\infty} \Prob_p\m({T_m>tp} =(1+\mu t)^{-4/3}.
\end{equation*}
\end{prop}

\begin{proof}
The proof is mutatis mutandis of the proof for the similar claims in \cite{CT20} and \cite{CT21}. In particular, it uses Lemma~\ref{lem:one jump diag} as an input. We only need to take care of the correct exponents and the normalization by $\mu$, which are a priori not obvious from the asymptotics of $z_{p,q}$. We start from the latter claim, since it is simpler.

The core argument is the following: First, we notice that for large enough $p$, \begin{displaymath}
\Prob\py(T_m=1)=\Prob\py(P_1\le m)\sim \frac{\tilde c_m}{p}
\end{displaymath}for a constant $\tilde c_m$ depending on $m$, which can be explicitly computed:
\begin{align*}
\tilde c_m &=\lim_{p\to\infty}p\Prob_p(P_1\le m)=\lim_{p\to\infty}p\Prob_p(P_1=0)+\sum_{k=1}^m\lim_{p\to\infty}p\Prob_p(P_1=k) \\
&=\lim_{p\to\infty}p\sum_{k=0}^\infty \Prob_p(\Step_1=\rp[p+k])+\sum_{k=1}^m\lim_{p\to\infty} p\cdot\Prob_p(\Step_1=\rp[p-k]) \\
&=\frac{ta_0}{u_c}\left(\lim_{p\to\infty}p\frac{Z_p(u_c)}{a_p}+\frac{\Gamma\left(-1/3\right)}{\Gamma\left(-4/3\right)}b^{-1}\sum_{k=1}^m\lim_{p\to\infty} p\cdot\frac{(p-k+1)^{-7/3}}{p^{-4/3}}a_ku_c^k\right)
\\
&=\frac{ta_0}{b u_c}\frac{\Gamma\left(-1/3\right)}{\Gamma\left(-4/3\right)}\left(A(u_c)-a_0+\sum_{k=1}^ma_ku_c^k\right).
\end{align*}Taking the limit $m\to\infty$ defines \begin{equation*}\label{eq:cinfty}
\tilde c_\infty:=\lim_{m\to\infty}\tilde c_m=-\frac{4}{3}\frac{2ta_0}{bu_c}(A(u_c)-a_0).
\end{equation*} Now an explicit computation gives $\tilde c_\infty=\frac{4}{3}\mu$, where $\mu$ is defined in Lemma \ref{lem:orderparam}. The rest of the claim is already proven in \cite{CT20}. The proof is based on a similar result as Lemma \ref{lem:one jump diag}, and repeating the arguments of the proof of \cite[Proposition 11]{CT20} gives $\lim_{p\to\infty} \Prob_p\m({T_m>tp} =(1+\mu t)^{-\frac{\tilde c_\infty}{\mu}}.$

For the scaling limit of $T_m$ in the diagonal setting, the proof outline is given in the case of \cite[Theorem 6]{CT21}. The essential computation is the following:\begin{displaymath}
\lim_{p,q\to\infty} \Prob_{p,q}\m({T_m>tp} = \exp\left(-\int_0^t c_\infty\left(\frac{\lambda+\mu s}{1+\mu s}\right)\frac{ds}{1+\mu s}\right)
\end{displaymath} where \begin{align}\label{eq:cinfty}
c_\infty(\lambda)&:=\lim_{m\to\infty}\lim_{p,q\to\infty}\left(p\cdot\Prob_{p,q}(P_1\wedge Q_1\leq m)\right)=\lim_{m\to\infty}\lim_{p,q\to\infty}p\cdot\sum_{k=1}^m\left(\Prob_{p,q}(\rp[p-k])+\Prob_{p,q}(\lp[q-k])\right) \\
&=t\lim_{m\to\infty}\sum_{k=1}^m\lim_{p,q\to\infty}p\cdot\left(\frac{z_{k,q}z_{0,p-k+1}+z_{p,k}z_{0,q-k+1}}{z_{p,q}}\right)=2t\sum_{k=1}^\infty\frac{\Gamma\left(-1/3\right)}{\Gamma\left(-4/3\right)}\frac{a_0}{bu_c c(\lambda)\lambda^{7/3}}a_ku_c^k\notag \\
&=-\frac{4}{3}\frac{2ta_0}{bu_c c(\lambda)\lambda^{7/3}}(A(u_c)-a_0)=\frac{\tilde c_\infty}{c(\lambda)\lambda^{7/3}}=\frac{4}{3}\frac{\mu}{c(\lambda)\lambda^{7/3}}\notag.
\end{align}
\end{proof}

Moreover, we have the following bounds if we relax the assumption of the diagonal convergence to be as in Theorem \ref{thm:cv}:

\begin{prop}\label{prop:scalingT2}
For all $m\in\natural$, the scaling limit of the jump time $T_m$ has the following bounds:
\begin{equation*}
\forall t>0\,,\qquad	\liminf_{p,q\to\infty} \Prob_{p,q}\m({\mu T_m>tp} \ge \exp\left(-\int_0^t\max_{\lambda\in[\lambda_{\min},\lambda_{\max}]}c_\infty\left(\frac{\lambda+\mu s}{1+\mu s}\right)\cdot\frac{ds}{1+\mu s}\right)
\end{equation*} and \begin{displaymath}
\limsup_{p,q\to\infty} \Prob_{p,q}\m({\mu T_m>tp} \le \exp\left(-\int_0^t\min_{\lambda\in[\lambda_{\min},\lambda_{\max}]}c_\infty\left(\frac{\lambda+\mu s}{1+\mu s}\right)\cdot\frac{ds}{1+\mu s}\right)
\end{displaymath} where $c_\infty$ is the function defined by \eqref{eq:cinfty}, and the limit is taken such that $q/p\in [\lambda_{\min},\lambda_{\max}]$.
\end{prop}

\section{Local limits}\label{sec:locallimit}

In the proof of the local convergence of the laws $\prob_{p,q}^\nu$ and $\prob_p^{\nu_c}$, we use the following characterization of the local convergence: if $(\prob\0n)_{n\ge 0}$ and $\prob\0\infty$ are probability measures on $\overline{\bts}$, then $\prob\0n$ converges weakly to $\prob\0\infty$ for $d\1{loc}$ if and only if
\begin{equation*}
\prob \0n([\tmap,\sigma]_r=\bmap) \ \cv[]n\ \prob \0\infty([\tmap,\sigma]_r=\bmap)
\end{equation*}
for every $r\ge 0$ and every ball $\bmap$ of radius $r$. See \cite{CT20} and \cite{CT21} for more details.

In this paper, we consider the local limits in the two regimes $\nu\in (1,\nu_c)$ and $\nu=\nu_c$, respectively. These two regimes are expected to have a non-trivial behaviour of the interface. The former of the two is easier, so we begin with it.

\subsection{The local limit $\prob_{p,q}^\nu\to\prob\yy^\nu$ at $\nu\in(1,\nu_c)$} We sketch briefly the construction of the local limit $\prob_\infty^{\nu}$. In fact, it follows the \emph{general algorithm for constructing local limits} introduced in \cite{CT21}. The starting point is the convergence $\Prob_{p,q}\cv[]{p,q}\Prob_\infty$ of Proposition \ref{prob:peelingconv}. 

Let $\emapo_n$ be the map obtained by removing from $\emap_n$ all boundary edges adjacent to the hole. Considering the sequence of these maps, the number of the remaining boundary edges stays finite and only depends on $(\Step_k)_{k\le n}$. Then it follows that this convergence can be extended as follows \cite[Proposition 30]{CT21}: if $\theta$ is a $\Prob_\infty$-almost surely finite stopping time with respect to the filtration generated by the peeling process, then \begin{equation}\label{eq:stoppedconv}
\Prob_{p,q}(\emapo_\theta = \bmap) \cv[]{p,q} \Prob_\infty(\emapo_\theta = \bmap) \,.
\end{equation} This is all we need for the construction of the local limit $\prob\yy^\nu$. Namely, let $\theta=\theta_r=:\inf\Set{n\ge 0}{ d_{\emap_n}(\rho,\frontier_n)\ge r}$, where $d_{\emap_n}(\rho,\frontier_n)$ is the minimal graph distance in $\emap_n$ between $\rho$ and vertices on $\frontier_n$. Now \begin{equation*}
 \btsq_r\ =\ [\emapo_{\theta_r}]_r
\end{equation*}
for all $r\ge 0$. It follows that the peeling process $\nseq\emap$ eventually explores the entire triangulation $\bt$ if and only if $\theta_r<\infty$ for all $r\ge 0$. The latter follows, since the random walks $\nseq{X}$ and $\nseq{Y}$ have zero drift by Lemma \ref{lem:orderparam}. More precisely, it is well-known that one dimensional random walks on the real line with a zero drift are recurrent, and from this it follows that any finite segment of edges in the boundary either to the left or to the right of the root $\rho$ is swallowed by the peeling process in a finite time almost surely. 

Denote the law of the sequence of the explored maps under $\Prob\yy$ by $\law\yy\nseq{\emap}$. The local limit $\prob\yy^\nu$ is then defined as a growing sequence of finite balls $\law\yy \btsq_r := \lim\limits_{n \to\infty} \law\yy [\emap_n]_r$. The external face of $\law\yy \bt$ obviously has infinite degree and every finite subgraph of $\law\yy \bt$ is covered by $\emap_n$ almost surely for some $n<\infty$. Since the peeling process only fills in finite holes, it follows that the complement of a finite subgraph only has one infinite component. That is, $\prob\yy^\nu$ is one-ended, which together with the infinite boundary tells that the local limit is an infinite bicolored triangulation of the half-plane.

After all, the proof of the local convergence of $\prob_{p,q}^\nu$ towards $\prob\yy^\nu$ is just a one-line argument: since $\btsq_r = [\emapo_{\theta_r}]_r$ is a measurable function of $\emapo_{\theta_r}$, it follows from equation \eqref{eq:stoppedconv} that $\prob_{p,q}^\nu(\btsq_r = \bmap) \cv[]{p,q} \prob\yy^\nu(\btsq_r = \bmap)$ for every $r\ge 0$ and every ball $\bmap$. This implies the local convergence $\prob_{p,q}^\nu \cv{p,q} \prob\yy^\nu$.

Above, we did not actually need the information whether the peeling process takes into account the target $\rho^\dagger$ or not. In other words, the above construction gives the same result for both of the cases, and hence we have the freedom to choose.

\begin{remark}
The reason we did not consider the local convergence $\prob_{p,q}^\nu\cv{q}\prob_p^\nu$ is the fact that $\EE_p(X_1)$ and $\EE_p(Y_1)$ only have asymptotically zero drift when $p\to\infty$, which itself is not enough for their recurrence (compare to \cite{CT21} for an example of an asymptotically negative drift in the high-temperature regime). However, there are indeed some criteria to show the recurrence of such Markov chains, whose modifications could apply to the setting of this work. See \cite{GMPW19} and the references therein. We aim to go back to this question in future work.
\end{remark}

\subsection{The local limit $\prob_{p,q}^\nu\to\prob_p^\nu$ at $\nu=\nu_c$} This case is essentially similar to the previous one, except we replace $\prob\yy^{\nu}$ by $\prob_p\equiv\prob_p^{\nu_c}$ and the diagonal convergence by a univariate convergence. This does not change the proof much, since the only essential input is the convergence of the peeling process and the finiteness of $\theta_r$ under $\prob_p$. The latter one follows in this case from Lemma \ref{lem:hit 0}: since the boundary becomes monochromatic in a finite time almost surely under $\prob_p$, an analog of Lemma 29 in \cite{CT21} shows that indeed $\theta_r<\infty$. For more details, see \cite{CT20}.

\subsection{The local limits $\prob_p^\nu\to\prob\yy^\nu$ and $\prob_{p,q}^\nu\to\prob\yy^\nu$ at $\nu=\nu_c$}\label{sec:locallimitsc} First, we construct the local limit $\prob\yy\equiv\prob\yy^{\nu_c}$ using the positive drift of the perimeter processes $\nseq{X}$, $\nseq{Y}$ under $\Prob\yy$, the previously constructed local limit $\prob_0$ and a simple gluing argument. Then, we show the local convergence itself, which shares the same characteristics for both type of convergences.
\newcommand{\rib}{\emapo_\infty}
\paragraph{Construction of $\prob\yy$.} Due to the positive drift $\EE\yy(X_1)=\EE\yy(Y_1)>0$ (Lemma \ref{lem:orderparam}), the probability that the peeling process peels an edge adjacent to a given boundary vertex $v\in\partial\emapo_n$ infinitely many times is zero. Therefore, the sequence of balls $(\law\yy [\emapo_n]_r,\,n\ge 0)$ stabilizes in finite time for all $r\ge 0$. Hence, we may define $\law\yy{} [\rib]_r := \lim\limits_{n\to\infty} \law\yy{} [\emapo_n]_r$. We call it the \emph{ribbon}, for the reason that it is an infinite strip of triangles containing the infinite interface.

\begin{figure}
\begin{center}
\includegraphics[scale=0.8]{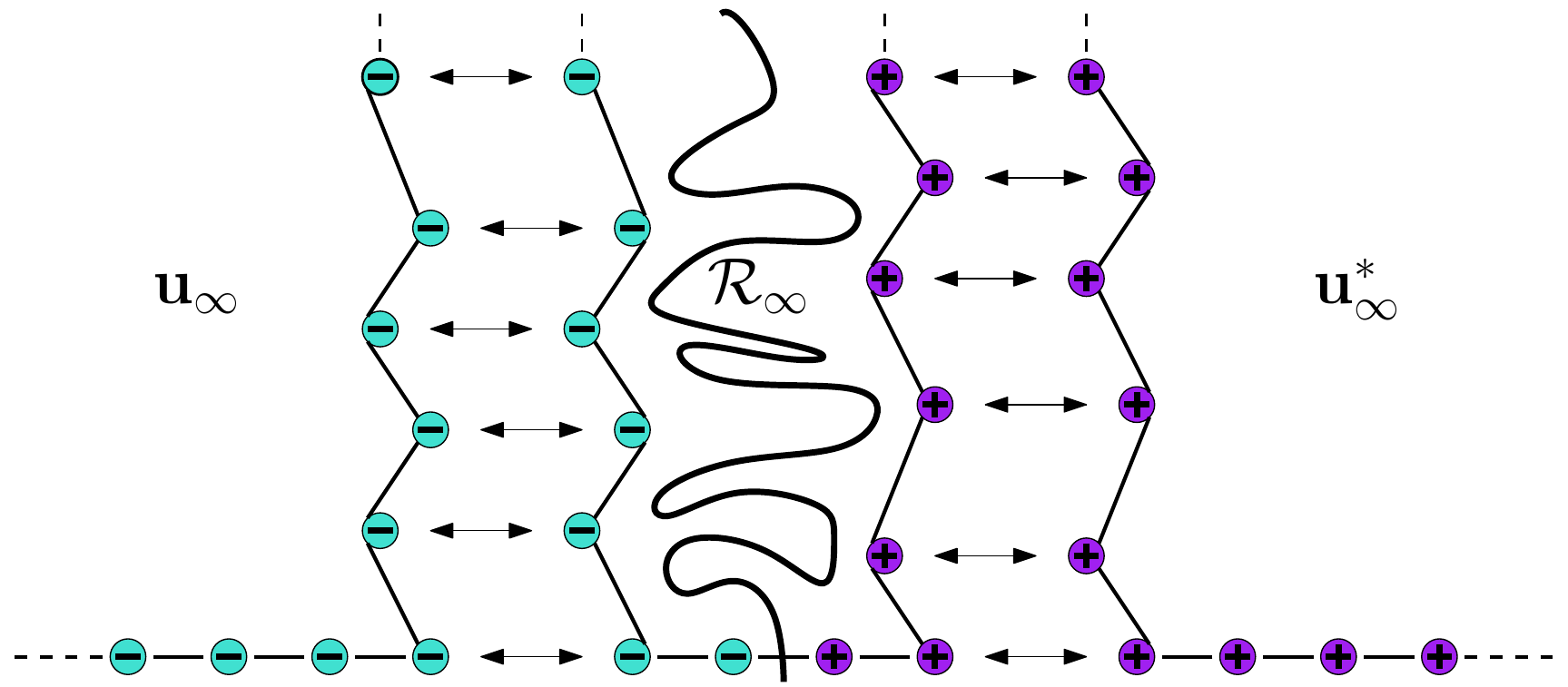}
\end{center}\vspace{-1em}
\caption{The gluing of the ribbon with the two infinite triangulations with a monochromatic boundary, in order to construct $\prob_\infty$. The vertices which are mutually identified are pointed with the double arrows.}\label{Fig:gluing}
\end{figure}

By construction, the ribbon is also one-ended: Namely, $\rib\setminus\emapo_n$ is connected, since the peeling process always reveals a triangle incident to the interface, and thus the consecutive revealed triangles share necessarily the edge which the interface traverses through. Moreover, the complement of any finite subgraph in $\rib$ has only one infinite connected component by the fact that it necessarily contains $\rib\setminus\emapo_n$ for $n\in\N$ sufficiently large. 

\newcommand*{\rmap}[1][m]{\mathcal{R}_{#1}}
\newcommand*{\uleft}[1][T_m]{\umap_{#1}}
\newcommand*{\uright}[1][T_m]{\umap^*_{#1}}
\newcommand{\kk}{\mathcal{K}_m}
\newcommand{\pjump}{\mathcal{P}}
\newcommand{\qjump}{\mathcal{Q}}
\newcommand{\pleft}{\mathcal{P}}
\newcommand{\qleft}{\mathcal{Q}}
\newcommand{\pright}{\mathcal{P}^*}
\newcommand{\qright}{\mathcal{Q}^*}

From now on, let us denote the ribbon under $\Prob\yy$ by $\rmap[\infty]$, and denote by $\overline \prob\py[0]$ the image of $\prob\py[0]$ in the inversion of spins. Let $\law\yy \uleft[\infty]$ and $\law\yy \uright[\infty]$ be two random variables of laws $\prob\py[0]$ and $\overline \prob\py[0]$, respectively, such that they are mutually independent with each other and $\law\yy \rmap[\infty]$. The boundary of $\law\yy \rmap[\infty]$ is partitioned into three intervals: one finite interval consisting of edges of $\emap_0$, and the two infinite intervals on its left and on its right. We glue $\law\yy \uleft[\infty]$ (resp.\ $\law\yy \uright[\infty]$) to the infinite interval on the left (resp. on the right), such that boundary vertices with the same spins are identified along the infinite boundaries, together with the incident edges. See Figure~\ref{Fig:gluing} for an explanation. Now $\prob\yy$ is defined as the law of the random triangulation resulted in this gluing. It is easy to see that $\prob\yy$ is one-ended, and that $\law\yy \nseq \emap$ is indeed the peeling process following the infinite interface of a random bicolored triangulation of law $\prob\yy$.

\paragraph{Convergence towards $\prob\yy$.} In order to show the local convergence, we want to find a counterpart of the above gluing argument for a triangulation $\tmap$ under $\Prob_p$ or $\Prob_{p,q}$ for some large $p,q$. There is no canonical way to do so, but instead we condition on the peeling step at time $T_m$ for some $m\ge 0$, and in the end take $m\to\infty$. In what follows, we formulate the proof primarily for $p,q<\infty$, and comment briefly what changes should take place when $q=\infty$. In the end, the detailed account of the latter is just a mutatis mutandis of the proof in the case of spins on the faces, found in \cite{CT20}.

To this end, fix $m\ge 0$, and define $\rmap[m]$ as the union of the explored map $\emapo_{T_m-1}$ and the triangle explored at $T_m$. Now the triple $(\umap_{T_m},\rmap[m],\umap_{T_m}^*)$ partitions a triangulation under $\prob_{p,q}$, such that $\umap_{T_m}$ and $\umap_{T_m}^*$ correspond to the two parts separated by the triangle at $T_m$. They correspond to the triangulations $\uleft[\infty]$ and $\uright[\infty]$ in the infinite setting, respectively. Observe that $T_m=\infty$ almost surely under $\prob\yy$, and hence this correspondence indeed makes formally sense. See Figure~\ref{fig:Decomposition}.

We will reroot the unexplored maps $\uleft$ and $\uright$ at the edges $\rho_\umap$ and $\rho_{\umap^*}$, which we define as the boundary edges with a vertex shared by $\umap_{T_m}$ and $\rmap[m]$, and $\umap_{T_m}^*$ and $\rmap[m]$, respectively. These edges are monochromatic, even though the boundary of $\uleft$ or $\uright$ might still be bichromatic. However, the triangulations $\uleft$ and $\uright$ look locally monochromatic with high probability when $p$ and $q$ are large. To formulate this, we import the following technical lemma introduced and proven in \cite{CT20}:

\begin{lemma}\label{lem:reroot}
Let $\prob_\pqq$ denote the pushforward of $\prob_{p,q_1+q_2}$ by the mapping that translates the origin $q_1$ edges to the left along the boundary. Then for all fixed $p\ge 0$, we have $\prob_\pqq \xrightarrow{d\1{loc}} \prob_0$ weakly as $q_1,q_2\to\infty$.
\end{lemma}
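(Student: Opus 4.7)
The plan is to prove this lemma following the strategy of the analogous statement for spins-on-faces in \cite{CT20}. Since local convergence is equivalent to convergence of the probability of every finite-radius ball, and since $[\tmap,\sigma]_r = [\emapo_{\theta_r}]_r$ is a deterministic function of the peeling process up to the almost-surely-finite stopping time $\theta_r$, I would reduce the problem to showing convergence of the peeling process launched at the new root (a monochromatic $-$ edge) under $\prob_\pqq$ towards the peeling process under $\prob_0$. By the spatial Markov property inherited from the Boltzmann measure, this further reduces to convergence of the one-step peeling law at the new root towards the one-step law of $\prob_0$ listed in Table~\ref{tab:P0}.

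Each one-step probability can be expressed as a ratio of partition functions weighted by appropriate powers of $t$ and $\nu$. For events in which the revealed third vertex $v$ is interior with spin $-$, or is a boundary vertex lying within the $-$ segment on either side of the new root, the unexplored pieces inherit Dobrushin (or fully monochromatic $-$) boundary conditions, giving expressions of the form $\nu t\, z_{p,q'}/z_{p,q_1+q_2}$, possibly multiplied by a factor $z_{k+1,0}$ coming from the cut-off region. The one-step asymptotic \eqref{eq:onestepasympt} then yields $z_{p,q'}/z_{p,q_1+q_2}\to u_c^{q_1+q_2-q'}$ as $q_1,q_2\to\infty$, which matches precisely the corresponding probability in Table~\ref{tab:P0}. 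Peeling events in which $v$ lies on the $+$ segment, or in which the cut-off region crosses into the $+$ segment, require the cut-off length to exceed $\min(q_1,q_2)$, and so their probabilities vanish asymptotically.

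The one delicate case is when $v$ is interior with spin $+$: the unexplored map then carries the non-Dobrushin boundary $+^p -^{q_1} + -^{q_2}$, whose partition function is not directly among the $z_{p',q'}$. I would handle this by combinatorially rerooting the new map at one of the two bichromatic edges incident to the isolated $+$ vertex, relating its partition function to a Dobrushin partition function of the form $z_{p+1,q_1+q_2}$ (up to a bounded combinatorial factor accounting for root choice). The asymptotic \eqref{eq:onestepasympt} then produces a limit equal to $a_1/a_0$, matching $\prob_0(\Step_1=\cp)=t\nu\, a_1/a_0$ from Table~\ref{tab:P0}. This combinatorial identification for non-Dobrushin boundaries is the main obstacle, since the machinery of Section~\ref{sec:combi} is formulated entirely in terms of the Dobrushin partition functions $z_{p,q}$; the face-decorated analog in \cite{CT20} solves it by exactly such a rerooting trick, and the vertex-decorated case goes through essentially identically modulo the weight bookkeeping.
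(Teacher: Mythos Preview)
The paper does not give its own proof of this lemma; it simply imports the statement from \cite{CT20} (``we import the following technical lemma introduced and proven in \cite{CT20}''), so there is no in-paper argument to compare against. Your overall plan --- reduce local convergence to convergence of the peeling process launched from the translated root, then to one-step transition probabilities via the spatial Markov property --- is the right shape, but the treatment of the event $\cp$ has a genuine gap. After $\cp$ the unexplored map carries the cyclic boundary word $+^p\,-^{q_1}\,+\,-^{q_2}$, which has \emph{two} separate $+$ blocks. Rerooting a triangulation of a polygon does not alter the cyclic sequence of boundary spins, so no rerooting can identify this partition function with the Dobrushin one $z_{p+1,q_1+q_2}$; these count genuinely different objects (and the same obstruction is present in the face-decorated model of \cite{CT20}). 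A related problem is that even if you recover the one-step $\cp$ probability by complementarity, the iteration via spatial Markov does not close: after a $\cp$ step you have left the class of Dobrushin boundaries, and every subsequent transition probability is a ratio of partition functions $z_\omega$ for words $\omega$ with two $+$ blocks, which Section~\ref{sec:combi} never analyses.

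What is actually needed --- and what \cite{CT20} uses, with the vertex-decorated analogue supplied by \cite{AMS18} --- is the asymptotics of $z_\omega$ for \emph{general} boundary words: as the two $-$ runs go to infinity with the $+$ blocks held fixed, the ratios $z_{\omega'}/z_\omega$ converge to the corresponding ratios under $\prob_{p'}$, for each of the finitely many words reachable in $n$ peeling steps. Alternatively one can argue structurally, first passing to the already-established limit $\prob_{p,q_1+q_2}\to\prob_p$ and then using $T_0<\infty$ a.s.\ under $\prob_p$ to show that a ball around the translated root eventually stops seeing the finite $+^p$ block and hence agrees with the ball under $\prob_0$. Either route requires an ingredient beyond the Dobrushin-only machinery you invoke, and the rerooting shortcut does not supply it.
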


Now the boundary condition of $\uleft$ can be written as $(\pleft,(\qleft_1,\qleft_2))$ according to the notation of the previous lemma, where $\pleft$, $\qleft_1$ and $\qleft_2$ are some random numbers. Similarly, the boundary condition of $\uright$ can be written as $((\pright_1,\pright_2),\qright)$, where the notation is understood such that the two components of the pair are switched in a spin-flip, which produces the setting of Lemma~\ref{lem:reroot}. Let $\delta_\star$ be a random variable assigning value $1$ if the boundary vertex of the revealed triangle $\Step_{T_m}$ is of spin $\star$, and $0$ otherwise. Then, the condition 
$
\Step_{T_m} = \rp[P_{T_m-1}+\kk-\delta_{\<}]
$
uniquely defines an integer $\kk$, which represents the position relative to $\rho^\dagger$ of the vertex where the triangle revealed at time $T_m$ touches the boundary. We also make the convention $\rp[p+k]=\lp[q-k-1]$. See Figure \ref{fig:Decomposition}. 

\begin{figure}[H]
\begin{center}
\includegraphics[scale=0.7]{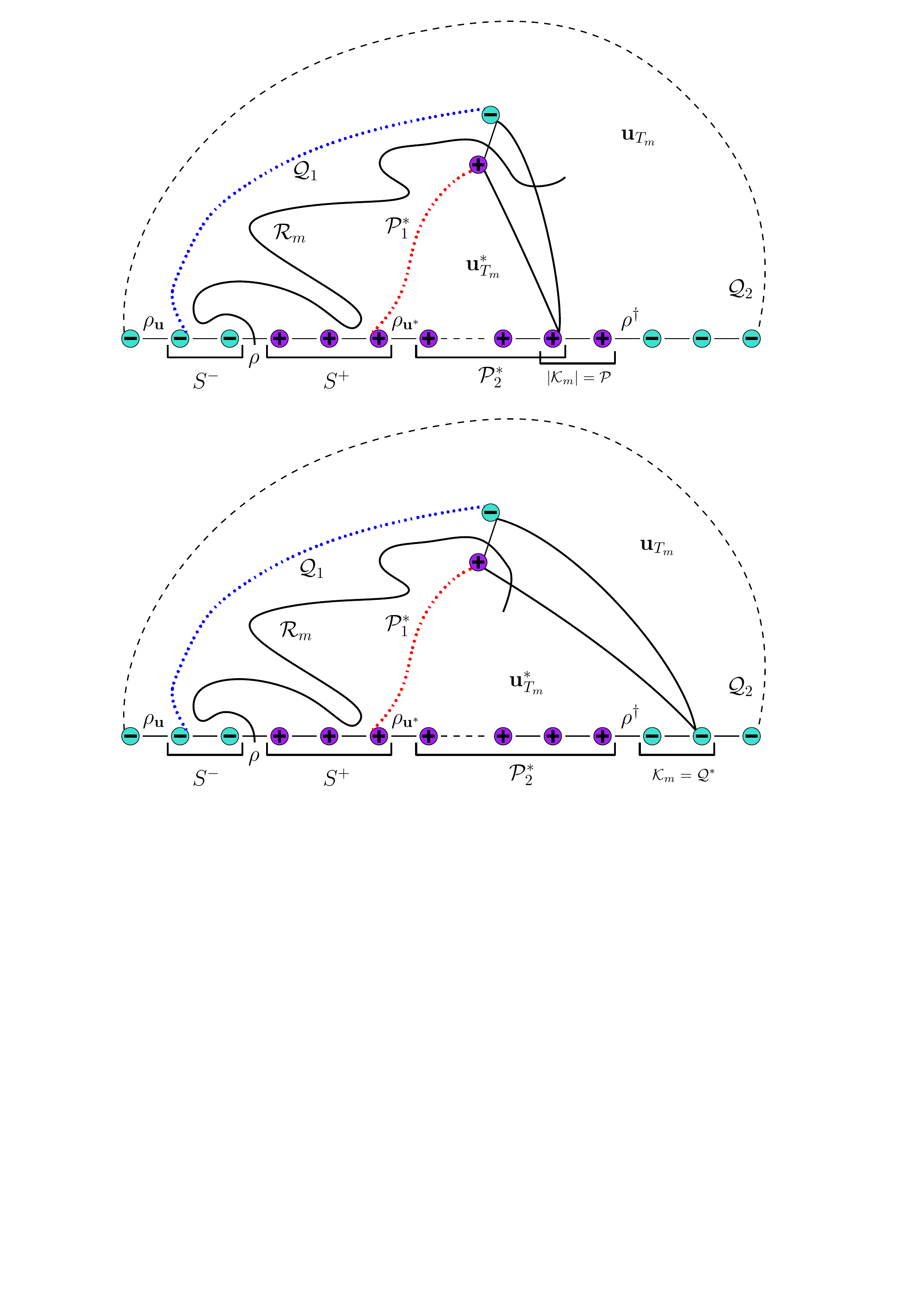}
\end{center}\vspace{-1em}
\caption{The decomposition of the Ising triangulation at time $T_m$, showing the two possible scenarios. This figure also shows how the interface behaves differently depending on the last peeling step.}\label{fig:Decomposition}
\end{figure}   

In the following lemma, we show that the ribbon and the unexplored parts converge jointly. It gathers analogous results from \cite{CT20} and \cite{CT21} with minor modifications to the setting of this work.
\newcommand{\PrE}[4]{\prob#1 \mb({ ([#2]_r,[#3]_r,[#4]_r) \in \mathcal{E} }}

\begin{lemma}[Joint convergence before gluing]\label{lem:loc cv on big jump}
Fix $\epsilon,x,m>0$, and let $\mathcal{J} \equiv \mathcal{J}^\epsilon_{x,m} := \{\tauxy = T_m \ge \epsilon p\}$. Then for any $r\ge 0$,
\begin{equation}\label{eq:loc cv on big jump}
\begin{aligned}
& \limsup_{p,q\to\infty} \abs{ \PrE\pqy{\rmap}{\uleft}{\uright}
               - \PrE\yy{\rmap[\infty]}{\uleft[\infty]}{\uright[\infty]} }
\\   \le \ &
\limsup_{p,q\to\infty} \prob_{p,q} (\mathcal{J}^c) + \prob_\infty(\tauxy<\infty)
\end{aligned}
\end{equation}
in the sense of the diagonal limit of Theorem~\ref{thm:cv} where $\mathcal{E}$ is any set of triples of balls.

In the case of $q=\infty$,  $\mathcal{J}$ has to be replaced by $\{\tauxy = T_m \ge \epsilon p\} \cap \{ \kk \le m\}$.
\end{lemma}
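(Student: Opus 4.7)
\emph{Proof plan.} I will adapt the ``single big jump'' argument developed in~\cite{CT20,CT21} to the present vertex-decorated setting. The starting point is the triangle inequality
\begin{align*}
& \abs{\PrE\pqy{\rmap}{\uleft}{\uright} - \PrE\yy{\rmap[\infty]}{\uleft[\infty]}{\uright[\infty]}} \\
& \quad \le \abs{\prob_{p,q}(\cdot\in\mathcal{E},\,\mathcal{J}) - \prob_\infty(\cdot\in\mathcal{E},\,\tauxy=\infty)} + \prob_{p,q}(\mathcal{J}^c) + \prob_\infty(\tauxy<\infty).
\end{align*}
Since the last two terms are already the announced upper bound, it suffices to show that the first absolute-value term tends to $0$ as $p,q\to\infty$ in the diagonal regime.

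My next step is to apply the spatial Markov property of the peeling at the stopping time $T_m$: conditionally on $(\Step_k)_{1\le k\le T_m}$ (which determines both $\rmap$ and the region swallowed at the final peeling step), the two unexplored parts $\uleft$ and $\uright$ are independent Boltzmann Ising triangulations with respective boundary data $(\pleft,(\qleft_1,\qleft_2))$ and $((\pright_1,\pright_2),\qright)$. On $\mathcal{J}$, the control $|X_n-\mu n|\vee|Y_n-\mu n|\le f_\epsilon(n)$ for $n<T_m$ combined with $T_m\ge\epsilon p$ forces all six of these boundary parameters to tend to infinity as $p,q\to\infty$; Lemma~\ref{lem:reroot} then gives, conditionally on $\rmap$ and $\mathcal{J}$,
\begin{equation*}
[\uleft]_r \cv[]{p,q} [\uleft[\infty]]_r \quad\text{and}\quad [\uright]_r \cv[]{p,q} [\uright[\infty]]_r
\end{equation*}
in distribution, matching by construction the conditional laws of the two infinite unexplored parts featuring in $\prob_\infty$.

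For the ribbon itself, on $\mathcal{J}$ and for $p,q$ large one has $\theta_r<T_m$, so $[\rmap]_r=[\emapo_{\theta_r}]_r$; the extension~\eqref{eq:stoppedconv} of Proposition~\ref{prob:peelingconv} applied to the $\prob_\infty$-a.s.\ finite stopping time $\theta_r$ (finite thanks to the positive drift in Lemma~\ref{lem:orderparam}) then yields the unconditional convergence $[\rmap]_r \cv[]{p,q} [\rmap[\infty]]_r$. Combining this with the conditional convergence of $([\uleft]_r,[\uright]_r)$ given $\rmap$, and exploiting the conditional independence of the three balls under both $\prob_{p,q}$ and $\prob_\infty$, a dominated-convergence argument delivers the desired joint convergence on $\mathcal{J}$.

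The main technical obstacle will be to make the conditional convergence of $([\uleft]_r,[\uright]_r)$ uniform enough in the random boundary data $(\pleft,\qleft_1,\qleft_2,\pright_1,\pright_2,\qright)$ to pass from the conditional-on-$\rmap$ statement to the joint unconditional one. I plan to handle this exactly as in~\cite{CT20,CT21}: restrict further to the sub-event where each boundary parameter exceeds a large fixed threshold---whose probability under $\prob_{p,q}$ can be made arbitrarily close to $\prob_{p,q}(\mathcal{J})$ for a well-chosen threshold---and perform a diagonal extraction together with Lemma~\ref{lem:reroot}. In the case $q=\infty$, the additional restriction $\kk\le m$ is needed precisely to prevent the jump at $T_m$ from landing arbitrarily far down the infinite $\<$ boundary, which would push $\uright$ outside the rerootable regime of Lemma~\ref{lem:reroot}; with $\kk\le m$, the argument runs identically.
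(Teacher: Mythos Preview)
Your overall architecture---the triangle inequality to isolate $\mathcal{J}^c$ and $\{\tauxy<\infty\}$, the spatial Markov property at $T_m$, and then Lemma~\ref{lem:reroot} for the two unexplored pieces---is exactly the route taken in the paper (and in \cite{CT20,CT21}). But two concrete steps in your execution are wrong.

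First, you assert that on $\mathcal{J}$ ``all six of these boundary parameters tend to infinity''. They do not. Bookkeeping the perimeter of $\uleft$ at time $T_m$ gives $\pleft=-\min\{0,\kk\}$ and, symmetrically, $\qright=\max\{0,\kk\}$; in the diagonal regime $|\kk|\le m$ automatically (this is precisely why the extra clause $\kk\le m$ must be added when $q=\infty$), so $\pleft$ and $\qright$ stay \emph{bounded} by $m$. Only $\qleft_1,\qleft_2,\pright_1,\pright_2$ go to infinity, and it is the drift bound on $\mathcal{J}$ together with $T_m\ge\epsilon p$ that pushes them there. This is not cosmetic: Lemma~\ref{lem:reroot} is stated for \emph{fixed} $p$ as $q_1,q_2\to\infty$; if $\pleft$ also went to infinity, the rerooted $\uleft$ would see a diverging $\+$ arc near its root and would certainly not converge to $\prob_0$. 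The correct argument applies Lemma~\ref{lem:reroot} separately for each of the finitely many possible values $\pleft\in\{0,\dots,m\}$ (and likewise for $\qright$).

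Second, for the ribbon you invoke \eqref{eq:stoppedconv} with the stopping time $\theta_r$, claiming it is $\Prob_\infty$-a.s.\ finite ``thanks to the positive drift in Lemma~\ref{lem:orderparam}''. This has the logic reversed. Finiteness of $\theta_r$ in Section~\ref{sec:locallimit} (for $\nu<\nu_c$) comes from \emph{zero} drift and recurrence of the perimeter walk, which forces every boundary segment to be swallowed eventually. At $\nu=\nu_c$ the positive drift is exactly why the interface peeling does \emph{not} cover the map: boundary vertices near the root are with positive probability never swallowed, the frontier keeps touching the initial boundary near $\rho$, and $\theta_r$ is typically infinite under $\Prob_\infty$. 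The ribbon convergence at $\nu_c$ is obtained instead via stabilisation: the positive drift ensures each boundary vertex is peeled only finitely often, so $[\emapo_n]_r$ is eventually constant in $n$, and the convergence of the finite-step peeling law (Proposition~\ref{prob:peelingconv}) transports to $[\rmap]_r\to[\rmap[\infty]]_r$.
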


\begin{proof}
The proof is a mutatis mutandis of the proofs of \cite[Lemma 14]{CT20} and \cite[Lemma 41]{CT21}, where it is presented for the spins on the faces. Given the proof framework detailed in \cite{CT20}, the only thing one needs to take care of is the fact that the random numbers $\pright_1$, $\pright_2$, $\qleft_1$ and $\qleft_2$ tend to $\infty$ uniformly, and that $\pleft$ and $\qright$ stay bounded, conditional on $\mathcal{J}$. Observe also that the random number $\kk$ is automatically bounded in the diagonal setting, so we do not need any condition for $\kk$ in the event $\mathcal{J}$ in that case, whereas in the case $q=\infty$, $\kk$ is not bounded a priori.

Let us start by showing lower bounds for the boundary condition of $\uleft$. First, expressing the total perimeter of $\uleft$, the number of edges between $\rho$ and $\rho^\dagger$ clockwise and the number of \+ vertices on the boundary of $\uleft$, respectively, we find the equations \begin{displaymath}
\begin{cases}
\qleft_1+\qleft_2+\pleft=Q_{T_m-1}-\kk+\delta_{\<} \\
S^{\<}+\qleft_2+\max\{0,\kk\}-\delta_{\<}=q \\
\pleft=-\min\{0,\kk\}
\end{cases}
\end{displaymath} where $S^{\<}$ is the number of vertices of spin $\<$ in $\rmap[m]\cap\partial\emapo_0$. See Figure \ref{fig:Decomposition}. The solution of this system of equations is \begin{displaymath}
\begin{cases}
\qleft_1=Y_{T_m-1}+S^{\<} \\
\qleft_2=q-S^{\<}-\max\{0,\kk\}+\delta_{\<} \\
\pleft=-\min\{0,\kk\}
\end{cases}. 
\end{displaymath}We have $S^{\<}=-\min_{n<T_m} Y_n\in [0,1-\min_{n\ge 0}(\mu n-xf_\epsilon(n))]$, and the function $n\mapsto\mu n-x f_\epsilon(n)$ is increasing if we only consider large enough $n$. Therefore, conditional on $T_m\ge\epsilon p$, we deduce $\qleft_1\ge Y_{T_m-1}\ge\mu(T_m-1)-xf_\epsilon(T_m-1)\ge\mu(\epsilon p-1)-x f_\epsilon(\epsilon p-1)=:\underline{\qleft_1}$ and $\qleft_2\ge q+\min_{n\ge 0}(\mu n-x f_\epsilon(n))-1-m=:\underline{\qleft_2}$ for large enough $p$. Moreover, $\pleft\le|\kk|\le m$. 

By symmetry, for $\uright$ we have  
\begin{displaymath}
\begin{cases}
\pright_1+\pright_2+\qright=P_{T_m-1}+\kk+\delta_{\+} \\
S^{\+}+\pright_2-\min\{0,\kk\}-\delta_{\+}=p \\
\qright=\max\{0,\kk\}
\end{cases}
\end{displaymath} where $S^{\+}$ is now the number of vertices of spin $\+$ in $\rmap[m]\cap\partial\emapo_0$. It is easy to see that this yields similar bounds for $\pright_1$, $\pright_2$ and $\qright$ as previously for $\qleft_1$, $\qleft_2$ and $\pleft$, respectively. The rest of the proof goes then as in \cite{CT20}. Observe that if $q=\infty$, we need the condition $\kk\le m$ to ensure the boundedness of $\qright$, and we also have $\qleft_2=\infty$; otherwise the proof is the same.
\end{proof}

\newcommand{\pglued}[1]{\prob#1 \mb({ [\mop]_r \in \mathcal{E} }}
\newcommand{\mop}{\tmap \oplus \tmap'}

\paragraph{Proof of the convergence $\prob\pqy^{\nu_c}\to\prob\yy^{\nu_c}$.} The rest of the proof is mostly presented in detail in \cite{CT20} and \cite{CT21}. It is based on a gluing argument of three locally converging maps, which results the local convergence of the glued map itself. There, the gluing happens such that the spins assigned to the boundary edges on each of the side of the gluing interface (which is not to be confused with the Ising interfaces) coincide. In this article, we simply switch the roles of the boundary edges and the boundary vertices, and otherwise apply the same arguments. This is possible since the boundaries of the maps are simple, and thus there is a one-to-one correspondence between the boundary vertices and the boundary edges. That said, the to-be-glued boundary vertices on each side always have the same spin, and only monochromatic edges are merged in the gluing. In the next few paragraphs, we outline the existing arguments in the setting of this article.

Similarly as the infinite triangulation $\law\yy \bt$ can be represented as a gluing of the triple $\law\yy (\rmap[\infty],\uleft[\infty],\uright[\infty])$, the finite triangulation $\law\pqy \bt$ results from the gluing of the triple $\law\pqy (\rmap,\uleft,\uright)$ along the boundaries of the components. This is done pairwise between the three components, taking into account that the location of the root edge changes during this procedure. Given a triangulation $\tmap$ with a simple boundary, and an integer $S$, let us denote by $\overrightarrow \tmap^{S}$ (resp. $\overleftarrow \tmap^{S}$) the map obtained by translating the root edge of $\tmap$ by a distance $S$ to the right (resp. to the left) along the boundary. Denote by $\rho$ and $\rho'$ the root edges of two triangulations $\tmap$ and $\tmap'$, respectively, and let $L$ be the number of vertices in $\tmap$ and $\tmap'$ which are admissible for the gluing. More precisely, we assume that $L$ is a random variable taking positive integer or infinite values, such that
\begin{equation}\label{eq:gluing length}
\law\pqy L \cv[]{p,q} \infty \text{ in distribution and }\law\yy L = \infty \text{ almost surely.}
\end{equation} Finally, let $\mop$ be the triangulation obtained by gluing the $L$ boundary vertices of $\tmap$ on the right of $\rho$ to the $L$ boundary vertices of $\tmap'$ on the left of $\rho'$, together with the edges which have two such vertices as endpoints. The dependence on $L$ is omitted from this notation because the local limit of $\mop$ is not affected by the precise value of $L$, provided that \eqref{eq:gluing length} holds.

Now using the notation of the previous paragraph, we have 
\begin{equation}\label{eq:gluing-of-3}
\bt = \overrightarrow{(\umap\rmap)}^{S^\+ + S^\<} \oplus \uright \qtq{where} \umap\rmap = \uleft \oplus \overleftarrow{(\rmap)}^{S^\<}
\end{equation}
where $S^\+$ and $S^\<$ are the number of vertices between $\rho$ and $\rho_{\umap^*}$ or $\rho_\umap$, respectively, including the ones adjacent to the above root edges. Similarly, $\law\yy \bt$ can be expressed in terms of $\uleft[\infty]$, $\rmap[\infty]$, $\uright[\infty]$ and $S^\jj$ using the above described gluing and root translation.

On the event $\mathcal{J}$, the perimeter processes $\nseq X$ and $\nseq Y$ stay above $\mu n-xf_\epsilon(n)$ up to time $\tauxy$. Thus their minima over $[0,\tauxy)$ are reached before the deterministic time $N_{\min} = \sup\Set{n\ge 0}{\mu n-xf_\epsilon(n)\le 0}$ and $S^\+$ and $S^\<$ are measurable functions of the explored map $\emapo_{N_{\min}}$. It follows that $\law\pqy S^\jj$ converges in distribution to $\law\yy S^\jj$ on the event $\mathcal{J}$. Using the relation \eqref{eq:gluing-of-3} together with \cite[Lemmas 15-16]{CT20}, we deduce from Lemma~\ref{lem:loc cv on big jump} that for any $x,m,\epsilon>0$, and for any $r\ge 0$ and any set $\mathcal{E}$ of balls, we have
\begin{equation*}
\limsuppq \mb|{\, \prob\pqy ( \btsq_r \in \mathcal{E} ) 
               - \prob\yy( \btsq_r \in \mathcal{E} ) \, }    \ \le\ 
\limsuppq \prob\pqy (\mathcal{J}^c) + \prob\yy (\tauxy<\infty)    \,.
\end{equation*}

The left hand side does not depend on the parameters $x,m$ and $\epsilon$. Therefore to conclude that $\prob\pqy$ converges locally to $\prob\yy$, it suffices to prove that
$\displaystyle \limsuppq \prob\pqy (\mathcal{J}^c) + \prob\yy (\tauxy<\infty)$\, converges to zero when $x,m\to\infty$ and $\epsilon\to 0$. The latter term converges to zero, since if $x\to\infty$, we have $\tauxy \to\infty$ almost surely under $\prob\yy$. For the first term, a union bound gives
\begin{equation*}
\prob\pqy(\mathcal{J}^c) \ \le\ \prob\pqy (\tauxy< T_m) + \prob\pqy(T_m<\epsilon p) \,,
\end{equation*}
where the first term on the right can be bounded using Lemma~\ref{lem:one jump diag}:
\begin{equation*}
\lim_{m,x \to\infty}  \limsuppq \prob\pqy(\tauxy<T_m)    \ =\ 0 .
\end{equation*} For the last term, we use Proposition~\ref{prop:scalingT2}:
\begin{equation*}
\lim_{\epsilon\to 0}\  \limsuppq \prob\pqy(T_m<\epsilon p)\ \leq\ 
1-\lim_{\epsilon\to 0}\exp\left(-\int_0^\epsilon \max_{\lambda\in[\lambda_{\min},\lambda_{\max}]}c_\infty\left(\frac{\lambda+\mu s}{1+\mu s}\right)\frac{ds}{1+\mu s}\right) \ =\ 0    \,. 
\end{equation*}

In the case $q=\infty$, we have \begin{equation*}
\prob\py(\mathcal{J}^c) \ \le\ \prob\py (\tauxy< T_m) + \prob\py(T_m<\epsilon p) +\prob\py(\tauxy=T_m,\ \kk>m)\,,
\end{equation*}where the first term on the right hand side is treated like before and the second term is shown to be negligible by Theorem \ref{thm:scaling}. For the last term, we repeat an estimate in \cite{CT20} in order to find the bound \begin{align}\label{eq:limsupbound}
\limsup_{p\to\infty}\prob\py(\tauxy=T_m,\ \kk>m)&\leq\limsup_{p\to\infty}\Prob_p(\Step_1\in\{\rp[p+k-1]:\ k>m\}|P_1\le m)\notag\\&=\limsup_{p\to\infty}\frac{\sum_{k\ge m}p\cdot\Prob\py(\Step_1=\rp[p+k])}{\sum_{k\ge -m}p\cdot\Prob\py(\Step_1=\rp[p+k])}\,.
\end{align}By \eqref{eq:onestepasympt}, we see that \begin{equation}
\limsup_{p\to\infty} p\cdot\sum_{k\ge 0}\prob_p(\Step_1=\rp[p+k])=\frac{t_ca_0}{u_c}\lim_{p\to\infty}p\cdot\frac{Z_p(u_c)}{a_p}=-\frac{4}{3}\frac{t_ca_0}{b u_c}(A(u_c)-a_0).
\end{equation}Moreover, if $k\ge 0$, we have \begin{equation}
\lim_{p\to\infty}p\cdot\Prob\py(\Step_1=\rp[p+k])=\frac{t_c a_0}{u_c}\lim_{p\to\infty}p\cdot\frac{z_{p,k+1}}{a_p}u_c^{k+1}=-\frac{4}{3}\frac{t_ca_0}{b u_c}a_{k+1}u_c^{k+1}.
\end{equation}It follows that \begin{displaymath}
\limsup_{p\to\infty} p\cdot\sum_{k\ge 0}\prob_p(\Step_1=\rp[p+k])=\sum_{k\ge 0}\lim_{p\to\infty} p\cdot\prob_p(\Step_1=\rp[p+k]).
\end{displaymath}Moreover, if $k<0$, \begin{displaymath}
\lim_{p\to\infty} p\cdot\prob_p(\Step_1=\rp[p+k])=-\frac{4}{3}\frac{t_c a_0}{b u_c}a_{|k|}u_c^{|k|}<\infty.
\end{displaymath} It follows that the right hand side of \eqref{eq:limsupbound} converges to zero as $m\to\infty$. This finally proves the claim. \qed

\section{Scaling limits of the interface length}\label{sec:scalingl}

In this final section, we finish proving Theorem \ref{thm:scaling}. The proof relies on the observation than if $\tmap$ is sampled from $\prob_{p,q}$ or $\prob_p$ at $\nu=\nu_c$, the length of the main interface is close to the hitting time $T_m$ when $p$ and $q$ are large. If the spins were on the faces, the discrete interface would not be a simple curve, which prevented us from deducing a similar claim in \cite{CT20} and \cite{CT21}. Moreover, what is known about the qualitative behavior of triangulations of the half-plane decorated with the critical percolation \cite{ACpercopeel}, we deduce a scaling limit of the perimeter of the hull containing the portion of the interface before its first visit to the boundary of the half-plane when $\nu\in (1,\nu_c)$. This claim extends the result of Angel and Curien at $\nu=1$.
\\
\\
\emph{Proof of Theorem \ref{thm:scaling}.} We have almost proven the claim in Proposition \ref{prop:scalingT}. The rest of the proof resembles an argument used in \cite[Theorem 6]{CT20} to show that the scaling limit of $T_m$ is independent of $m$, which was later generalized in \cite{CT21}. In the case $q=\infty$, the idea is the following: if $T_m<T_0$ for some $m\ge 1$, we can decompose the interface length under $\prob_p$ as $\eta_p=T_m+\eta_{-\min(\kk,0)}$, where we recall that $\kk$ is the position relative to $\rho^\dagger$ of the vertex where the triangle revealed at time $T_m$ hits the boundary. The same idea generalizes to $\eta_{p,q}$ under $\prob_{p,q}$ as well. The reason why this argument works here is the fact that before the time $T_m$, each peeling step increases the interface length exactly by one, and after that hitting time, the length of the unexplored portion of the interface stays small compared to $p$ when $p\to\infty$. The former of the two does not hold when the spins are put in the faces (see \cite[Section 6]{CT20}).

The above idea in the case of $\eta_p$ actually requires some special care, since it is possible that $T_m=T_0$. Formally, if $\delta>0$, we estimate \begin{align}\label{eq:decomp}
&\Prob_p(\eta_p-T_m>\delta p)\notag\\ &\le \Prob_p(\eta_p-T_m>\delta p,\ T_m=\tauxy\ne T_0)+\Prob_p(\eta_p-T_m>\delta p,\ T_m=\tauxy= T_0)+\Prob_p(T_m<\tauxy).
\end{align}The third term of \eqref{eq:decomp} can be made arbitrarily small when $p,m,x$ are large thanks to Lemma~\ref{lem:one jump diag}. The first term can be estimated by strong Markov property:
 \begin{equation}\label{eq:scalingintp}
\Prob_p(\eta_p-T_m>\delta p,\ T_m=\tauxy\ne T_0)) = \EE\py\m[{ \Prob\py[P_{T_m}](\eta_{P_{T_m}}>\delta p)\id_{\{T_m=\tauxy\ne T_0\}} }
					 \le\	\max_{p'\le m}\Prob\py[p'](\eta_{p'} >\delta p)\,.
\end{equation}Now recall that by Lemma \ref{lem:hit 0}, the \+ boundary of length $p'$ is swallowed by the peeling almost surely in a finite time. The swallowed region is a finite Boltzmann Ising-triangulation, which includes the interface component of length $\eta_{p'}$. Thus, $\eta_{p'}<\infty$ almost surely for all $0\le p'\le m$, and therefore we conclude by \eqref{eq:scalingintp} that the first term of \eqref{eq:decomp} converges to zero as $p\to\infty$. For the second term, the trick is to run the peeling under the inversion of spins on $\umap_{T_m}^*$ as seen in the lower part of Figure~\ref{fig:Decomposition}. Using the notation of Section~\ref{sec:locallimitsc}, we estimate \begin{align}\label{eq:m0}
&\Prob_p(\eta_p-T_m>\delta p,\ T_m=\tauxy= T_0)\notag\\ &\le\sum_{k=1}^{k_0}\Prob_p(\eta_p-T_m>\delta p,\ T_m=\tauxy= T_0,\ \mathcal{K}_0=k)+\Prob_p(\mathcal{K}_0>k_0,\ T_m=\tauxy=T_0)\notag\\ &\le \sum_{k=1}^{k_0}\EE\py\m[{\Prob_{P_{T_0-1},k}(\eta_{P_{T_0-1},k}>\delta p)\id_{\{T_m=\tauxy=T_0\}}}+\EE\py\m[{\Prob_{P_{T_0}-1}(\Step_1\in\rp[P_{{T_0}-1}+k]:\ k\ge k_0)\id_{\{T_m=\tauxy=T_0\}}}.
\end{align} Conditional on $\{T_m=\tauxy=T_0\}$, we have $P_{T_0-1}\ge p+\mu(T_0-1)-xf_\epsilon(T_0-1)$, where the right hand side tends to infinity as $p\to\infty$. Thus, the second term on the right hand side of \eqref{eq:m0} is arbitrarily small if $k_0$ and $p$ are chosen to be large enough. Then, the first term on the right hand side of \eqref{eq:m0} tends to zero as $p\to\infty$. To put things together in equation \eqref{eq:decomp}, since the scaling limit of $T_m$ is independent of $m$, we deduce $\Prob_p(\eta_p-T_m>\delta p)\cv[]p 0$, which shows that $\eta_p$ and $T_m$ have the same scaling limit.

\begin{figure}
\begin{center}
\includegraphics[scale=0.8]{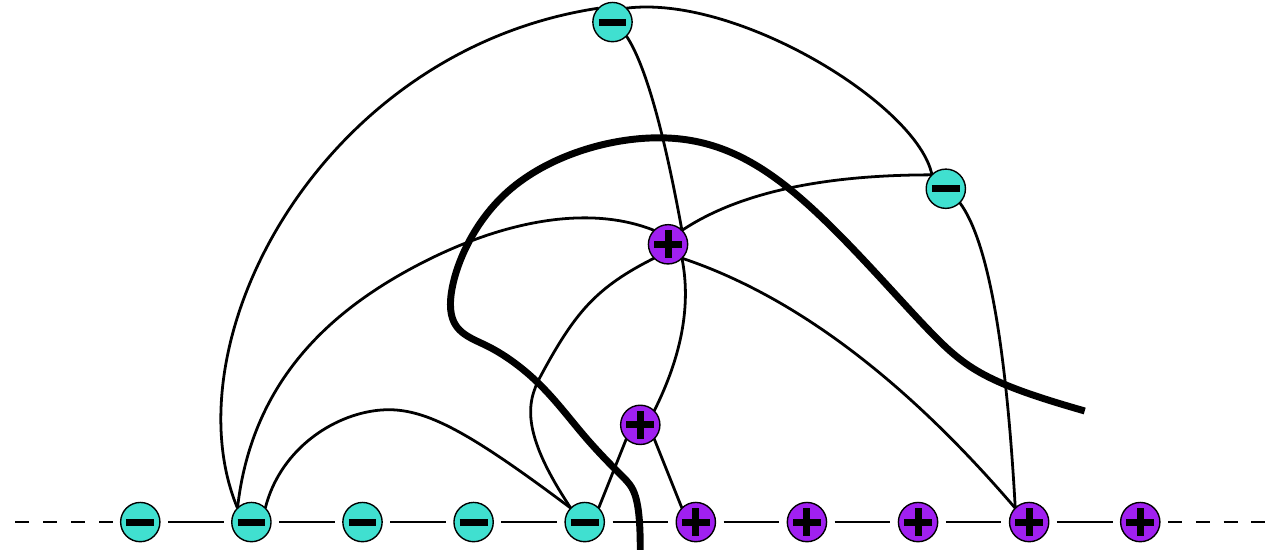}
\end{center}\vspace{-1em}
\caption{The first visit of the interface to the \+ boundary and the hull $\mathcal{H}$. In this example, $\partial\mathcal{H}=3$.}\label{fig:hull}
\end{figure}

For $\eta_{p,q}$, we have \begin{align}\label{eq:scalingintpq}
\Prob_{p,q}(\eta_{p,q}- T_m>\delta p)	\ &=	 \	\EE_{p,q}\m[{ \Prob_{P_{T_m},Q_{T_m}}(\eta_{P_{T_m},Q_{T_m}} >\delta p) }
				\notag	\\ &\le\	\EE_{p,q}\m[{\sum_{p'=0}^m\Prob_{p',Q_{T_m}}(\eta_{p',Q_{T_m}} >\delta p)+\sum_{q'=0}^m\Prob_{P_{T_m},q'}(\eta_{P_{T_m},q'}>\delta p)}.
\end{align}
Let $M>0$ be large, and fix $p'\leq m$. Then we may estimate \begin{align*}
&\Prob_{p',Q_{T_m}}(\eta_{p',Q_{T_m}} >\delta p)\\&=\Prob_{p',Q_{T_m}}(\eta_{p',Q_{T_m}} >\delta p, \ Q_{T_m}>M)+\Prob_{p',Q_{T_m}}(\eta_{p',Q_{T_m}} >\delta p| \ Q_{T_m}\leq M)\Prob_{p',Q_{T_m}}(Q_{T_m}\leq M)\\ &\leq\Prob_{p',Q_{T_m}}(\eta_{p',Q_{T_m}} >\delta p, \ Q_{T_m}>M)+\max_{q'\leq M}\Prob_{p',q'}(\eta_{p',q'}>\delta p).
\end{align*} 
We have $\Prob_{p',q}\cv[]q\Prob_{p'}$. Thus, the first term in the previous expression can be bounded from above by $\Prob_{p'}(\eta_{p'}>\delta p)+\epsilon'$ for any $\epsilon'>0$,  provided $M$ is large enough, and the first of the aforementioned terms is already shown to converge to zero. The last term converges to zero as $p\to\infty$, since $\eta_{p',q'}$ is a.s. finite under $\prob_{p,q}$. The second sum in \eqref{eq:scalingintpq} treated similarly by symmetry. It follows that $\Prob_{p,q}(\eta_{p,q}-T_m >\delta p)\cv[]{p,q} 0$, proving the claim. \qed

We finally state a result concerning the scaling of the perimeter of a hull containing the portion of the interface before its first boundary visit at $\nu\in(1,\nu_c)$, which is retrieved from \cite{ACpercopeel} from the case of the critical percolation ($\nu=1$) on the vertices of the type I random triangulation. We just note that the qualitative behavior of the peeling process is the same for all $\nu\in[1,\nu_c)$, which allows us to deduce the claim. More precisely, let $\mathcal{H}$ be the hull which is composed of the explored part $\emap_n$ under $\Prob_\infty$ at the time when the process hits the \+ boundary first time, and let $\partial\mathcal{H}$ be its outer boundary, consisting of the edges boundary edges which are not on the boundary of the half-plane. Denote by $|\partial\mathcal{H}|$ the number of vertices of this boundary, which is a priori simple (see \cite{ACpercopeel}, where $\mathcal{H}$ is called the \emph{extended hull}). Then, we use Lemma \ref{lem:orderparam} and Proposition \ref{prop:stable} together with \cite{ACpercopeel} to deduce the following:

\begin{prop}
Let $\nu\in (1,\nu_c)$. Then, \begin{displaymath}
\prob_\infty^\nu(|\partial\mathcal{H}|>n)=n^{-1/2+o(1)}.
\end{displaymath}
\end{prop}

We leave it for future work to study the scaling of the boundary of a finite cluster, for which it is instructive to begin the peeling exploration from an infinite monochromatic boundary. In that case, the perimeter process will no longer be a random walk, and its distribution will depend on the length of the active boundary. 


\bibliographystyle{abbrv}
\bibliography{database-Joonas}

\end{document}